\tikzset{->-/.style={decoration={
  markings,
  mark=at position #1 with {\arrow{>}}},postaction={decorate}}}
\tikzset{
  big arrow/.style={
    decoration={markings,mark=at position 1 with {\arrow[scale=1.6,#1]{>}}},
    postaction={decorate},
    shorten >=0.4pt}}
\newcommand{\beq}{\begin{equation}}
\newcommand{\eeq}{\end{equation}}
\newcommand{\ba}{\begin{array}}
\newcommand{\ea}{\end{array}}
\newcommand{\sech}{\mathrm{sech}}
\newcommand{\sn}{\mathrm{sn}}
\newcommand{\cn}{\mathrm{cn}}
\newcommand{\dn}{\mathrm{dn}}
\newtheorem{thm}{Theorem}[section]
\newtheorem{lemma}[thm]{Lemma}
\newcommand\UW{Department of Applied Mathematics,\\ University of Washington,\\ Seattle, WA 98195-3925, USA}
\definecolor{UWpurple}{rgb}{.35,.01,.55}
\definecolor{BoxCol}{rgb}{.35,.01,.55}
\title{The Stability Spectrum for Elliptic Solutions to the Focusing NLS Equation}
\date{\today}
\author{Bernard Deconinck and Benjamin L. Segal\\
\\
\UW}
\begin{document}
\maketitle

\begin{abstract}
We present an analysis of the stability spectrum of all stationary elliptic-type solutions to the focusing Nonlinear Schr\"{o}dinger equation (NLS). An analytical expression for the spectrum is given. From this expression, various quantitative and qualitative results about the spectrum are derived. Specifically, the solution parameter space is shown to be split into four regions of distinct qualitative behavior of the spectrum. Additional results on the stability of solutions with respect to perturbations of an integer multiple of the period are given, as well as a procedure for approximating the greatest real part of the spectrum.
\end{abstract}

\section{Introduction}

The focusing, one-dimensional, cubic  Schr\"{o}dinger equation (NLS) is given by
\beq\label{fNLS} i \Psi_t +\frac{1}{2}\Psi_{xx}+\Psi |\Psi|^2=0. \eeq
In the context of water waves, nonlinear optics, and plasma physics, $\Psi(x,t)$ represents a complex-valued function describing the envelope of a slowly modulated carrier wave in a dispersive medium \cite{S07,Z68,Kivshar03,C84}. The equation also arises in the description of Bose-Einstein condensates \cite{P61,G61}, where $\Psi$ represents a mean-field wave function.

We begin by looking at stationary solutions to (\ref{fNLS}) in the form
\beq \label{statsolns} \Psi = e^{-i \omega t} \phi(x). \eeq
Then $\phi(x)$ satisfies
\beq \label{stationaryeqn} \omega \phi=-\frac{1}{2}\phi_{xx}-\phi |\phi|^2. \eeq
The stationary solutions we study in this paper are the elliptic solutions to this equation and their limits. These solutions are periodic or quasi-periodic in $x$ and limit to the well-known soliton solution as their period goes to infinity. More details on the periodic and quasi-periodic solutions relevant to our investigation are presented in Section \ref{solutions}.

Rowlands \cite{R74} was the first to study the stability of elliptic solutions. Using regular perturbation theory, treating the Floquet parameter as a small parameter, he conjectured that the stationary periodic solutions to focusing NLS are unstable. Since he expanded in a neighborhood of the origin of the spectral plane, his calculations suggest modulational instability for elliptic solutions of focusing NLS.  More recently, Gallay and H{\u{a}}r{\u{a}}gu{\c{s}} \cite{GH} examined the stability of small-amplitude elliptic solutions, with respect to arbitrary periodic and quasiperiodic perturbations. In a second paper \cite{GH2}, using the methods of \cite{GSS1, GSS2}, they proved that periodic and quasiperiodic solutions are orbitally stable with respect to disturbances having the same period. Also, they showed that the cnoidal wave solutions (see below) are stable with respect to perturbations of twice the period. H{\u{a}}r{\u{a}}gu{\c{s}} and Kapitula \cite{HK} put some of these results in the more general framework of determining the spectrum for the linearization of an infinite-dimensional Hamiltonian system about a spatially periodic traveling wave. For the quasi-periodic solutions of sufficiently small amplitude, they establish spectral instability.
Following this, Ivey and Lafortune \cite{IL} examine the spectral stability of cnoidal wave solutions to focusing NLS with respect to periodic perturbations, using the algebro-geometric framework of hyperelliptic Riemann surfaces and Riemann theta functions \cite{belokolos}. Their calculations make use of the squared eigenfunction connection as do ours below. Additionally, they use a periodic generalization of the Evans function. This gives an analytical description of the spectrum for the cnoidal wave solutions, which we replicate in this paper using elliptic functions. Lastly, we mention a recent paper by Gustafson, Coz, and Tsai \cite{GCT}. In this paper, the authors give a rigorous version of the formal asymptotic calculation of Rowlands to establish the linear instability of a class of real-valued periodic waves against perturbations with period a large multiple of their fundamental period. They achieve this by directly constructing the branch of eigenvalues using a formal expansion and the contraction mapping theorem. In terms of elliptic function solutions, their results are limited to the cnoidal and dnoidal solutions. Using entirely different methods, we confirm their results and extend their findings to nontrivial-phase solutions, in effect making the results of Rowlands rigorous for all elliptic solutions of NLS.

In Sections \ref{setup}, \ref{lax-section} and \ref{sqeig-section}, using the same methods as \cite{BD,BDN,DN}, we exploit the integrability of (\ref{fNLS}) to associate the spectrum of the linear stability problem with the Lax spectrum using the squared eigenfunction connection \cite{AKNS}. This allows us to obtain an analytical expression for the spectrum of the operator associated with the linearization of (\ref{fNLS}) in the form of a condition on the real part of an integral over one period of some integrand. However, unlike in \cite{BD,BDN,DN} the linear operator associated with the focusing NLS equation is not self adjoint. The self adjointness of the linear operator was directly exploited in these papers and that is not available here. Instead, we proceed by integrating the integrand explicitly. This is done in Section \ref{Weierstrass}. Next, using the expressions obtained, we prove results concerning the location of the stability spectrum on the imaginary axis in Section \ref{imagaxis}. In Section \ref{regions}, we present analytical results about the spectrum, and we make use of the integral condition to split parameter space into different regions where the spectrum shows qualitatively different behavior. In Section \ref{subharmonic} we examine the spectral stability of solutions against perturbations of an integer multiple of their fundamental period confirming and extending results of \cite{GCT,GH,GH2}. Finally, in Section \ref{approximations} we discuss approximations to the spectral curves in $\mathbb{C}$ found by expanding around known spectral elements. We use those approximations to give estimates for the maximal real part of the spectrum.

\section{Elliptic solutions to focusing NLS}\label{solutions}
The results of this section are presented in more detail in \cite{C00foc}. We limit our analysis to just what is necessary for the following sections.
We split $\phi$ into its amplitude and phase
\beq \label{phi} \phi(x)=R(x)e^{i \theta(x)}, \eeq
where $R(x)$ and $\theta(x)$ are real-valued, bounded functions of $x\in\mathbb{R}$.
Substituting (\ref{phi}) into (\ref{stationaryeqn}), we find the standard Jacobi elliptic function solutions given by
\begin{align}
R^2(x) & = b - k^2 \sn^2(x,k), \\
\omega & = \frac{1}{2}(1+k^2)-\frac{3}{2}b,\\
\label{thetaeqn} \theta(x)&=\int_0^x \frac{c}{R^2(y)}dy, \\
c^2 & =b(1-b)(b-k^2).
\end{align}
Here $\sn(x,k)$ is the Jacobi elliptic $\sn$ function with elliptic modulus $k$ \cite{BF,L13,WW,O}. Besides $k$, the only other parameter present is $b$, which is an offset parameter for the solutions. We are not specifying the full class of parameters allowed by the four Lie point symmetries of (\ref{fNLS}) \cite{O00}. Specifically, we are neglecting to include a scaling and a horizontal shift in $x$. The use of a Galilean shift allows for the application of our results to traveling wave solutions as well stationary solutions. The different symmetries are not included here as they do not produce qualitatively different results to what is covered here, but the methods presented apply equally well. 

In order for our solutions to be valid, we require that both $R^2(x)$ and $c^2$ are real, positive, and bounded. These conditions result in constraints on our parameters:
\begin{align}
0\le  k< 1, \\
k^2\le b\le1.
\end{align}

Of special importance is the boundary of this region, as many of the well-studied solutions to (\ref{fNLS}) lie on the boundary. Specifically, when $b=k^2$ or when $b=1$ we have that $c=0$ and $\phi(x) = \cn(x,k)$ or $\phi(x)=\dn(x,k)$ respectively. Here $\cn(x,k)$ and $\dn(x,k)$ are the Jacobi elliptic $\cn$ and $\dn$ functions respectively.
These solutions are called trivial-phase solutions, as there is no phase in $\phi(x)$, \textit{i.e.,} $c=0$. These solutions are periodic in $x$, of period $4K(k)$ and $2K(k)$ respectively. Here
\beq K(k) = \int_0^{\pi/2} \frac{1}{\sqrt{1-k^2 \sin^2 y}} dy, \eeq
is the complete elliptic integral of the first kind. As $k\rightarrow 1$ these solutions approach the well-studied stationary soliton solution of (\ref{fNLS}): $\phi(x) = \sech(x)$.

The other part of the boundary of parameter space occurs when $k=0$. Here the amplitude of $\phi(x)$ is constant and thus the analysis of the solutions simplifies greatly. These solutions are called Stokes wave solutions. They have the form $\phi(x)=\sqrt{b}\exp\left(i x \sqrt{1-b}\right)$. The stability of all these boundary cases has been examined to some extent in the literature. See \cite{GH,IL,K03}, among others. Figure \ref{SolSpace} depicts a plot of parameter space with labels for the boundary cases.

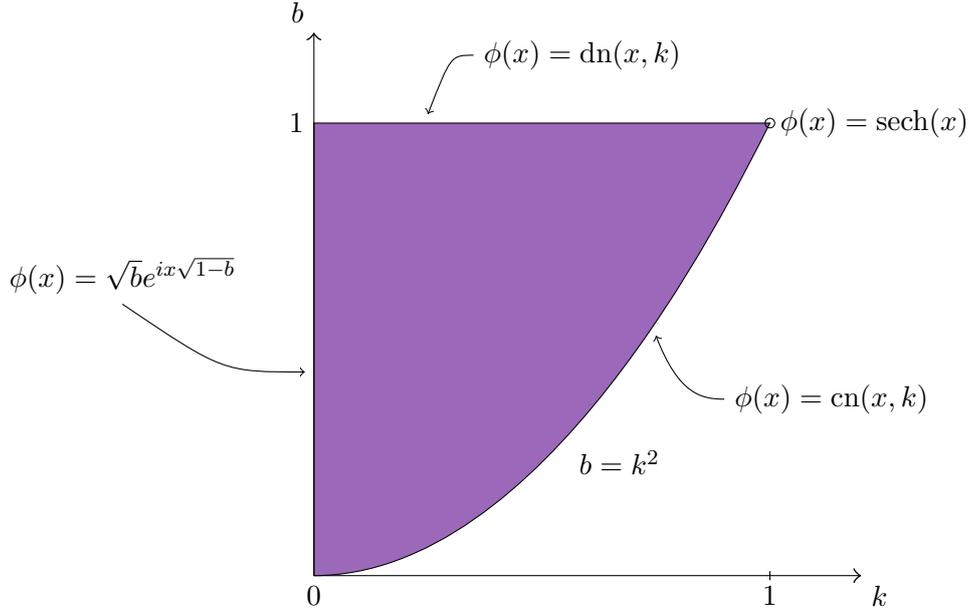
\begin{figure}
\centering
\begin{tikzpicture}[scale=6]
\filldraw[fill=BoxCol!60!white]
(0,0)  parabola (1,1)  -- (0,1) node[left]{$1$} -- (0,0);
\draw (1,1)  node[circle,scale=0.3]{} circle (0.3pt);
\draw (1,1) node[right] {$\phi(x)=\sech(x)$};
\draw[->] (0.35,1.15) node[right] {$\phi(x)=\dn(x,k)$} .. controls (0.3,1.15) .. (0.25,1.02) ;
\draw[->] (0.9,0.39) node[right] {$\phi(x)=\cn(x,k)$} .. controls (0.85,0.39) and (0.8,0.40)  .. (0.75,0.53) ;
\draw[big arrow] (0,0) node[below]  {$0$}  -- (1.2,0) node[below right] {$k$};
\draw[big arrow] (0,0) -- (0,1.2) node[above left] {$b$};
\draw  (1,0) node[below] {$1$};
\draw[->] (-0.42,0.6) node[above] {$\phi(x)=\sqrt{b}e^{ix\sqrt{1-b}}$} .. controls (-0.2,0.45) .. (-0.02,0.45) ;
\draw (.67,.3) node[below] {$b=k^2$};
\draw (1,-0.01) -- (1,0.01);
\end{tikzpicture}
\caption{Visualization of parameter space in terms of $b$ and $k$ with special solutions on the boundaries labeled.}
\label{SolSpace}
\end{figure}

We reformulate our elliptic solutions to (\ref{fNLS}) using Weierstrass elliptic functions \cite{O} rather than Jacobi elliptic functions. This will simplify working with the integral condition (\ref{intcond}) in Section \ref{lax-section}, as formulas for integrating Weierstrass elliptic functions are well documented \cite{BF,GR}. It is important to note that nothing is lost by switching to Weierstrass elliptic functions, as we can map any Weierstrass elliptic function to a Jacobi elliptic function, and vice versa \cite{O}. Let
\beq \wp(z+\omega_3,g_2,g_3)-e_3 =\left(\frac{K(k)k}{\omega_1}\right)^2 \sn^2\left(\frac{K(k)z}{\omega_1},k\right), \eeq
with $g_2$ and $g_3$ the lattice invariants of the Weierstrass $\wp$ function, $e_1$, $e_2$, and $e_3$ the zeros of the polynomial $4t^3-g_2 t-g_3 ,$ and $\omega_1$ and $\omega_3$ the half-periods of the Weierstrass lattice given by
\beq \omega_1 = \int _{e_1} ^\infty \frac{\textrm{d}z}{\sqrt{4z^3-g_2 z-g_3}}, \eeq
\beq \omega_3 = i \int _{-e_3} ^\infty \frac{\textrm{d}z}{\sqrt{4z^3-g_2 z+g_3}}.\eeq
We look for stationary solutions to (\ref{fNLS}) of the form (\ref{statsolns}). We split $\phi$ into its amplitude and phase, letting
\beq \phi_w(x)=R_w(x)e^{i \theta_w(x)}, \eeq
where $R_w$ and $\theta_w$ are expressed in terms of Weierstrass elliptic functions. Substituting this ansatz into (\ref{stationaryeqn}) gives solutions in Weierstrass form \cite{CM}:
\beq \theta_w(x)=\pm \frac{i}{2} \left(\log\left(\frac{\sigma_w(x+x_w+a_w,g_2,g_3)}{\sigma_w(x+x_w-a_w,g_2,g_3)}\right)+2 (x+x_w) \zeta_w(a)\right), \eeq
\beq \label{R_weqn} R_w^2(x)=e_0-\wp(x+x_w,g_2,g_3), \eeq
\beq \label{g2eqn} g_2=12 e_0^2+K_2, \eeq
\beq \label{g3eqn} g_3= 4 K_1-8 e_0^3-e_0 K_2, \eeq
\beq \label{e0def} e_0 = -\frac{2 \omega}{3} = \wp(a_w,g_2,g_3). \eeq
Here $\sigma_w$ and $\zeta_w$ are the Weierstrass $\sigma$ and Weierstrass $\zeta$ functions respectively \cite{O}, and $\omega$, $K_1$, $K_2$, and $x_w$ are free parameters. The constant $a_w$ from (\ref{e0def}) is given explicitly as
\beq a_w=\wp^{-1}(e_0,g_2,g_3). \eeq
We can recover the Jacobi elliptic solutions from these solutions by fixing the free parameters
\beq \label{omegaJacobi} \omega = \frac{1}{2} \left(1+k^2-3b\right), \eeq
\beq K_1^2 = c^2 = b(1-b)(b-k^2), \eeq
\beq K_2 = -4 \left(k^2-2bk^2 +3 b^2 -2b\right), \eeq
\beq \label{xwJacobi} x_w = i K'(k), \eeq
where $K'(k)$ is the complement to $K(k)$ given by $K'(k) =K\left(1-k^2\right)$.
Under this mapping we have
\beq g_2 = \frac{4}{3}\left(1-k^2+k^4\right), \eeq
\beq g_3 = \frac{4}{27}\left(2-3k^2-3k^4+2k^6\right), \eeq
\beq e_1 = \frac{1}{3}\left(2-k^2\right),\;\; e_2 = \frac{1}{3}\left(-1+2k^2\right),\;\; e_3 = \frac{1}{3}(-1-k^2), \eeq
\beq \label{omega-nums} \omega_1 = K(k),\;\; \omega_3 = i K'(k).\eeq
The homogeneity property of the Weierstrass $\wp$ function \cite{O},
\beq \wp(x,g_2,g_3) = g_2^{\frac{1}{2}} \wp\left(g_2^{\frac{1}{4}}x,1,g_3 g_2^{-\frac{3}{2}}\right). \eeq
allows us to rewrite (\ref{R_weqn}) as
\beq \label{Rweqnnew} R_w^2(x)=-\left(g_2^{\frac{1}{2}}\wp\left(g_2^{\frac{1}{4}}(x+x_w),1,g_3 g_2^{-\frac{3}{2}}\right)-e_0\right), \eeq
which has only one varying lattice invariant $g_3 g_2^{-\frac{3}{2}}$. This comes at the cost of rescaling $x$ and the magnitude of the Weierstrass $\wp$ function. The formulation (\ref{Rweqnnew}) allows for a display of parameter space as in Figure \ref{SolSpace}, but using the Weierstrass parameters, see Figure \ref{SolSpaceW}. In this figure, we see where the $\cn$, $\dn$, and Stokes wave solutions map to in the Weierstrass domain.

\begin{figure}
\centering
\includegraphics[width=12cm]{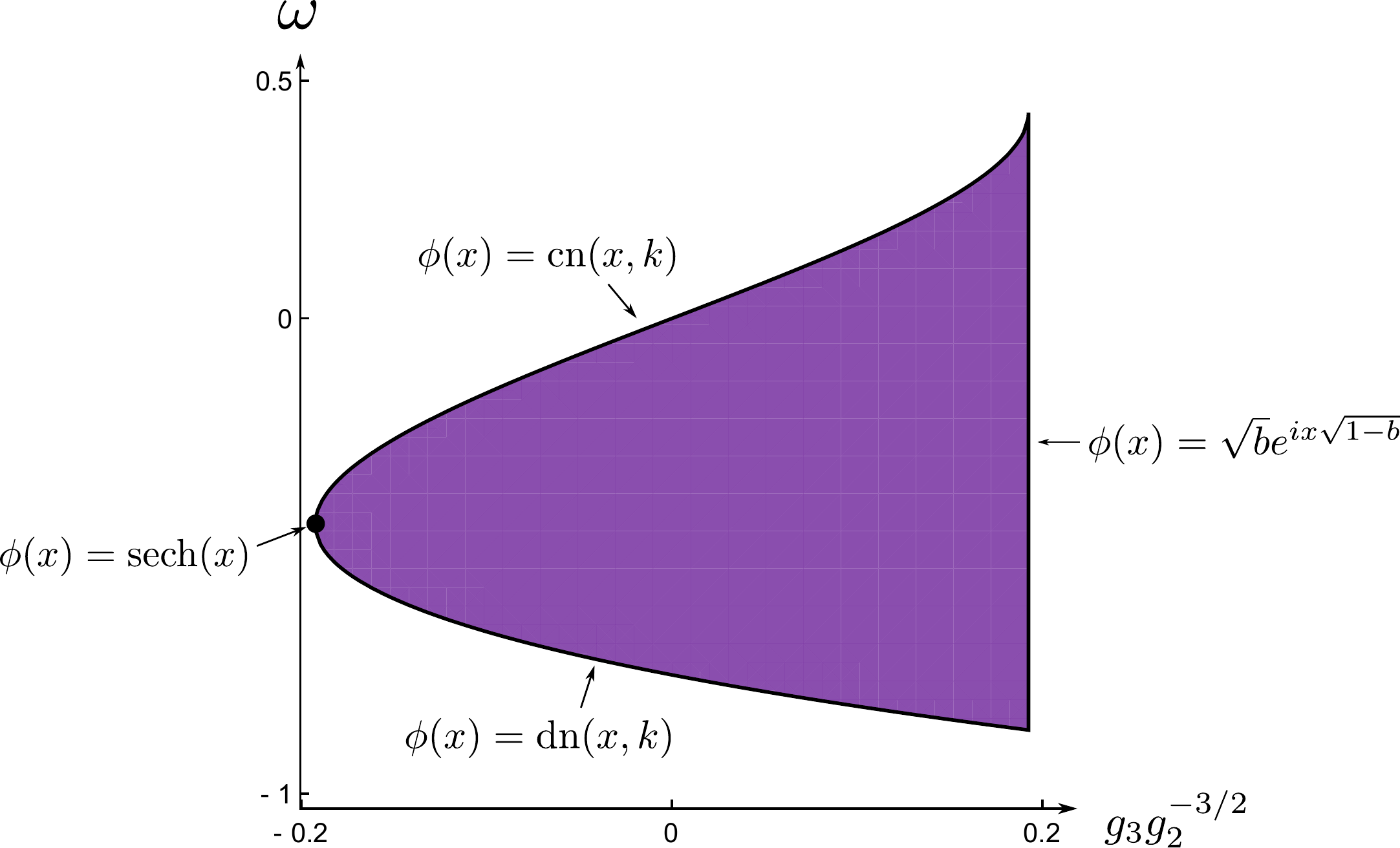}
\caption{The parameter space for elliptic solutions in Weierstrass form (\ref{Rweqnnew}). The cn, dn and Stokes wave solutions are found on the boundaries of this space, with the soliton solution occurring at the limiting point where the dn and cn curves meet.}
\label{SolSpaceW}
\end{figure}

\section{The linear stability problem}\label{setup}
To examine the linear stability of our solutions we consider
\beq \label{linpert} \Psi(x,t)= e^{-i \omega t}e^{i \theta(x)}\left(R(x)+\epsilon u(x,t)+\epsilon i v(x,t)+\mathcal{O}(\epsilon^2)\right), \eeq
where $\epsilon$ is a small parameter and $u(x,t)$ and $v(x,t)$ are the real and imaginary parts of our perturbation, which depends on both $x$ and $t$. Substituting (\ref{linpert}) into (\ref{fNLS}), isolating $O(\epsilon)$ terms, and splitting into real and imaginary parts, we obtain a system of equations
\beq \label{linprob}
\frac{\partial}{\partial t}\left(\begin{array}{c} u \\ v \end{array} \right)=\mathcal{L} \left(\begin{array}{c} u \\ v \end{array} \right) = \left(\begin{array}{cc} -S & \mathcal{L}_- \\ -\mathcal{L}_+ & -S \end{array} \right)\left(\begin{array}{c} u \\ v \end{array} \right) = J \left(\begin{array}{cc} \mathcal{L}_+ & S \\ -S & \mathcal{L}_- \end{array} \right)\left(\begin{array}{c} u \\ v \end{array} \right) ,
\eeq
with
\beq J = \left( \ba{cc} 0 & 1 \\ -1 & 0 \ea \right). \eeq
The linear operators $\mathcal{L}_+$, $\mathcal{L}_-$, and $S$ are given by
\begin{align}
\mathcal{L}_- & = -\frac{1}{2}\partial_x^2-R^2(x)-\omega+\frac{c^2}{2R^4(x)}, \\ \mathcal{L}_+ & = -\frac{1}{2}\partial_x^2-3R^2(x)-\omega+\frac{c^2}{2R^4(x)}, \\ S&=\frac{c}{R^2(x)}\partial_x-\frac{c R'(x)}{R^3(x)} = \frac{c}{R(x)}\partial_x \frac{1}{R(x)}.
\end{align}
An elliptic solution $\phi(x)=R(x)e^{i \theta(x)}$ is by definition linearly stable if for all $\epsilon >0$ there exists a $\delta>0$ such that if $||u(x,0)+i v(x,0)||<\delta$ then $||u(x,t)+i v(x,t)||<\epsilon$ for all $t>0$. This definition depends on the choice of norm $||\cdot||$ which is specified in the definition of the spectrum in (\ref{spectral-norm}) below.

Since (\ref{linprob}) is autonomous in $t$, we separate variables to look at solutions of the form
\beq \label{uvseparation}\left( \ba{c} u(x,t) \\ v(x,t) \ea \right) = e^{\lambda t} \left( \ba{c} U(x) \\ V(x) \ea \right), \eeq
resulting in the spectral problem
\beq \label{spectralprob}
\lambda \left(\begin{array}{c} U \\ V \end{array} \right)= \mathcal{L} \left(\begin{array}{c} U \\ V \end{array} \right)=\left(\begin{array}{cc} -S & \mathcal{L}_- \\ -\mathcal{L}_+ & -S \end{array} \right)\left(\begin{array}{c} U \\ V \end{array} \right)= J \left(\begin{array}{cc} \mathcal{L}_+ & S \\ -S & \mathcal{L}_- \end{array} \right)\left(\begin{array}{c} U \\ V \end{array} \right) .
\eeq
Here
\beq \label{spectral-norm} \sigma_{\mathcal{L}}=\{\lambda\in\mathbb{C}: \max_{x\in \mathbb{R}} \left( |U(x)|,|V(x)| \right) < \infty\}, \eeq
or
\beq U,V\in C_b^0(\mathbb{R}). \eeq
In order to have spectral stability, we need to demonstrate that the spectrum $\sigma_{\mathcal{L}}$ does not enter into the right half of the complex $\lambda$ plane.
Since (\ref{fNLS}) is Hamiltonian \cite{AS}, the spectrum of its linearization is symmetric with respect to both the real and imaginary axis \cite{W}. In other words, proving spectral stability for elliptic solutions to (\ref{fNLS}) amounts to proving that the stability spectrum lies strictly on the imaginary axis. In our case, we show that none of the elliptic solutions are spectrally stable, as we demonstrate spectral elements in the right-half plane near the origin for any choice of the parameters $b$ and $k$.

\section{The Lax pair}\label{lax-section} We wish to obtain an analytical representation for the spectrum $\sigma_{\mathcal{L}}$. As mentioned in the introduction, this analytical representation comes from the squared eigenfunction connection between the linear stability problem (\ref{linprob}) and its Lax pair. We begin by formulating (\ref{fNLS}) in a traveling frame, by defining
\beq \label{Psieqn} \Psi(x,t) = e^{-i \omega t} \psi(x,t), \eeq
so that $\psi$ satisfies
\beq \label{statsolneqn} i \psi_t=-\omega \psi -\frac{1}{2}\psi_{xx}-\psi |\psi|^2. \eeq
This equation is equivalent to the compatibility condition $\chi_{xt}=\chi_{tx}$ of the following Lax pair \cite{ZS}:
\beq \label{chix} \chi_x= \left( \ba{cc} -i \zeta & \psi \\ -\psi^* & i \zeta \ea \right)\chi, \eeq
\beq \label{chit} \chi_t = \left( \ba{cc}
-i \zeta^2+\frac{i}{2} |\psi|^2+\frac{i}{2} \omega & \zeta \psi+\frac{i}{2}\psi_x \\
-\zeta \psi^*+\frac{i}{2} \psi^*_x & i \zeta^2-\frac{1}{2} |\psi|^2-\frac{i}{2}\omega
\ea \right) \chi, \eeq
where $^*$ represents the complex conjugate \cite{AKNS,BDN}. Regarding (\ref{chix}) as a spectral problem with $\zeta$ as the spectral parameter:
\beq \left( \ba{cc}
i \partial_x & -i \psi \\
-i \psi^* & -i \partial_x \ea \right) \chi = \zeta \chi,\eeq
we see that it is not self adjoint \cite{K12}. This means that the spectral parameter $\zeta$ is not necessarily confined to the real axis as it was for defocusing NLS \cite{BDN} which makes our analysis more difficult. Since the elliptic solutions are given by $\psi(x,t)=\phi(x)$, we restrict the Lax pair to elliptic solutions by writing
\beq \label{chixphi} \chi_x= \left( \ba{cc} -i \zeta & \phi \\ -\phi^* & i \zeta \ea \right)\chi, \eeq
\beq \label{chitphi} \chi_t = \left( \ba{cc}
-i \zeta^2+\frac{i}{2} |\phi|^2+\frac{i}{2} \omega & \zeta \phi+\frac{1}{2}\phi_x \\
-\zeta \phi^*+\frac{i}{2} \phi^*_x & i \zeta^2-\frac{1}{2} |\phi|^2-\frac{i}{2}\omega
\ea \right)\chi. \eeq
Henceforth we refer to the spectrum of (\ref{chixphi}) as $\sigma_L$ or informally as the Lax spectrum. Specifically, $\sigma_L$ consists of all $\zeta$ for which (\ref{chixphi}) has a bounded (in $x$) eigenfunction solution.
To determine $\sigma_L$ we start by rewriting (\ref{chitphi}) in the short-hand form
\beq \label{chitphisimp}
\chi_t = \left( \ba{cc}
A & B \\ C & -A
\ea \right) \chi, \eeq
where
\begin{align}
A &=-i \zeta^2+\frac{i}{2} |\phi|^2+\frac{i}{2} \omega, \label{Aeqn}\\
B &= \zeta \phi+\frac{1}{2}\phi_x,\label{Beqn}\\
C &= -\zeta \phi^*+\frac{i}{2} \phi^*_x.
\end{align}
Since $A$, $B$, and $C$ are independent of $t$, we separate variables. Let
\beq \label{Omega} \chi(x,t)=e^{\Omega t} \varphi(x), \eeq
with $\Omega$ being independent of $t$ but possibly depending on $x$.
Substituting (\ref{Omega}) into (\ref{chitphisimp}) and canceling the exponential, we find
\beq \label{Omegamat} \left( \ba{cc} A-\Omega & B \\ C & -A-\Omega \ea \right) \varphi =0. \eeq
In order to have nontrivial solutions we require the determinant of (\ref{Omegamat}) to be zero. Using the definitions of $A,B$ and $C$, we get
\beq \label{Omegacond} \Omega^2 =A^2+B C = -\zeta^4+\omega \zeta^2+c \zeta+\frac{1}{16}\left(-4 \omega b -3b^2-{k'}^4\right), \eeq
where $k' = \sqrt{1-k^2}$. We notice that $\Omega$ is not only independent of $t$ but also of $x$. Thus $\Omega$ is strictly a function of $\zeta$ and the solution parameters.

To satisfy (\ref{Omegamat}), we let
\beq \label{varphidef} \varphi(x) = \gamma(x)\left( \ba{c} -B(x) \\ A(x)-\Omega \ea \right), \eeq
where $\gamma(x)$ is a scalar function. By construction of $\varphi(x)$, $\chi(x,t)$ satisfies (\ref{chitphi}). Since (\ref{chixphi}) and (\ref{chitphi}) commute, it is possible to choose $\gamma(x)$ such that $\chi$ also satisfies (\ref{chixphi}). Indeed, $\gamma(x)$ satisfies a first-order linear equation, whose solution is given by
\beq \label{gammacon} \gamma(x)=\gamma_0 \exp \left(-\int \frac{(A-\Omega) \phi+B_x+i \zeta B}{B}\textrm{d}x \right). \eeq
For almost every $\zeta\in \mathbb{C}$, we have explicitly determined the two linearly independent solutions of (\ref{chixphi}), \textit{i.e.}, those corresponding to the positive and negative signs of $\Omega$ in (\ref{Omegacond}). Assuming $\Omega \ne 0$ these two solutions are by construction linearly independent. In the case where $\zeta$ corresponds to $\Omega =0$ the second solution to (\ref{chixphi}) can be determined via the reduction-of-order method.

Since (\ref{chixphi}) and (\ref{chitphi}) share their eigenfunctions, $\sigma_L$ is the set of all $\zeta \in \mathbb{C}$ such that (\ref{varphidef}) is bounded for all $x\in\mathbb{R}$. Indeed, the vector part of $\varphi$ is bounded for all $x$, so we only need that the scalar function $\gamma(x)$ is bounded as $x\rightarrow\pm\infty$. A necessary and sufficient condition for this is
\beq \label{intcond} \left\langle \textrm{Re} \left(\frac{(A-\Omega) \phi+B_x+i \zeta B}{B}\right)\right\rangle = 0,\eeq
where $\langle \cdot \rangle$ is the average over one period $2 K(k)$ of the integrand, and $\textrm{Re}$ denotes the real part. At this point, the integral condition (\ref{intcond}) completely determines the Lax spectrum $\sigma_L$.

\section{The squared eigenfunction connection}\label{sqeig-section}
A connection between the eigenfunctions of the Lax pair (\ref{chixphi}) and (\ref{chitphi}) and the eigenfunctions of the linear stability problem (\ref{linprob}) using a squared eigenfunctions is well known \cite{AKNS}.
We prove the following theorem.
\vspace{2mm}
\begin{thm}
The vector
\beq \label{connection} \left( \ba{c} u \\ v \ea \right)=\left( \ba{c} e^{-i \theta(x)} \chi_1^2-e^{i \theta(x)}\chi_2^2 \\ -i e^{-i \theta(x)} \chi_1^2-i e^{i \theta(x)} \chi_2^2 \ea \right) \eeq
 satisfies the linear stability problem (\ref{linprob}). Here $\chi = \left(  \chi_1, \chi_2 \right)^T$ is any solution of the Lax pair (\ref{chix}-\ref{chit}) corresponding by direct calculation to the elliptic solution $\phi(x)=R(x)e^{i \theta(x)}.$
\end{thm}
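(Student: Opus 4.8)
The plan is to verify the theorem by direct substitution, organized around the three quadratic ``squared-eigenfunction'' quantities
\beq \label{PQM} P = e^{-i\theta(x)}\chi_1^2,\qquad Q = e^{i\theta(x)}\chi_2^2,\qquad M = \chi_1\chi_2, \eeq
in terms of which the vector $(\ref{connection})$ reads $u=P-Q$ and $v=-i(P+Q)$, equivalently $P=\tfrac12(u+iv)$, $Q=\tfrac12(iv-u)$. First I would use the $x$-part of the Lax pair $(\ref{chixphi})$, together with the identities $\phi\,e^{-i\theta}=\phi^*e^{i\theta}=R$, to obtain a closed first-order linear system for $P,Q,M$ in $x$; a short calculation gives
\beq \label{xsyst} v_x=-(\theta'+2\zeta)\,u,\qquad u_x=(\theta'+2\zeta)\,v+4RM,\qquad M_x=-R\,u. \eeq
Differentiating $(\ref{xsyst})$ once more and re-substituting expresses $u_{xx}$ and $v_{xx}$ as $\mathbb{C}$-linear combinations of $u,v,M$; in particular $(\theta'+2\zeta)^2$ appears, whose $\zeta$-free part $(\theta')^2=c^2/R^4$ will supply the $c^2/(2R^4)$ terms of $\mathcal{L}_\pm$.

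Next I would use the $t$-part $(\ref{chitphi})$ — whose coefficient matrix is independent of $t$ because $\psi=\phi(x)$ — together with $(\ref{Aeqn})$ and the identity $e^{-i\theta}\phi_x=R_x+i\theta'R$, to compute $\partial_t P=2A\,P+(\cdots)\,M$ and $\partial_t Q=-2A\,Q+(\cdots)\,M$, the bracketed coefficients being explicit in $\zeta,R,R_x,\theta'$; adding and subtracting then gives $\partial_t u$ and $\partial_t v$ as $\mathbb{C}$-linear combinations of $u,v,M$. In parallel I would expand the right-hand side $\mathcal{L}(u,v)^T=(-Su+\mathcal{L}_-v,\ -\mathcal{L}_+u-Sv)^T$ using $(\ref{xsyst})$ and its derivative, noting that $S=\theta'\partial_x-\theta'R_x/R$ since $c/R^2=\theta'$. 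Both $\partial_t(u,v)^T$ and $\mathcal{L}(u,v)^T$ are thereby presented as linear combinations of the same three functions $P$, $Q$, $M$.

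The theorem then reduces to matching the coefficients of $P$, $Q$ and $M$ in $\partial_t u$ with those in $-Su+\mathcal{L}_-v$, and in $\partial_t v$ with those in $-\mathcal{L}_+u-Sv$. The resulting short list of algebraic identities in $R,R_x,R_{xx},\theta',\omega,c,\zeta$ all follow from the stationary equation $(\ref{stationaryeqn})$ for $\phi=Re^{i\theta}$: multiplying $(\ref{stationaryeqn})$ by $e^{-i\theta}$, the imaginary part is $(R^2\theta')'=0$, i.e.\ $\theta'=c/R^2$ (the phase relation $(\ref{thetaeqn})$), which also kills the residual $u$- and $v$-coefficients through $\theta''R+2\theta'R_x=0$, while the real part, $-\tfrac12R_{xx}-R^3-\omega R+c^2/(2R^3)=0$, accounts for the amplitude terms; the coefficient $3R^2$ in $\mathcal{L}_+$ against $R^2$ in $\mathcal{L}_-$ is matched exactly by the $-4R^2u$ produced in $u_{xx}$ via $M_x=-Ru$.

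I expect the only real difficulty to be bookkeeping rather than conceptual: $\mathcal{L}$ carries no spectral parameter, whereas $\partial_t u$, $\partial_t v$ and every $x$-derivative in $(\ref{xsyst})$ carry $\zeta$ and $\zeta^2$, so one must track and cancel all $\zeta$-dependence. That cancellation is forced — it is a manifestation of the Lax-pair compatibility $\chi_{xt}=\chi_{tx}$, equivalently of the relation $\Omega^2=A^2+BC$ in $(\ref{Omegacond})$ — so as an \emph{a priori} check I would note the structural reason the theorem must hold: the linearization of $(\ref{statsolneqn})$ about $\phi$ is the compatibility condition of the correspondingly linearized Lax pair, and $(\ref{connection})$ is exactly the standard quadratic (``squared eigenfunction'') combination that turns a Lax eigenfunction into a solution of that linearization. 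For the explicit elliptic-function analysis that follows, however, the hands-on verification above is the version to record.
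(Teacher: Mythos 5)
Your proposal is correct and is essentially the paper's own argument: the paper likewise proves the theorem by direct calculation, computing $(u_t,v_t)$ from the $t$-part of the Lax pair and eliminating $x$-derivatives of $u,v$ via the $x$-part before matching both sides of (\ref{linprob}). Your $P,Q,M$ bookkeeping and the closed first-order system are just an organized way of carrying out that same substitution, and the coefficient identities do close as you claim (using the phase relation $(R^2\theta')'=0$ from the stationary equation).
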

\begin{proof}
The proof is done by direct calculation. For the left-hand side of (\ref{linprob}), evaluate $(u_t,v_t)$ using the product rule and (\ref{chit}). Eliminate $x$-derivatives of $u$ and $v$ (up to order 2) using (\ref{chix}). Upon substitution and using (\ref{chixphi}) and (\ref{chitphi}), the left-hand side and right-hand side of (\ref{linprob}) are equal, finishing the proof.
\end{proof}

To establish the connection between the $\sigma_{\mathcal{L}}$ spectrum and the $\sigma_L$ spectrum we examine the right- and left-hand sides of (\ref{uvseparation}).
Substituting in (\ref{connection}) and (\ref{Omega}) to the left-hand side of (\ref{uvseparation}) we find
\beq \label{lambdaOmegacon} e^{2 \Omega t} \left( \ba{c} e^{-i \theta(x)} \varphi_1^2-e^{i \theta(x)}\varphi_2^2 \\ -i e^{-i \theta(x)} \varphi_1^2-i e^{i\theta(x)} \varphi_2^2 \ea \right) = e^{\lambda t} \left( \ba{c} U \\ V \ea \right), \eeq
and we conclude that
\beq \label{lambdacond} \lambda = 2 \Omega(\zeta), \eeq
with eigenfunctions  given by
\beq \label{eigenfunctions} \left( \ba{c} U \\ V \ea \right) = \left( \ba{c} e^{-i \theta(x)} \varphi_1^2-e^{i \theta(x)}\varphi_2^2 \\ -i e^{-i \theta(x)} \varphi_1^2-i e^{i\theta(x)} \varphi_2^2 \ea \right).
\eeq
This gives the connection between the $\sigma_L$ spectrum and the $\sigma_{\mathcal{L}}$ spectrum. It is also necessary to check that indeed all solutions of (\ref{spectralprob}) are obtained through (\ref{lambdaOmegacon}). This is not shown explicitly here, but is done analogous to the work in \cite{BDN}.

Although in principle the above construction determines $\sigma_{\mathcal{L}}$, it remains to be seen how practical this determination is.
In the following section we discuss a technique for explicitly integrating (\ref{intcond}) using Weierstrass elliptic functions, leading to a more explicit characterization of $\sigma_{\mathcal{L}}$.

\section{The Lax spectrum in terms of elliptic functions} \label{Weierstrass}
In terms of Weierstrass elliptic functions,
\beq \label{OmegacondW} \Omega^2 = -\zeta^4 +\omega \zeta^2 +K_1 \zeta+\frac{1}{4}\left(-\omega^2-\frac{K_2}{4}\right), \eeq
while (\ref{intcond}) becomes
\beq \label{intcondW} \textrm{Re} \int_0^{2\omega_1} \frac{(A-\Omega) \phi_w+B_x+i \zeta B}{B} \textrm{d}x=0,\eeq
with $A$ and $B$ given in (\ref{Aeqn}) and (\ref{Beqn}).
Substituting for $\phi_w$ we find that (\ref{intcondW}) is of the form
\beq \label{intcondCs} \textrm{Re} \int_0^{2\omega_1} \frac{C_1+C_2 \wp(x) +C_3 \wp'(x)}{C_4+C_5 \wp(x)} \textrm{d}x=0, \eeq
here $\wp(x)=\wp(x+x_w,g_2,g_3)$ with the dependence on $x_w,$ $g_2$, and $g_3$ suppressed. The $C_j$'s depend on $\zeta$ but are independent of $x$. They are given by
\begin{align}\begin{split} C_1 &= -\frac{2\omega \zeta}{3}-\frac{K_1}{2}+\zeta^3-\frac{\omega \zeta}{6}-i \zeta \Omega(\zeta), \\
C_2 &= -\frac{\zeta}{2}, \hspace{3cm} C_3 = \frac{i}{4},\\
C_4 &= -\Omega(\zeta)-i \zeta^2 +i \frac{\omega}{6}, \hspace{6mm} C_5 = -\frac{i}{2}.  \end{split}\end{align}
We can evaluate the integral in (\ref{intcondCs}) explicitly \cite{GR}. We find
\beq\textrm{Re} \left[ \frac{2 \omega_1 C_2}{C_5}+\frac{4(C_1 C_5-C_2 C_4)}{\wp'(\alpha) C_5^2}\left(\zeta_w(\alpha)\omega_1-\zeta_w(\omega_1)\alpha\right)\right] =0,\eeq
with
\beq \label{intcondWfull} \alpha=\alpha(\zeta)=\wp^{-1}\left(-\frac{C_4(\zeta)}{C_5(\zeta)},g_2,g_3\right). \eeq
Here $\wp^{-1}$ is a multivalued function, but for the sake of our analysis $\alpha$ is chosen as any value such that $\wp(\alpha) = -C_4(\zeta)/C_5(\zeta)$.
Substituting for the $C_j$'s, (\ref{intcondCs}) becomes
\beq\label{intcond2} \textrm{Re} \left[-2 i \zeta \omega_1+\frac{4i\left(-K_1 +4\zeta^3-2\zeta\omega-4i\zeta\Omega(\zeta)\right)}{\wp'(\alpha)}\left(\zeta_w(\alpha)\omega_1-\zeta_w(\omega_1)\alpha\right)\right] =0.\eeq
We simplify this further by recognizing that
\beq \label{wpeqn} {\wp'}^2(\alpha)=4\wp^3(\alpha)-g_2 \wp(\alpha)-g_3=4{\left(-\frac{C_4(\zeta)}{C_5(\zeta)}\right)}^3-g_2\left(-\frac{C_4(\zeta)}{C_5(\zeta)}\right)-g_3.\eeq
Substituting for $C_4(\zeta)$ and $C_5(\zeta)$, changing $g_2$ and $g_3$ to $K_1$ and $K_2$ via (\ref{g2eqn}) and (\ref{g3eqn}) respectively, and substituting in (\ref{OmegacondW}) for higher powers of $\Omega(\zeta)$ gives
\beq \label{wpprime} {\wp'}^2(\alpha)=-4\left(-K_1 +4\zeta^3-2\zeta\omega-4i\zeta\Omega(\zeta)\right)^2.\eeq
Thus (\ref{intcond2}) simplifies to
\beq\label{intcond3} \textrm{Re} \left(-2 i \zeta \omega_1+ 2\tau\left(\zeta_w(\alpha)\omega_1-\zeta_w(\omega_1)\alpha\right)\right) =0,\eeq
where $\tau = \textrm{sgn}(\textrm{Re}\left(-K_1 +4\zeta^3-2\zeta\omega-4i\zeta\Omega(\zeta)\right).$

Under the mapping (\ref{omega-nums}), and applying the formula for the Weierstrass $\zeta$ function evaluated at a half period \cite{BF}, $ \zeta_w(\omega_1) = \sqrt{e_1-e_3}\left(E(k)-\frac{e_1}{e_1-e_3}K(k)\right),$ (\ref{intcond3}) becomes
\beq \label{intcond4} \textrm{Re} \left[ -2 i \zeta K(k) +2\tau \left(\zeta_w(\alpha)K(k)-\left(E(k)-\frac{1}{3}\left(2-k^2\right) K(k)\right)\alpha \right)\right] = 0.\eeq
Here
\beq E(k) = \int_0^{\pi/2} \sqrt{1-k^2 \sin^2 y} dy, \eeq
is the complete elliptic integral of the second kind.
At this point, we have simplified the integral condition (\ref{intcondW}) as much as possible. Thus $\zeta\in\sigma_L$ if and only if (\ref{intcond4}) is satisfied. To simply notation, let
\beq I(\zeta)= -2 i \zeta \omega_1+ 2\tau\left(\zeta_w(\alpha)\omega_1-\zeta_w(\omega_1)\alpha\right), \eeq
so that (\ref{intcond4}) is
\beq \label{intcondI} \textrm{Re} \left[ I(\zeta)\right] = 0.\eeq

Next, we wish to examine the level sets of the left-hand side of (\ref{intcondI}). To this end, we differentiate $I(\zeta)$ with respect to $\zeta$. To evaluate this derivative we use the chain rule and note that
\beq \frac{\partial}{\partial \zeta} \zeta_w(\alpha) = -\wp(\alpha)\frac{\partial \alpha}{\partial \zeta} = \frac{C_4(\zeta)}{C_5(\zeta)}\frac{\textrm{d} \wp^{-1}}{\textrm{d} \zeta} \left(-\frac{C_4(\zeta)}{C_5(\zeta)},g_2,g_3\right) \left(-\frac{C_4(\zeta)}{C_5(\zeta)}\right)'.\eeq
Since
\beq \frac{\textrm{d}}{\textrm{d}z} \wp^{-1}\left(-\frac{C_4(\zeta)}{C_5(\zeta)},g_2,g_3\right)=\frac{1}{\wp'\left (\wp^{-1}\left(-\frac{C_4(\zeta)}{C_5(\zeta)},g_2,g_3\right)\right)}=\frac{1}{\wp'(\alpha)}, \eeq
we can use (\ref{wpprime}) to obtain
\beq \label{derivintcond} \frac{\textrm{d} I(\zeta)}{\textrm{d}\zeta} =\frac{2 E(k)-\left(1+b-k^2+4 \zeta^2\right)K(k)}{2 \Omega(\zeta)} . \eeq
Simply taking the real part of (\ref{derivintcond}) does not give another characterization of the spectrum.  Instead, if we think of (\ref{intcond4}) as restricting ourselves to the zero level set of the left-hand side. Then we use (\ref{derivintcond}) to determine a tangent vector field which allows us to map out level curves originating from any point. This is explained in more detail in Section \ref{regions}. Additionally, there we see that (\ref{derivintcond}) is useful in determining the boundary regions in parameter space corresponding to qualitatively different parts of the spectrum.

\section{The $\sigma_{\mathcal{L}}$ spectrum on the imaginary axis} \label{imagaxis}
In this section we discuss $\sigma_{\mathcal{L}}\cap i \mathbb{R}$. As we demonstrate, this corresponds to the part of $\sigma_L$ lying on the real axis. Using (\ref{intcond4}) we obtain analytic expressions for $\sigma_L\cap \mathbb{R},$ and thus for $\sigma_{\mathcal{L}}\cap i \mathbb{R}$.

First, we consider $\zeta\in\mathbb{R}$. As we demonstrate below, (\ref{intcond4}) is satisfied for any real $\zeta$. Using (\ref{OmegacondW}) and (\ref{lambdacond}), we determine the corresponding parts of $\sigma_{\mathcal{L}}$.
\vspace{2mm}
\begin{thm}
The condition (\ref{intcond4}) is satisfied for all $\zeta\in\mathbb{R}$.
\end{thm}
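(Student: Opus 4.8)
The plan is to sidestep the explicit formula (\ref{intcond4}) altogether and argue on the level of the Lax problem, using the equivalence ``$\zeta\in\sigma_L$ if and only if (\ref{intcond4}) holds'' recorded at the end of Section~\ref{Weierstrass}. Thus it suffices to prove $\mathbb{R}\subseteq\sigma_L$, i.e.\ that for every real $\zeta$ the spatial equation (\ref{chixphi}) has a nonzero solution bounded on all of $\mathbb{R}$.

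The key observation is that, although the spectral operator associated with (\ref{chixphi}) is not self adjoint, the coefficient matrix of the first-order system itself, $M(\zeta,x)=\left(\ba{cc} -i\zeta & \phi(x)\\ -\phi^*(x) & i\zeta \ea\right)$, is skew-Hermitian whenever $\zeta$ is real: a one-line check using $\bar\zeta=\zeta$ and $\overline{\phi}=\phi^*$ gives $M+M^{\dagger}=0$. Hence for any solution $\chi$ of $\chi_x=M\chi$ we have $\frac{d}{dx}\big(\chi^{\dagger}\chi\big)=\chi^{\dagger}(M+M^{\dagger})\chi=0$, so the Euclidean norm $|\chi_1(x)|^2+|\chi_2(x)|^2$ is constant in $x$. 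Since $\phi$ is smooth and bounded, solutions of the linear system (\ref{chixphi}) extend to all of $x\in\mathbb{R}$, so any solution with $\chi(0)\neq 0$ is bounded (indeed bounded away from $0$) on the whole line. Therefore $\zeta\in\sigma_L$, and since $\zeta\in\mathbb{R}$ was arbitrary, $\mathbb{R}\subseteq\sigma_L$; by the stated equivalence, (\ref{intcond4}) holds for all $\zeta\in\mathbb{R}$.

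There is essentially no obstacle here beyond recognizing the skew-Hermitian structure; the one thing to avoid is the temptation to verify (\ref{intcond4}) term by term. For real $\zeta$ the polynomial $\Omega^2(\zeta)$ of (\ref{OmegacondW}) is real, so $\Omega$ is real or purely imaginary; in the purely imaginary case $-C_4/C_5$ is real and $\alpha$ lies on the boundary of the period parallelogram, where one can check the bracket in (\ref{intcond4}) vanishes using $\mathrm{Re}\,\zeta_w(\omega_1+it)=\zeta_w(\omega_1)$ for real $t$. But when $\Omega$ is real, $\alpha$ sits in the interior of the period parallelogram and keeping track of the branch of $\wp^{-1}$ (hence of the sign $\tau$) is delicate; the argument above avoids this entirely. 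It is worth noting for the next section that the portion of $\sigma_L$ on $\mathbb{R}$ produces, through $\lambda=2\Omega(\zeta)$, the imaginary-axis part of $\sigma_{\mathcal{L}}$ on the subintervals where $\Omega^2\le 0$ together with genuinely real spectral elements on the subintervals where $\Omega^2>0$.
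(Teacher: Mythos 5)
Your argument is correct, but it takes a genuinely different route from the paper. The paper proves the theorem by brute-force verification of the formula: it shows that for $\zeta\in\mathbb{R}$ the point $-C_4(\zeta)/C_5(\zeta)$ lies in $(-\infty,e_3]$ (this reduces to the elementary inequality (\ref{wts2})), hence $\alpha\in i\mathbb{R}$ and $\zeta_w(\alpha)\in i\mathbb{R}$, so every term inside the real part in (\ref{intcond4}) is purely imaginary and the condition holds; no appeal to boundedness of Lax eigenfunctions is needed. You instead prove $\mathbb{R}\subset\sigma_L$ directly from the skew-Hermitian structure of the coefficient matrix of (\ref{chixphi}) for real $\zeta$ (norm conservation, hence all solutions bounded), and then transfer this to (\ref{intcond4}) through the stated equivalence ``$\zeta\in\sigma_L$ if and only if (\ref{intcond4}) holds.'' What your approach buys is a conceptually clean and essentially computation-free proof of the fact the paper actually uses afterwards, namely $\mathbb{R}\subset\sigma_L$; what it costs is that the conclusion about the explicit condition (\ref{intcond4}) now rests on the necessity direction of the $\sigma_L\Leftrightarrow$ (\ref{intcond4}) equivalence, which in the paper is the outcome of the formal construction of eigenfunctions via $\gamma(x)$ and the explicit integration (with its branch and $\tau$ choices, and the glossed-over issues at zeros of $B$ and at $\Omega=0$), whereas the paper's direct computation verifies (\ref{intcond4}) without invoking that machinery in reverse.

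Two small corrections to your closing remarks. The delicacy you ascribe to the ``$\Omega$ real'' case never arises: by (\ref{Omegaw}), $\Omega^2(\zeta)\le 0$ for every real $\zeta$, so $\Omega\in i\mathbb{R}$ throughout, which is exactly why the paper's direct verification goes through uniformly. For the same reason your final sentence is off: real $\zeta$ contributes no ``genuinely real spectral elements'' of $\sigma_{\mathcal{L}}$, since there are no subintervals of $\mathbb{R}$ on which $\Omega^2>0$; the image of $\sigma_L\cap\mathbb{R}$ under $\lambda=2\Omega(\zeta)$ lies entirely on the imaginary axis, as stated in Section \ref{imagaxis}.
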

\begin{proof}
Since $k,$ $K(k),$ and $E(k)$ are real, it suffices to show that $\alpha\in i \mathbb{R}$ and $\zeta_w(\alpha) \in i\mathbb{R}$. Since $\zeta_w$ with $g_2,g_3\in\mathbb{R}$ takes real values to real values and purely imaginary values to purely imaginary values \cite{O}, it suffices to show that $\alpha=\wp^{-1}\left(-\frac{C_4(\zeta)}{C_5(\zeta)},g_2,g_3\right)\in i \mathbb{R}$.
For $g_2,g_3\in \mathbb{R}$, $\wp(\mathbb{R},g_2,g_3)$ maps to $[e_1,\infty),$ and since $\wp(i x,g_2,g_3)=-\wp(x,g_2,-g_3)$ we have that $\wp(i \mathbb{R},g_2,g_3)$ maps to $(-\infty,e_3].$ Thus we need to show that $\zeta,$ $-\frac{C_4(\zeta)}{C_5(\zeta)}\le e_3.$
Substituting for $C_4(\zeta)$ and $C_5(\zeta),$ we want to show
\beq \frac{1}{6}\left(2\omega-12\zeta^2-3\sqrt{1-3b^2+k^4-16c\zeta-8\zeta^2+16\zeta^4-2k^2\left(1+4\zeta^2\right)+2b\left(1+k^2+12\zeta^2\right)}\right)\le e_3. \eeq
Simplifying the left- and right- hand sides of this expression yields
\beq \label{wts2} 4\zeta^2+\sqrt{\left(1+k^2-b-4\zeta^2\right)^2+4\left(2\sqrt{b}\zeta-\sqrt{(1-b)(b-k^2)}\right)^2}\ge 1+k^2-b. \eeq
There are two cases. If $4\zeta^2\ge 1+k^2-b$ we are done, as the square root term is nonnegative. If $4\zeta^2< 1+k^2-b$, we have
\beq 4\zeta^2+\sqrt{\left(1+k^2-b-4\zeta^2\right)^2+4\left(2\zeta\sqrt{b}-\sqrt{(1-b)(b-k^2)}\right)^2}\ge4\zeta^2+\sqrt{\left(1+k^2-b-4\zeta^2\right)^2}. \eeq
Since $1+k^2-b-4\zeta^2>0,$ this gives (\ref{wts2}) as we wished to prove.
\end{proof}

At this point, we know that $\mathbb{R}\subset \sigma_L.$ We wish to see what this corresponds to for $\sigma_{\mathcal{L}}$. Looking at (\ref{OmegacondW}), we notice that
\beq \label{Omegaw}\Omega^2 = -\frac{1}{16}\left(\left(1+k^2-b-4\zeta^2\right)^2+4\left(2\zeta\sqrt{b}-\sqrt{(1-b)(b-k^2)}\right)^2\right). \eeq
For convenience define
\beq S_\Omega = \left\{ \Omega : \Omega^2 = -\frac{1}{16}\left(\left(1+k^2-b-4\zeta^2\right)^2+4\left(2\zeta\sqrt{b}-\sqrt{(1-b)(b-k^2)}\right)^2\right) \textrm{ and } \zeta\in\sigma_L \right\}. \eeq
Thus when $\zeta\in\mathbb{R},$ $\Omega(\zeta)\in i \mathbb{R}$ necessarily, since $\Omega^2(\zeta)<0$.
Applying (\ref{lambdacond}), we see that $\zeta\in \mathbb{R}$ corresponds to imaginary spectral elements of $\sigma_{\mathcal{L}}$.
Representative plots of $\Omega^2$ are shown in Figure \ref{Omegaplots}.
The subset of $S_\Omega$ corresponding to $\zeta\in\mathbb{R}$ consists of $\left(-\infty,-i |\Omega_m|\right] \cup \left[i |\Omega_m|,\infty\right),$ where $\Omega_m$ is the maximum value of $\Omega$. The set $S_\Omega$ is in general at least double covered as for almost every value of $\Omega$ there are at least two values of $\zeta$ which map to it. The spectrum on the imaginary axis is quadruple covered if the quartic (\ref{OmegacondW}) has four distinct real roots $\zeta$, as is the case in Figure~\ref{Omegaplots}(d) for $\Omega^2\in(-0.0639,-0.0243)$.

\begin{figure}
\centering
\begin{tabular}{cc}
  \includegraphics[width=55mm]{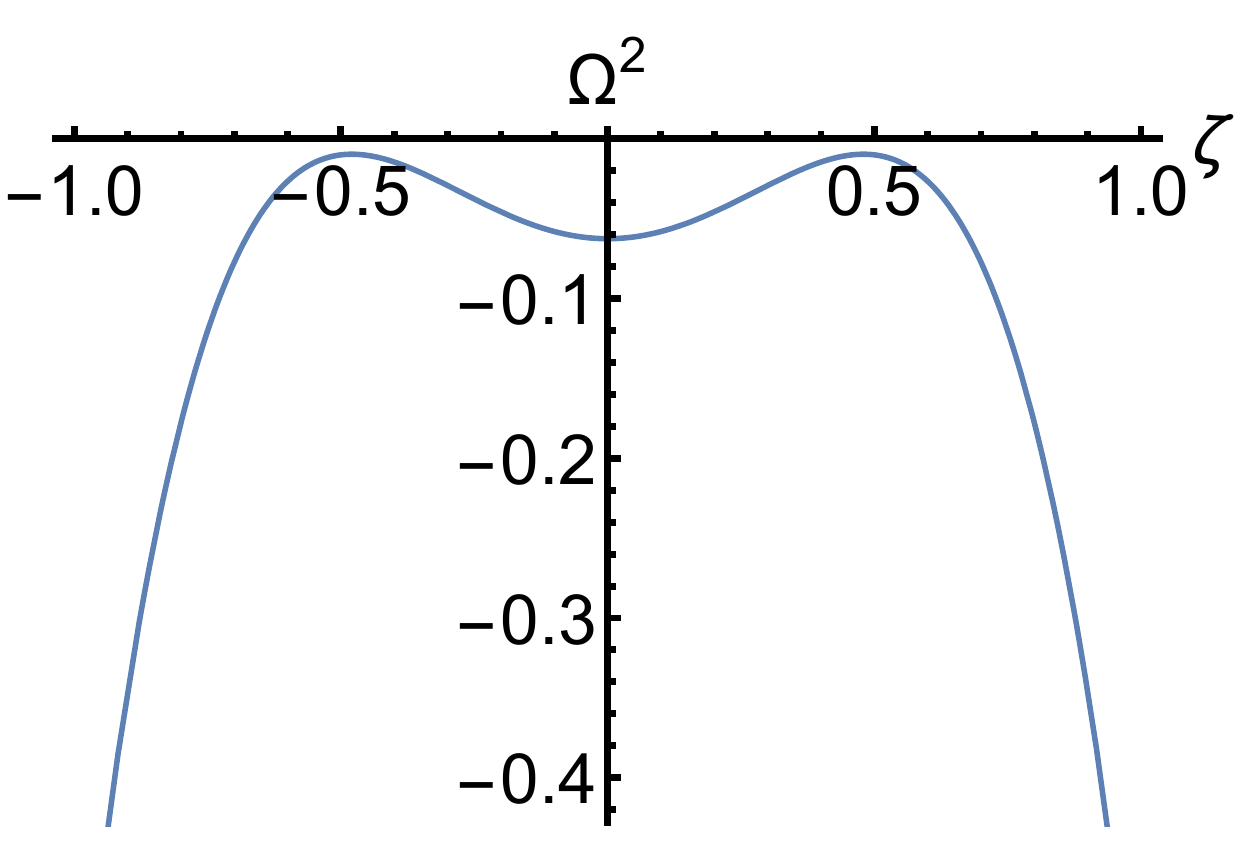} &   \includegraphics[width=55mm]{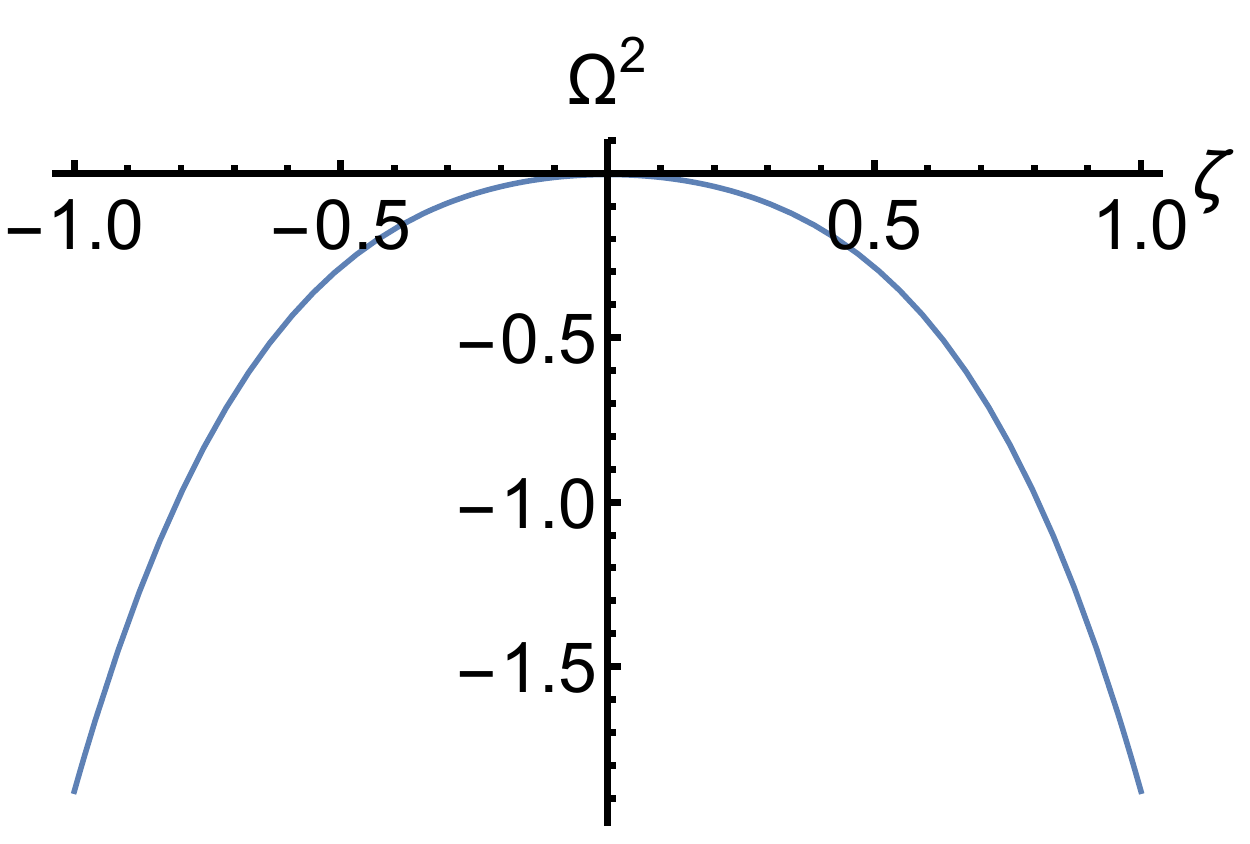} \\
(a) & (b)  \\[4pt]
 \includegraphics[width=55mm]{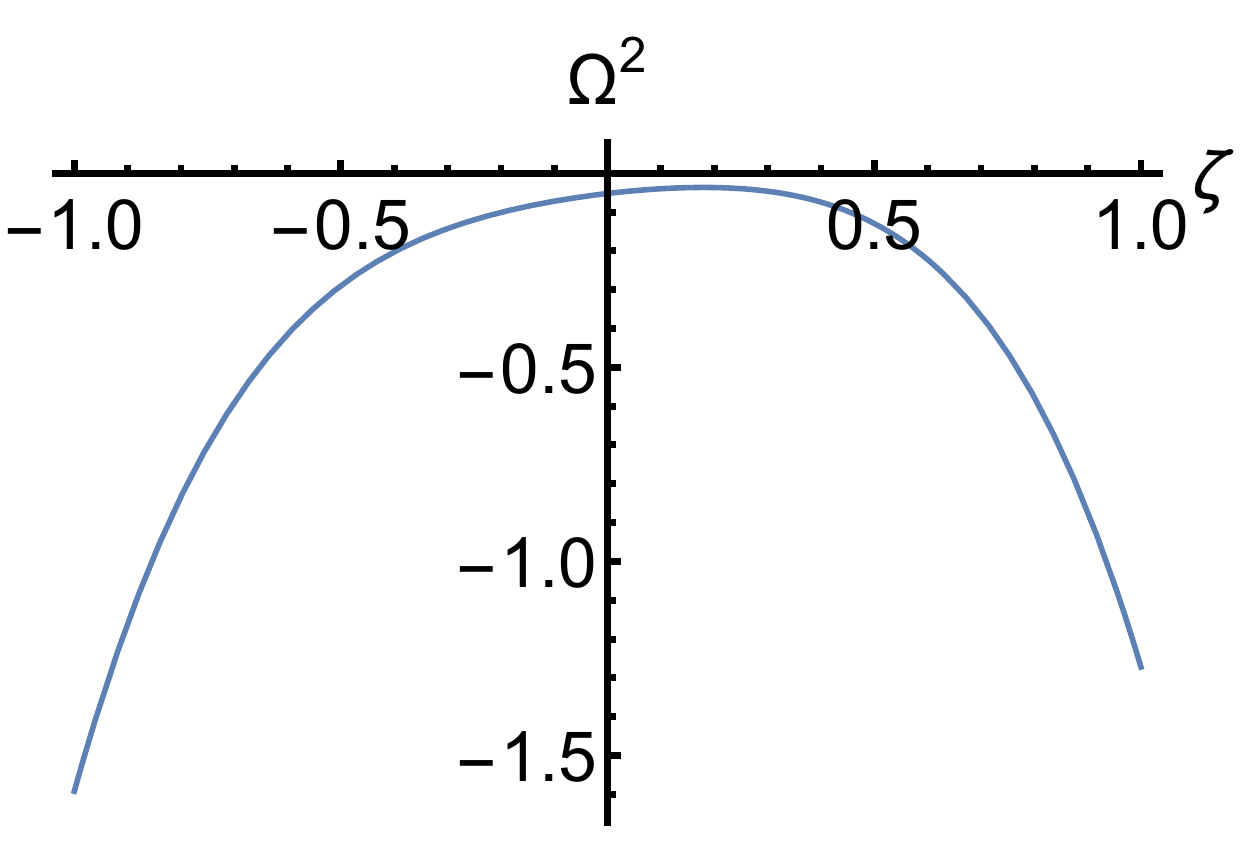} &   \includegraphics[width=55mm]{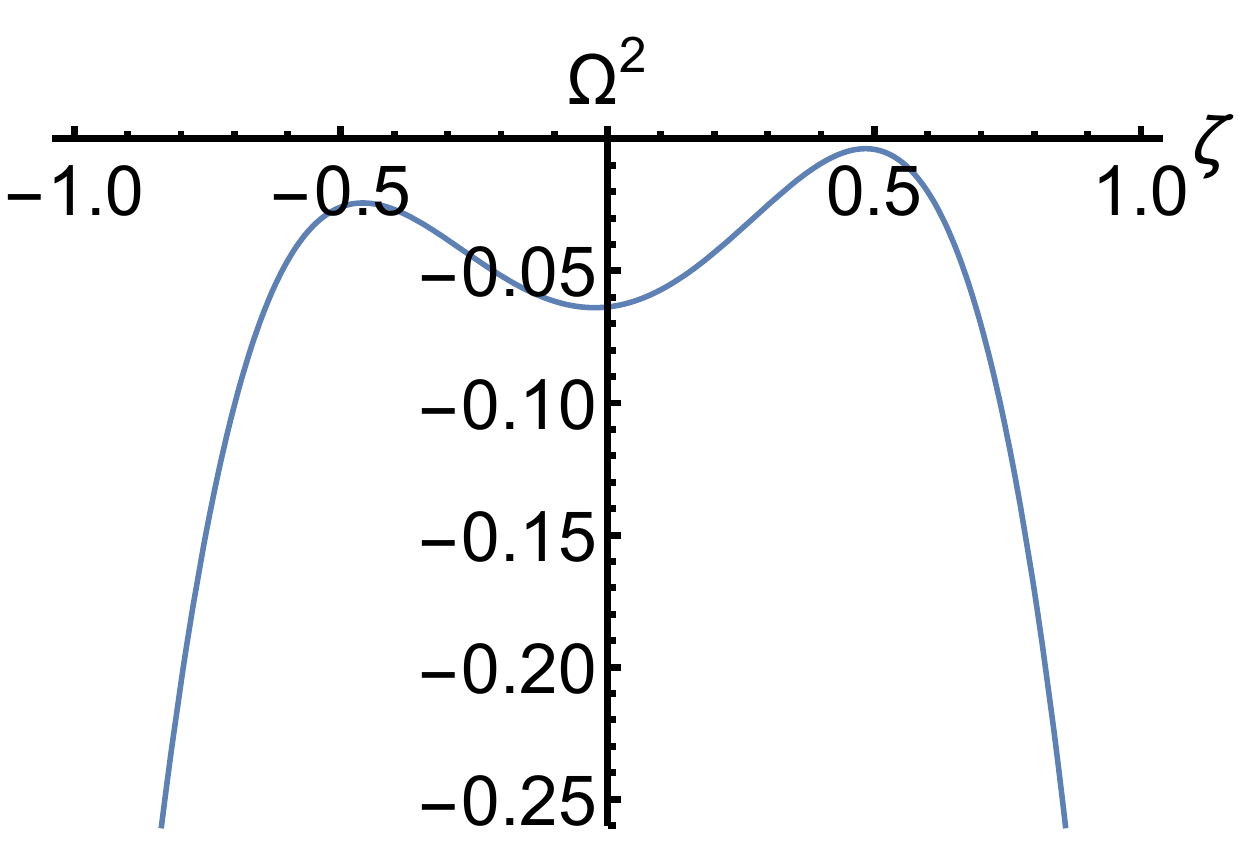} \\
(c) & (d)
\end{tabular}
\caption{$\Omega^2$ as a function of real $\zeta$ for various values of $b$ and $k$: (a) $\cn$ case with $(k,b)=(0.2,0.04);$ (b) $\dn$ case with $(k,b)=(0.5,1);$ (c) general nontrivial-phase case with one maximum with $(k,b)=(0.8,0.8);$ (d) general nontrivial-phase case with two maxima with $(k,b)=(0.2,0.05)$. }
\label{Omegaplots}
\end{figure}

The condition for a subset of the spectrum to have a quadruple covering is readily determined. We require that the quartic $\Omega^2(\zeta)$ has three critical values, \textit{i.e.}, that its derivative has three distinct roots. Examining the discriminant of (\ref{OmegacondW}) with respect to $\zeta,$ we see that if
\beq \label{quadcond} k^2<b<\frac{1+3k^2+3k^4+k^6}{9(1-k^2+k^4)}, \eeq
then there is a region of the imaginary axis which is quadruple covered. We show a plot of parameter space separated into two distinct regions by this condition in Figure \ref{imagaxisregions}. In the upper region, the subset of $\sigma_{\mathcal{L}}$ on the imaginary axis has no quadruple covering. In the lower region there is a quadruple covering.

To explicitly determine the location of the covering on the imaginary axis, we need the local extrema of $\Omega^2$. In the  case when (\ref{quadcond}) is satisfied, the three extrema $\Omega_c^2$ of $\Omega^2$ satisfy the cubic in $\Omega_c^2$
\begin{align}\label{Omegaceqn}
\begin{split}
- & 16k^4(-1+k^2)^2- 32\left(-4 k^2+32k^4-4k^6+27 b g_3\right)\Omega_c^2+ \\
& 256(-1-18b+27b^2+10k^2-18b k^2-k^4) \Omega_c^4-4096\Omega_c^6=0.
\end{split}
\end{align}
Labeling the real roots as $\Omega_{c1}^2,\,\Omega_{c2}^2,\,\Omega_{c3}^2,$ with $\Omega_{c1}^2 < \Omega_{c2}^2 < \Omega_{c3}^2,$ we have that the $\sigma_{\mathcal{L}}$ spectrum is double covered on the region $ \left(-i \infty ,-2\sqrt{\Omega^2_{c3}}\right)\cup \left(-2\sqrt{\Omega^2_{c2}},-2\sqrt{\Omega^2_{c1}}\right) \cup \left(2 \sqrt{\Omega^2_{c1}},2\sqrt{\Omega^2_{c2}}\right) \cup \left(2\sqrt{\Omega^2_{c3}},i \infty \right), $
and quadruple covered on the region
$  \left(-2\sqrt{\Omega^2_{c3}},-2\sqrt{\Omega^2_{c2}}\right) \cup \left(2 \sqrt{\Omega^2_{c2}},2\sqrt{\Omega^2_{c3}}\right). $
If (\ref{quadcond}) is not satisfied, the $\sigma_{\mathcal{L}}$ spectrum has no quadruple covering, and is double covered on the region $ \left(-i \infty ,2\sqrt{\Omega_{c^*}^2}\right) \cup \left(2 \sqrt{\Omega_{c^*}^2},i \infty \right), $
where $\Omega_{c^*}^2$ is the only real root of (\ref{Omegaceqn}).

\begin{figure}
\centering
\includegraphics[height=8cm]{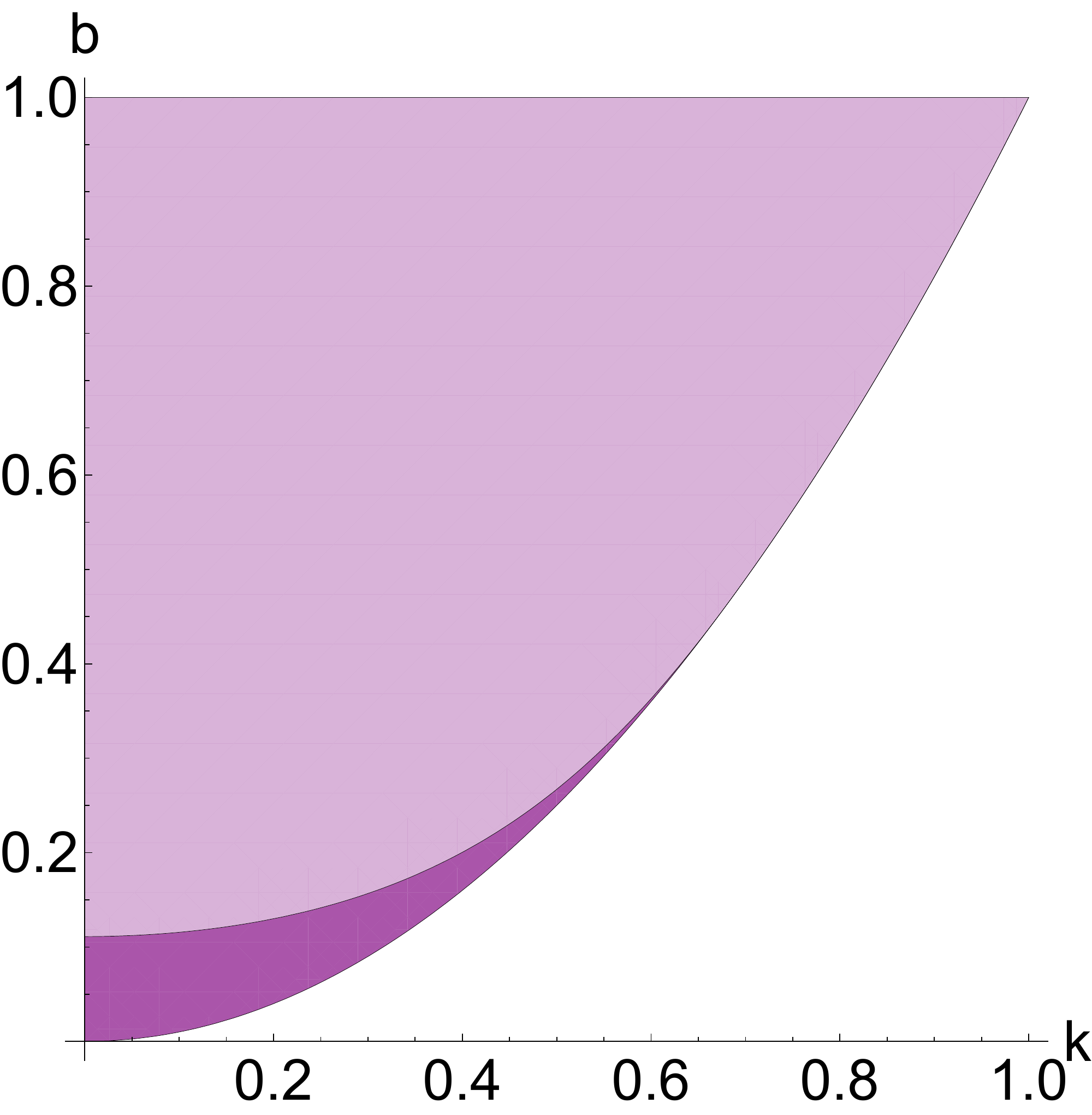}
\caption{Parameter space split using (\ref{quadcond}) in the region for which a subset of $\sigma_{\mathcal{L}}\cup i \mathbb{R}$ is quadruple covered given by (\ref{quadcond}) (dark lower region), and only double covered (light upper region). The lower region region comes to a point at $(k,b)=(\sqrt{2}/2,1/2)$.}
\label{imagaxisregions}
\end{figure}

The extent of the spectrum $\sigma_{\mathcal{L}}$ on the imaginary axis vastly simplifies for the cnoidal wave, the dnoidal wave, and the Stokes wave solutions because (\ref{OmegacondW}) is biquadratic in the former two cases, and because $k=0$ in the latter case.  We detail these boundary cases below giving the $\sigma_{\mathcal{L}}$ spectrum.
For $\dn$ solutions, the imaginary axis is double covered on the region $\left(-i \infty , -i k^2/2 \right)\cup \left( i k^2/2, i \infty \right).$
This confirms results in \cite{De7,K03}.
For $\cn$ solutions, if $k<\sqrt{2}/2$, the imaginary axis is double covered from
$ \left(-i\infty, -i/2 \right) \cup \left(i/2,i \infty \right), $
and quadruple covered from
$ \left(-i/2,- i k\sqrt{1-k^2} \right) \cup \left(i k \sqrt{1-k^2},i/2\right). $
Finally, for the Stokes wave solutions, if $b>1/9$, then $i \mathbb{R} \subset \sigma_{\mathcal{L}}$ and is double covered. If $b<1/9$, then the imaginary axis is still fully double covered except from
$\left(-S_{+},-S_{-}\right) \cup \left(S_{-},S_{+}\right), $
where it is quadruple covered, here
\beq S_{\pm} = \frac{\sqrt{-1\mp \sqrt{(1-9b)(1-b)}+9b\left(-2 \pm \sqrt{(1-9b)(1-b)}+3b\right)}}{2\sqrt{2}}. \eeq

\section{Qualitatively different parts of the spectrum} \label{regions}
Up to this point we have discussed only the subset of $\sigma_{\mathcal{L}}$ that is on the imaginary axis. In this section we discuss the rest of the spectrum. In general, for all choices of the parameters $b$ and $k$, a part of the spectrum $\sigma_{\mathcal{L}}$ is in the right-half plane (corresponding to unstable modes). We split parameter space into five regions where $\overline{\sigma_{\mathcal{L}}\setminus i \mathbb{R}}$ is qualitatively different. Here $\overline{\sigma_{\mathcal{L}}\setminus i \mathbb{R}}$ refers to the closure of $\sigma_{\mathcal{L}}$ not on the imaginary axis.

We refer to Figure \ref{allcases}, which shows $(k,b)$ parameter space with curves that split it into regions where $\overline{\sigma_{\mathcal{L}}\setminus i \mathbb{R}}$ spectrum is qualitatively different. The exact curves splitting up the regions, as well as their derivations, are given below. In Figure \ref{boundaryspectrum}(1) we show representative plots of $\sigma_{\mathcal{L}}$ for the trivial-phase solutions on the boundary of parameter space, and in Figure \ref{boundaryspectrum}(2) we show the corresponding $\sigma_L$ spectrum. Additionally, we plot the $\zeta$ choices for which $\Omega(\zeta)\in i \mathbb{R}$. These curves are used to split up parameter space. The stability of trivial-phase solutions has been well studied in the literature \cite{De7,GH,IL,K03}. The Stokes wave solutions have constant magnitude and their stability problem has constant coefficients. Thus it is significantly easier to analyze.

\begin{figure}
\centering
\includegraphics[height=8cm]{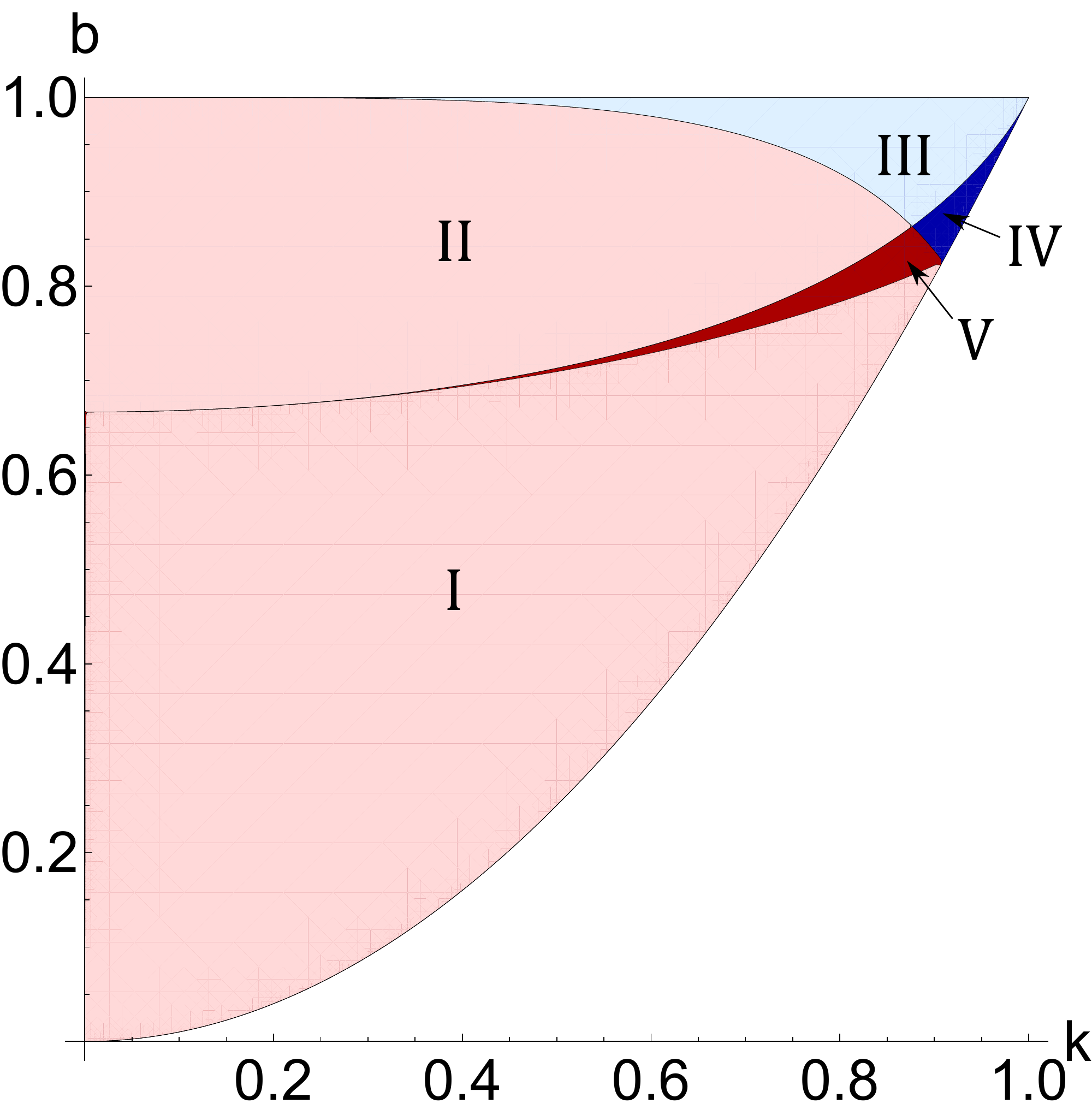}
\caption{A colored plot of parameter space with regions corresponding to different qualitative behavior in the linear stability spectrum. Regions I and II: two nested figure 8s; region III: non-self-intersecting butterflies; region IV:  self-intersecting butterflies; region V: one triple-figure 8 inside of a figure 8. }
\label{allcases}
\end{figure}

For the $\dn$ solutions, $\overline{\sigma_{\mathcal{L}}\setminus i \mathbb{R}}$ consists of a quadruple covered finite interval on the real axis. For Stokes wave solutions $\overline{\sigma_{\mathcal{L}}\setminus i \mathbb{R}}$ consists of a single-covered figure 8, and $\overline{\sigma_{\mathcal{L}}\setminus i \mathbb{R}}$ for $\cn$ solutions consists of a double covered figure 8. There are two cases for the $\cn$ solutions. Either $\sigma_{\mathcal{L}}\cap i \mathbb{R}$ pierces the figure 8 (see Figure \ref{boundaryspectrum}(1c)), or it does not (see Figure \ref{boundaryspectrum}(1d)). The exact value of $k$ separating the closure of the regions is given below.

For these trivial-phase cases, much can be proven and quantified explicitly, \textit{i.e.}, not in terms of special functions. Specifically, for the spectrum in the Stokes wave case we give a parametric description for the figure 8 curve.
For the spectrum for the $\dn$ case we calculate the extent of the covering of $\sigma_{\mathcal{L}}\cap i \mathbb{R}$.
For the spectrum in the piercing $\cn$ case, we give an explicit expression for where the top (or bottom) of the figure 8 crosses the imaginary axis.
Additionally, we have an explicit expression for the tangents to $\sigma_{\mathcal{L}}$ leaving the origin in both $\cn$ cases.
In fact, we are able to approximate the spectrum at the origin using a Taylor series to arbitrary order.
These series give a good approximation to the greatest real part of the figure 8 using only a few terms.

In the interior of parameter space we examine the nontrivial-phase solutions. Four cases appear and plots of the $\sigma_{\mathcal{L}}$ spectrum for representative choices of $k$ and $b$ are seen in Figure \ref{ntpspectrum}. The cases are as follows
\begin{enumerate}[label={(\ref{ntpspectrum}-1\alph*)}]
\item $\overline{\sigma_{\mathcal{L}}\setminus i \mathbb{R}}$ consists of two single-covered figure 8s, resulting in the degenerate double covered case of $\overline{\sigma_{\mathcal{L}}\setminus i \mathbb{R}}$ for $\cn$ solutions.
\item We have a single-covered non-self-intersecting butterfly. As $b\rightarrow 1$ the wings of this butterfly collapse to the real axis and the spectrum for $\dn$ solutions is seen with a quadruple covering on the real axis.
\item $\overline{\sigma_{\mathcal{L}}\setminus i \mathbb{R}}$ is a single-covered triple-figure 8 inside of a single-covered figure 8.
\item $\overline{\sigma_{\mathcal{L}}\setminus i \mathbb{R}}$ consists of a single-covered self-intersecting butterfly, which is seen as a perturbation of $\overline{\sigma_{\mathcal{L}}\setminus i \mathbb{R}}$ for the $\cn$ solutions as the double covered figure 8 splits apart horizontally.
\end{enumerate}

\begin{figure}
\centering
\begin{tabular}{cccc}
  \includegraphics[width=36mm]{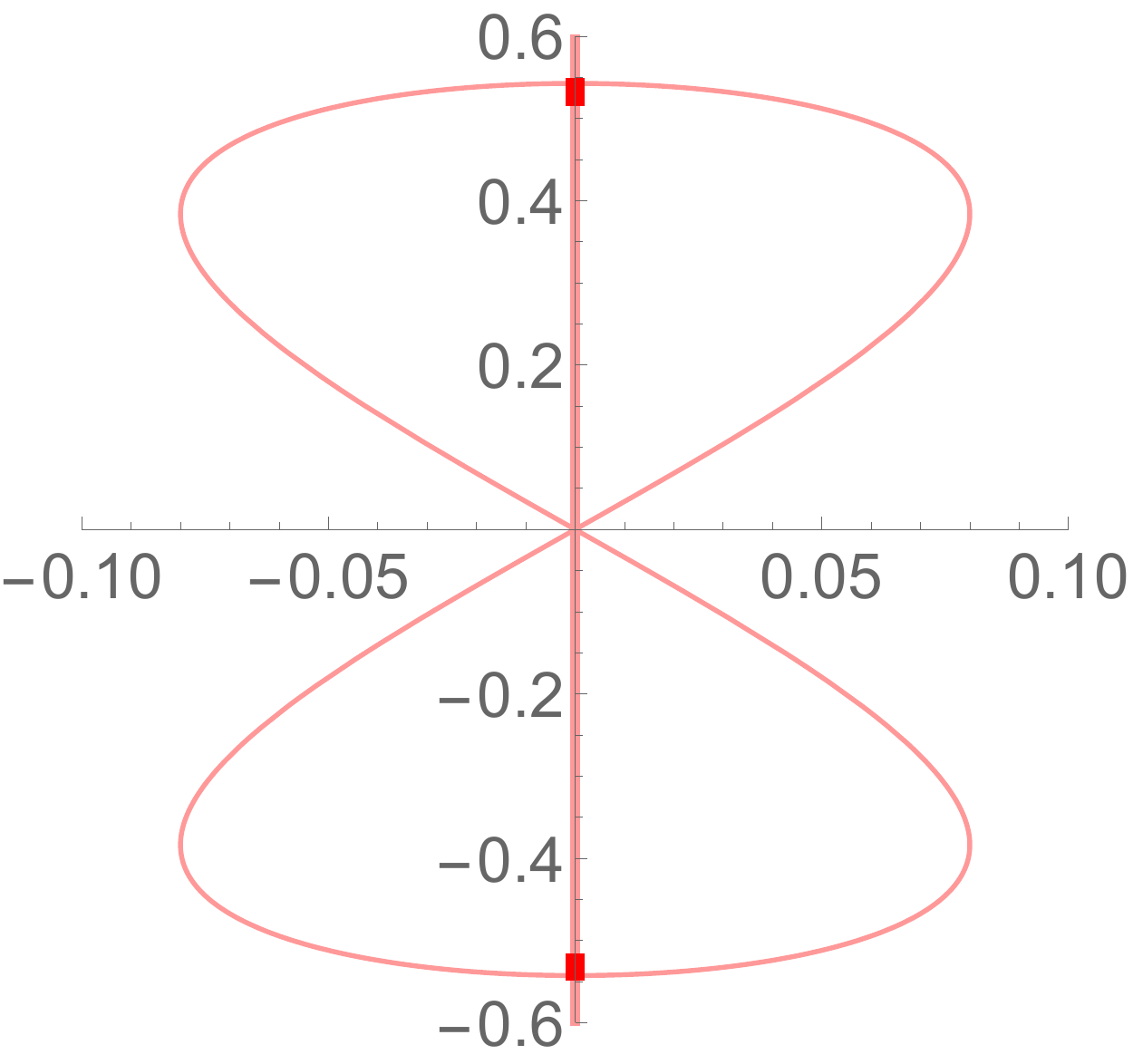} & \includegraphics[width=36mm]{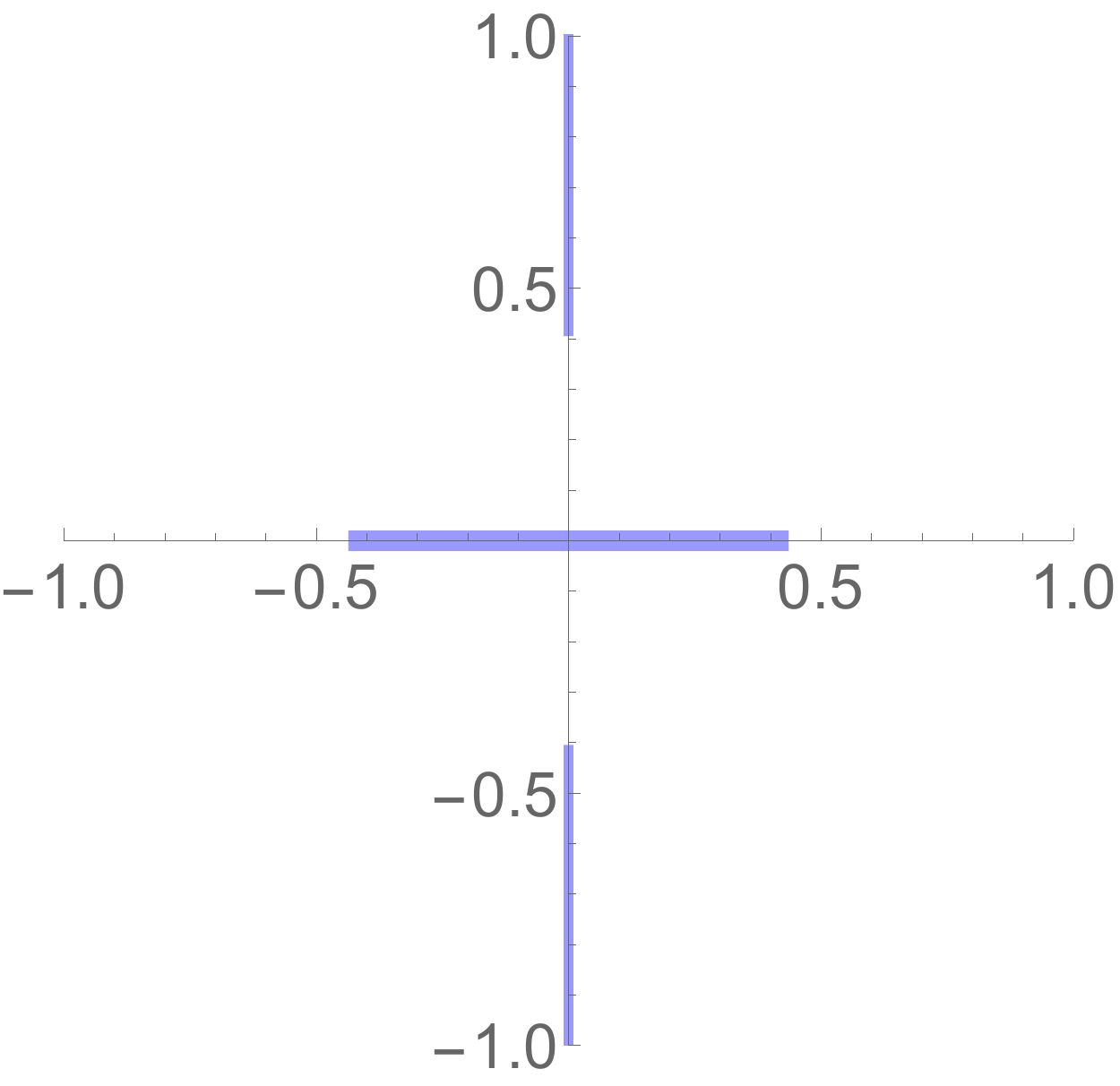} & \includegraphics[width=36mm]{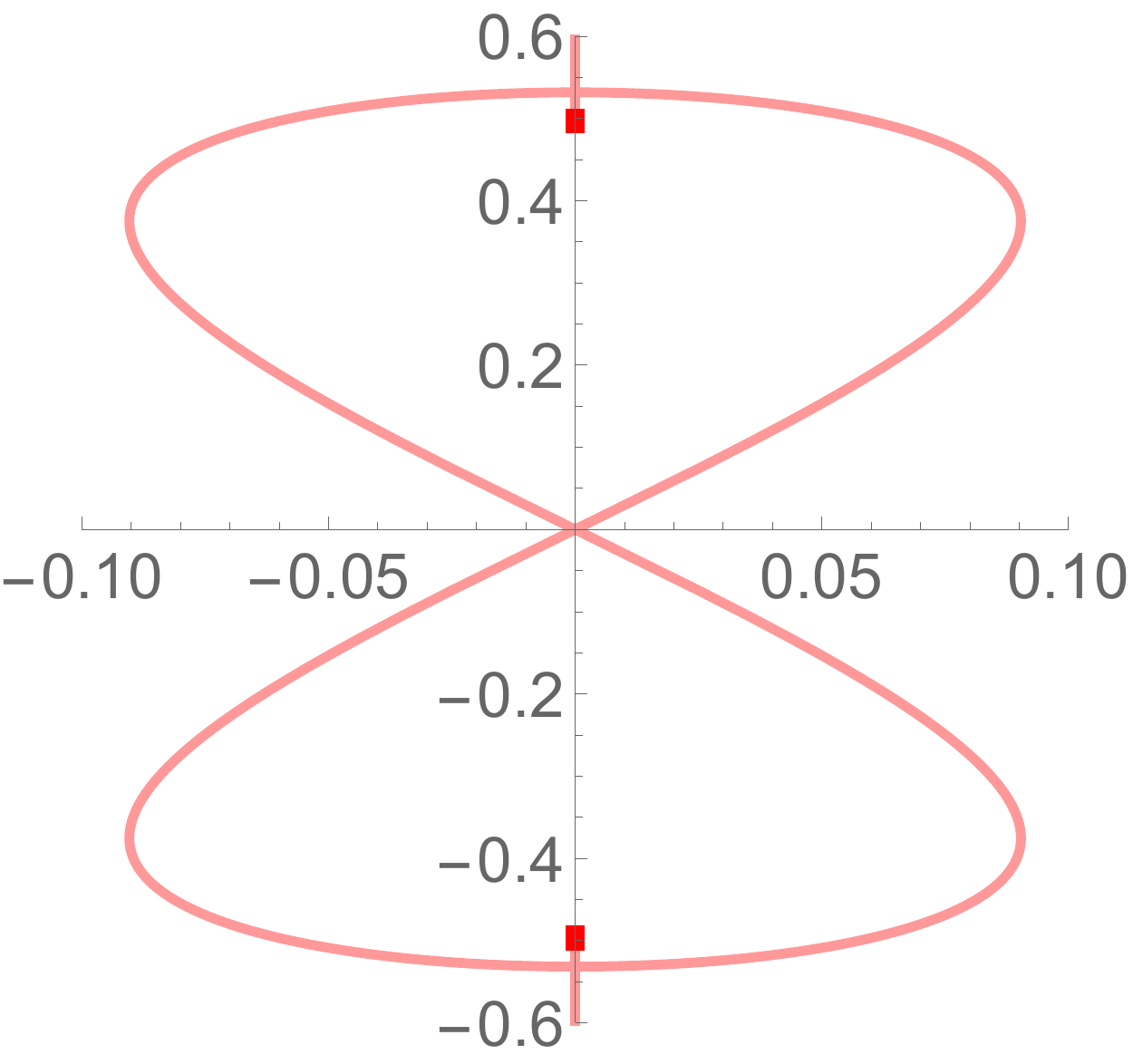} & \includegraphics[width=36mm]{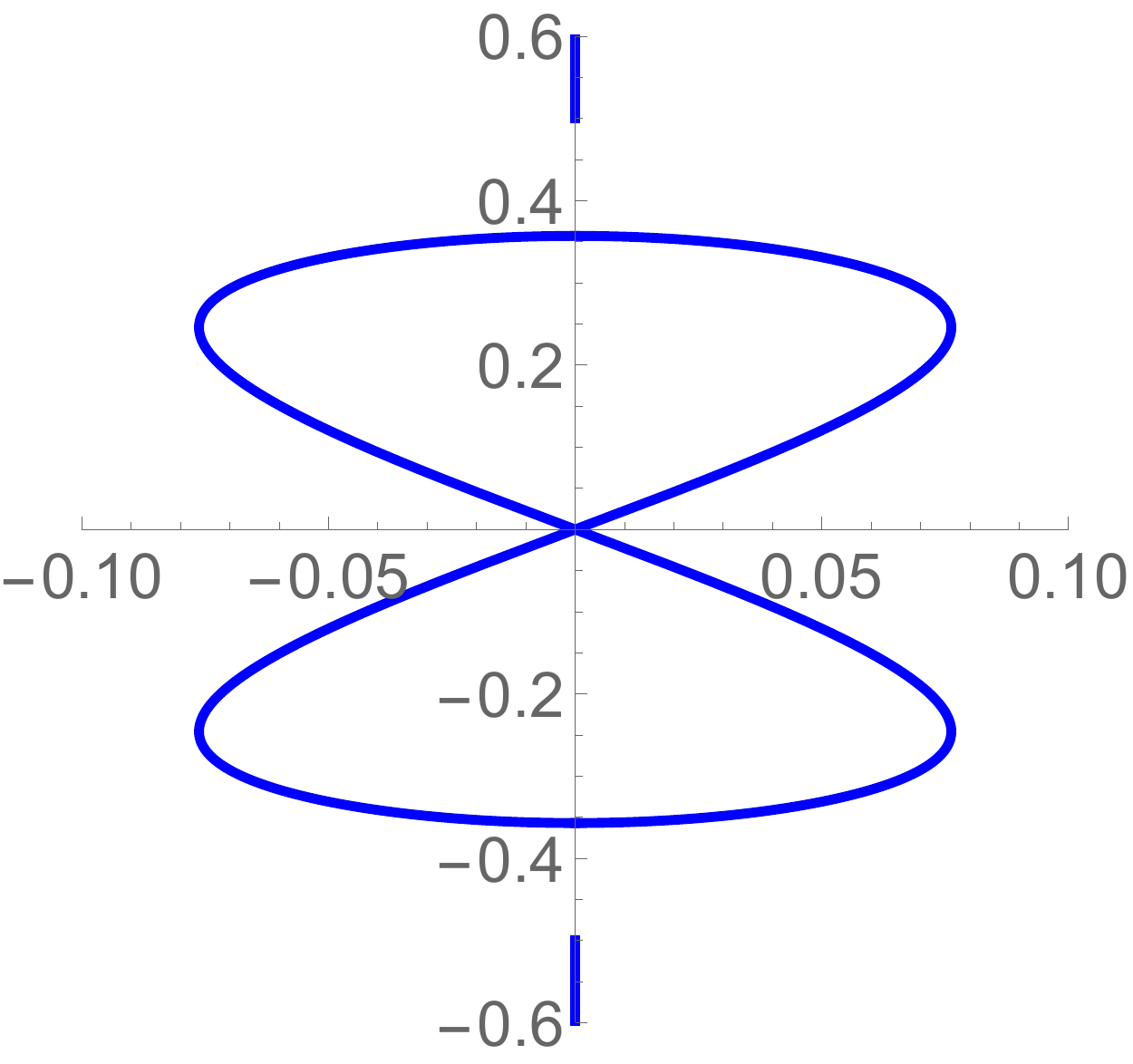} \\
(1a) & (1b) & (1c) & (1d)  \\[4pt]
 \includegraphics[width=36mm]{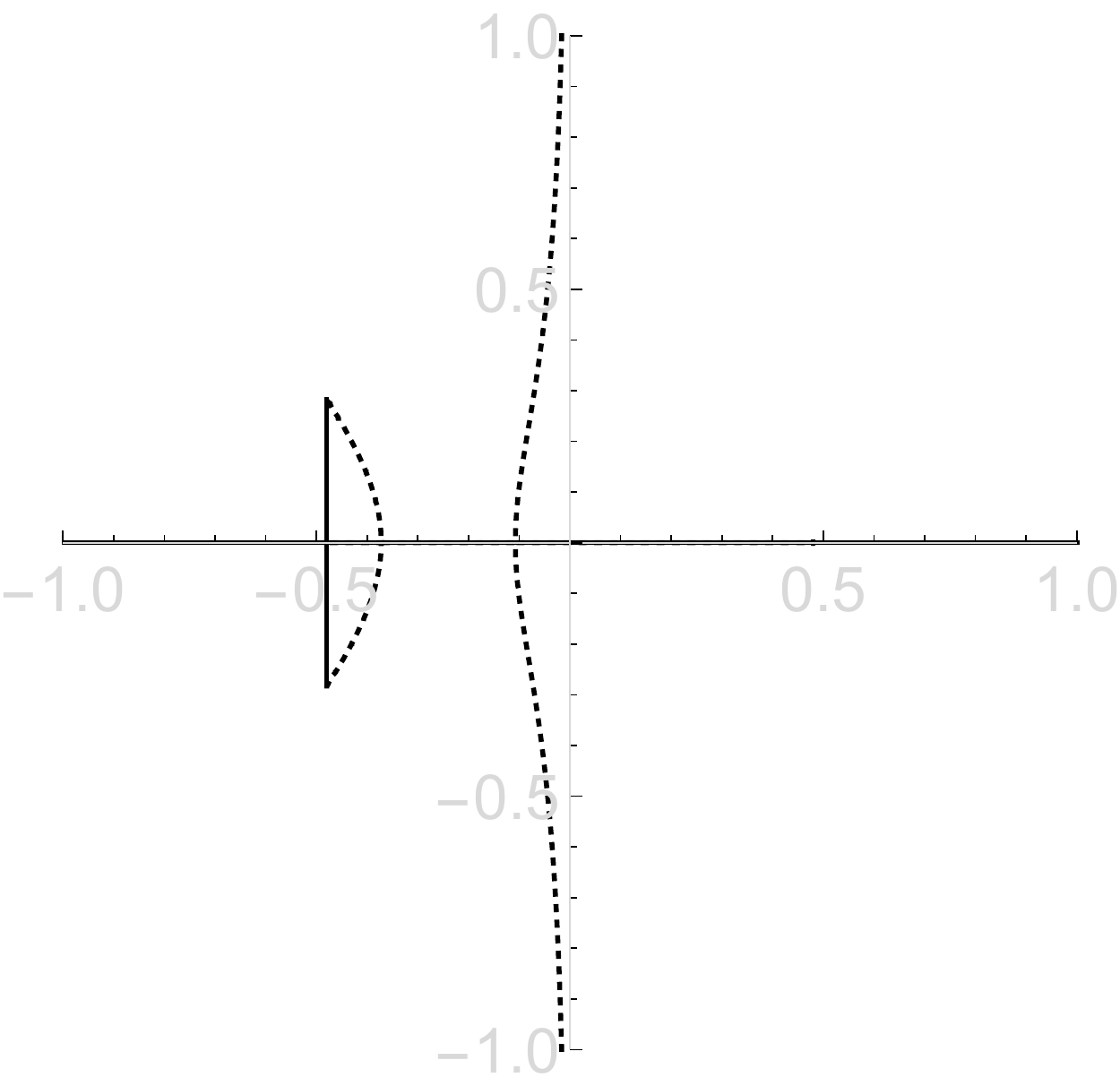} & \includegraphics[width=36mm]{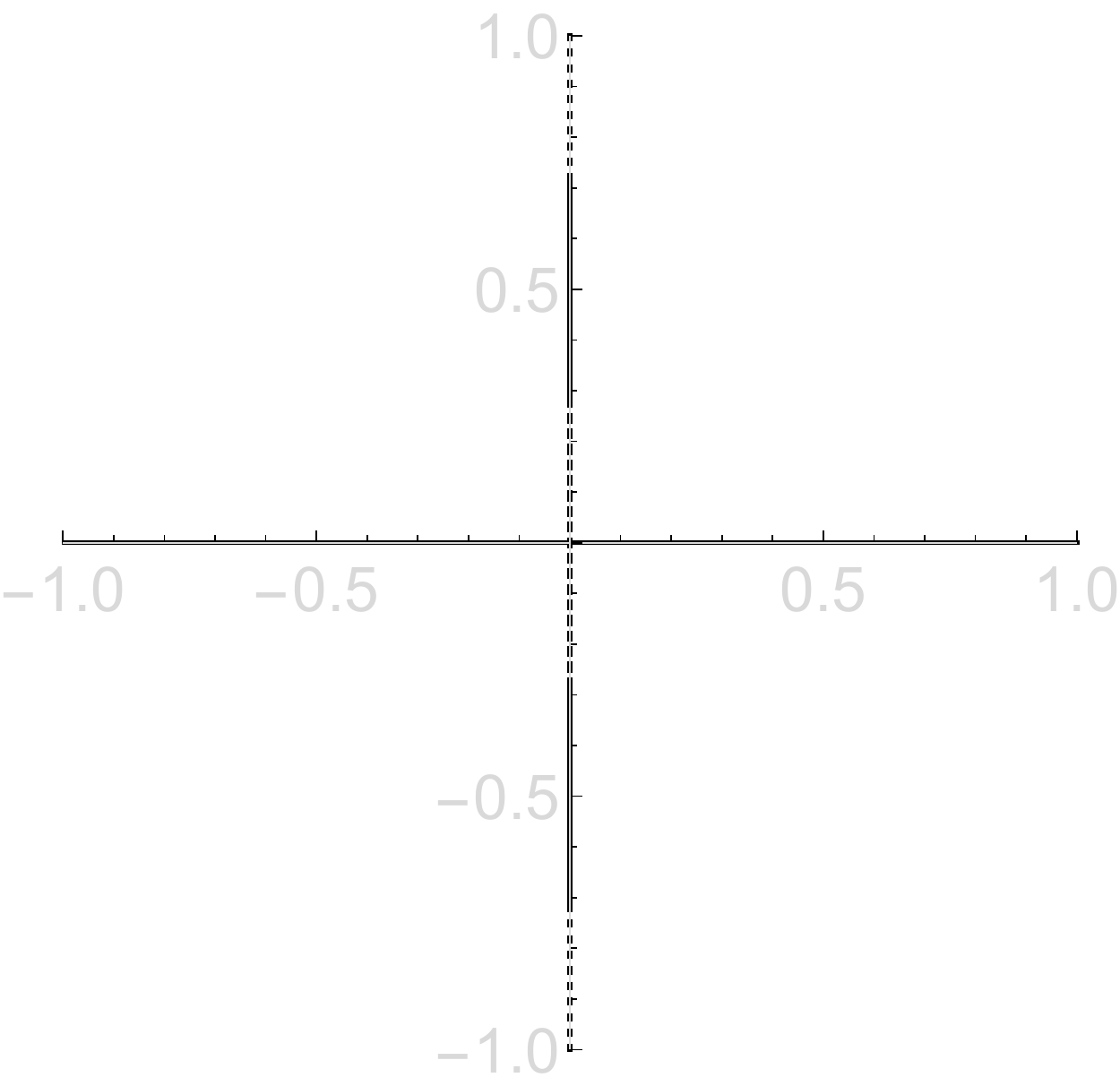} & \includegraphics[width=36mm]{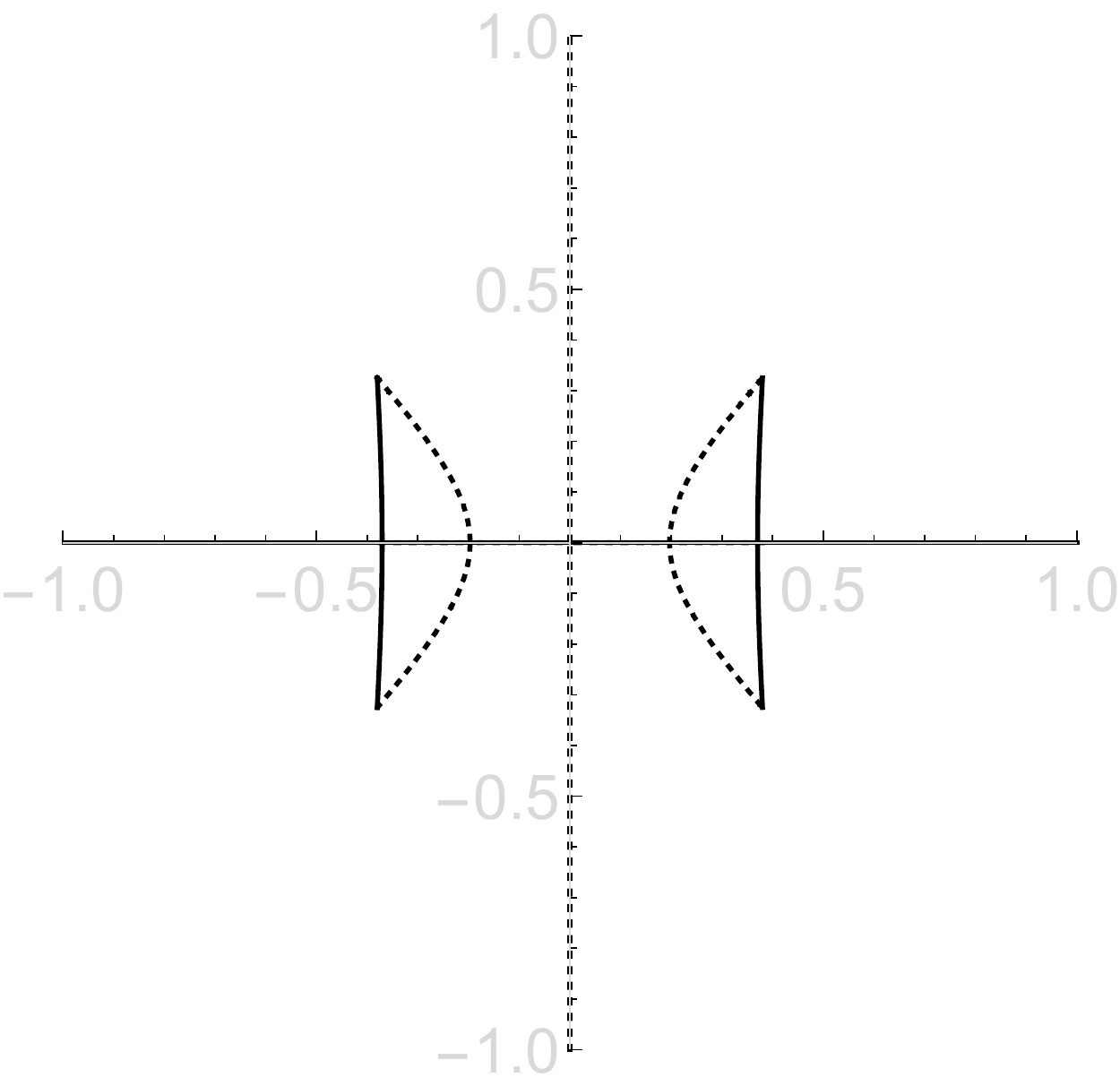} &  \includegraphics[width=36mm]{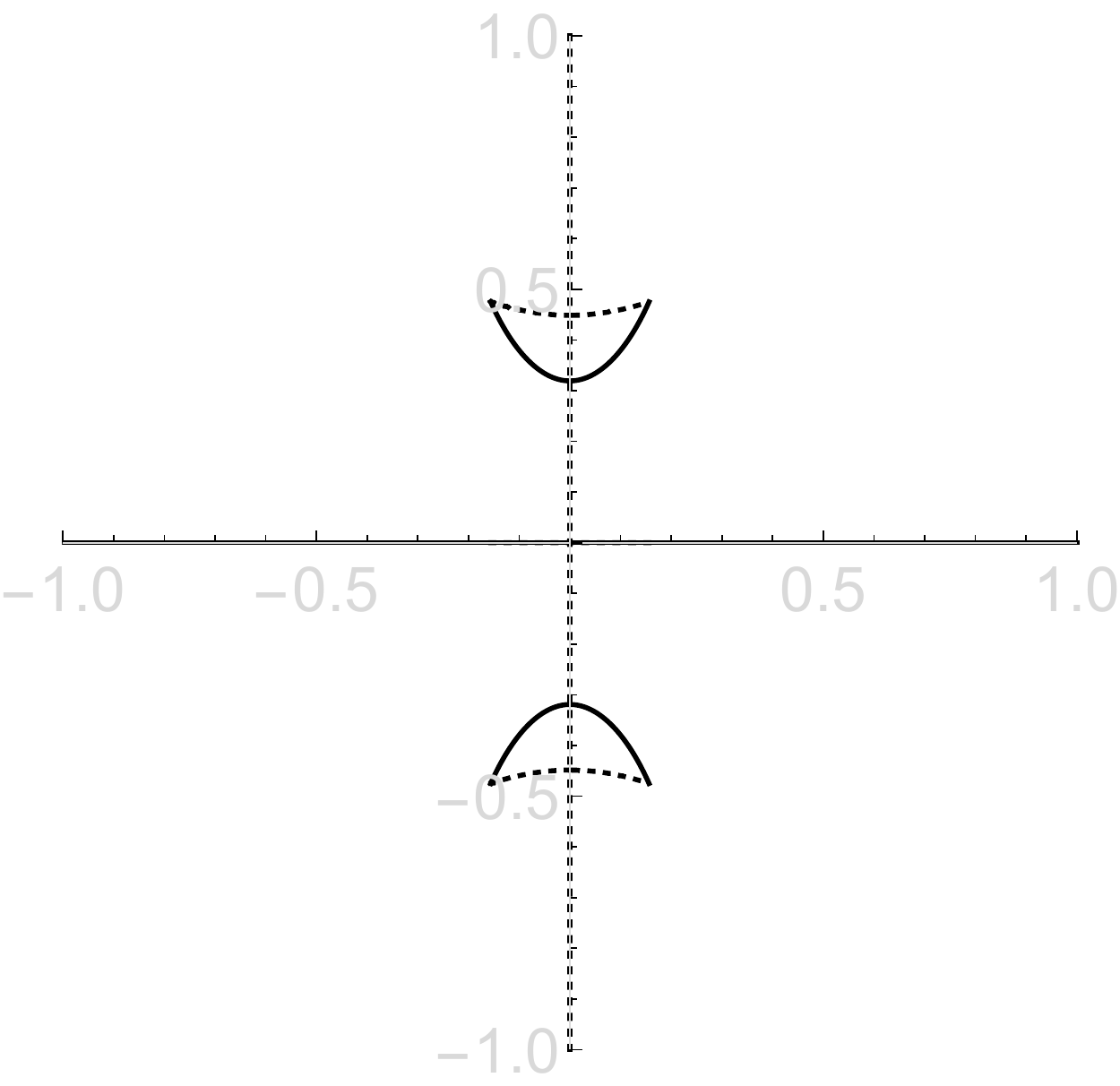} \\
(2a) & (2b) & (2c) & (2d)
\end{tabular}
\caption{(1) $\sigma_{\mathcal{L}}$ for the trivial-phase cases and (2) the corresponding $\sigma_L$ spectra (solid lines), values for which $\textrm{Re}\left(\Omega(\zeta)\right)=0$ (dotted). In (1), color corresponds to location in Figure \ref{allcases} and thickness of curves corresponds to single, double, or quadruple covering going from thinnest to thickest. (a) Stokes wave solution, $(k,b)=(0,0.08);$ (b) $\dn$ solution, $(k,b)=(0.9,1);$ (c) $\cn$ solution with piercing, $(k,b)=(0.65,0.4225);$ (d) $\cn$ solution without piercing, $(k,b)=(0.95,0.9025)$.}
\label{boundaryspectrum}
\end{figure}

In fact, there are two non-connected regions in parameter space for which we have two single-covered figure 8s, but qualitatively the spectrum is the same so we do not show samples from both regions.

For the nontrivial-phase case less can be determined explicitly. That said, we present an explicit expression for the slope of the spectrum for any nontrivial-phase solution as it leaves the origin. Since at least some of these slopes are finite, this settles the conjecture of Rowlands \cite{R74} that {\em all} stationary solutions of (\ref{fNLS}) are unstable. Moreover, a Taylor series expansion around the origin can be obtained for all cases and it well approximates the largest real part with a small number of terms. Additionally, explicit expressions for the tops (or bottoms) of the figure 8s in both cases with figure 8s are given.

A starting point for solving (\ref{intcond4}) for $\zeta$ is to recognize that if $\zeta$ satisfies $\Omega^2(\zeta)=0,$ then $\zeta$ must satisfy (\ref{intcond4}). This is due to the fact that the origin is always included in $\sigma_{\mathcal{L}}$ and hence in $S_\Omega$. In fact, the four roots of the quartic $\Omega^2=0$ corresponds to the fact that $0\in\sigma_{\mathcal{L}}$ with multiplicity four. This is seen from the symmetries of (\ref{fNLS}) and by applying Noether's Theorem \cite{KP, S07}.

It may be instructive to see this explicitly. In the general case, the roots of $\Omega^2(\zeta)$ are
\beq \label{zeta-roots} \zeta_c=\left\{\frac{\sqrt{1-b}}{2}\pm i \frac{\sqrt{b}-\sqrt{b-k^2}}{2} ,\, - \frac{\sqrt{1-b}}{2}\pm i \frac{\sqrt{b}+\sqrt{b-k^2}}{2} \right\}.\eeq
These roots are seen in Figures \ref{boundaryspectrum}-\ref{ntpspecialspectrum} (bottom) as the intersections between the solid and dotted lines lying off of the real axis. Indeed, as long as $b,k>0,$ these points have nonzero imaginary part, and other $\zeta\in \overline{\sigma_L\setminus \mathbb{R}}$ can be found by following the level curves of (\ref{intcond4}) originating from these points. For convenience we label these four roots $\zeta_1,\zeta_2,\zeta_3,\zeta_4,$ where the subscript corresponds to the quadrant on the real and imaginary plane the root is in.

\begin{figure}
\centering
\begin{tabular}{cccc}
  \includegraphics[width=36mm]{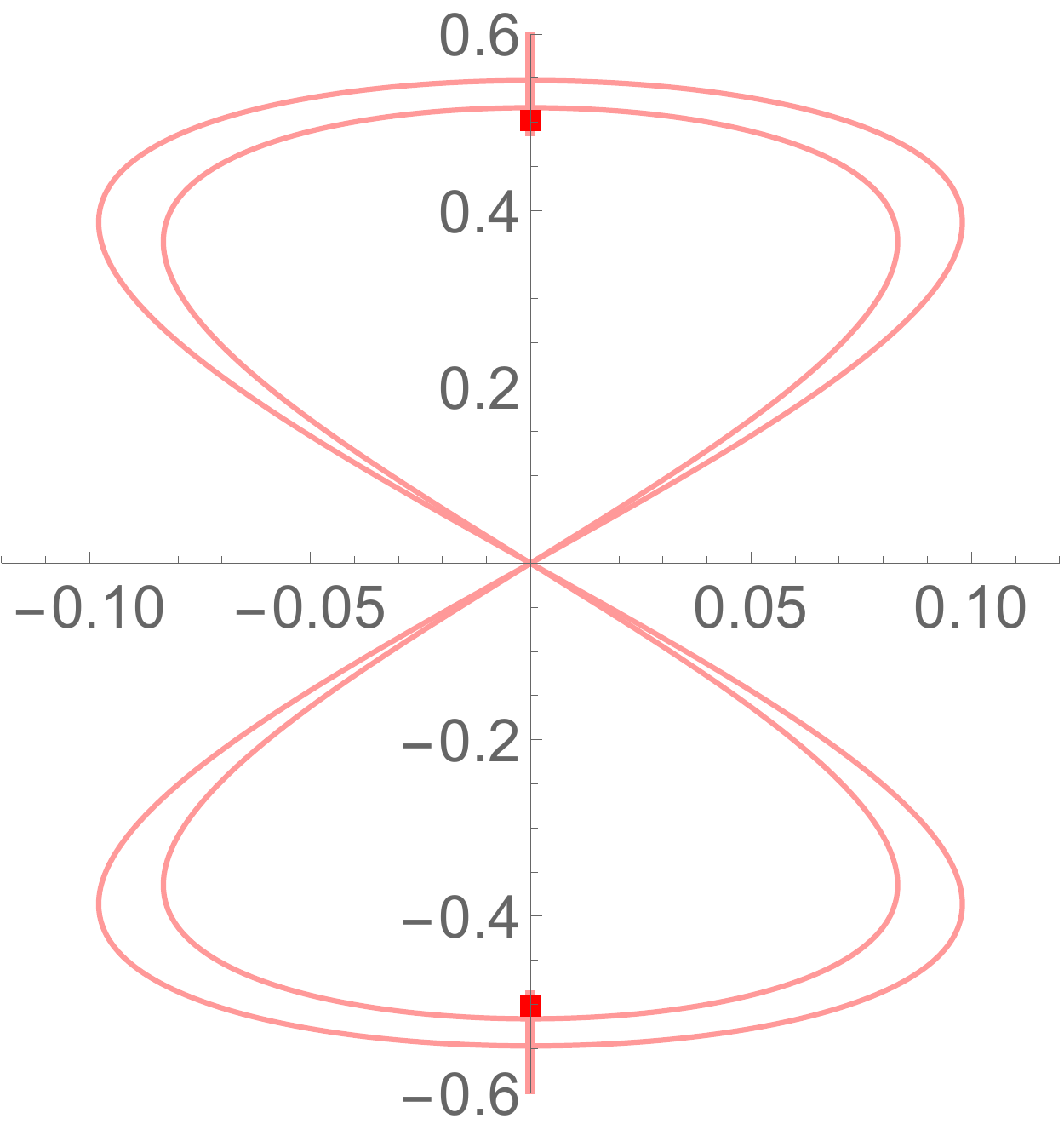} & \includegraphics[width=36mm]{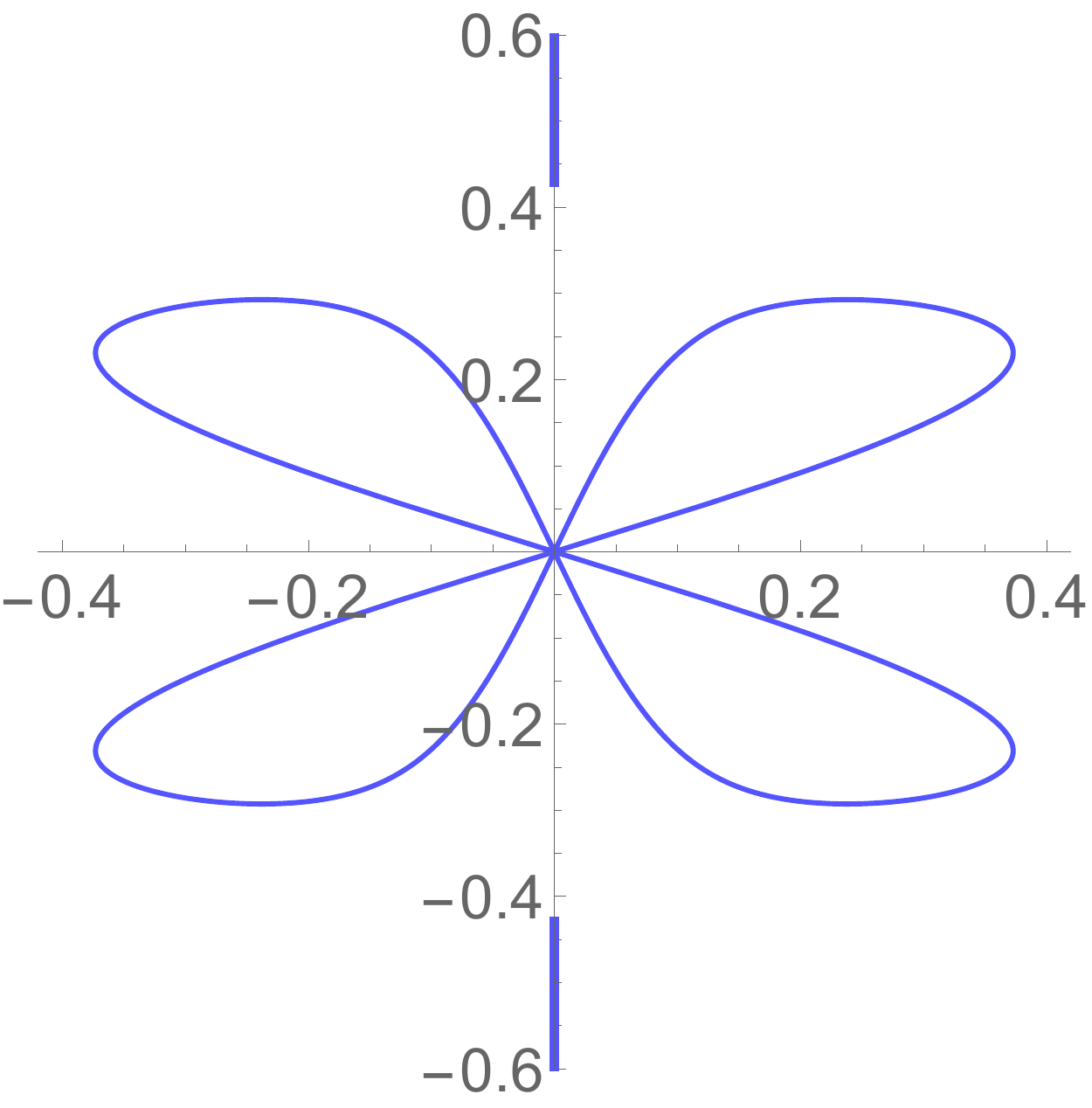} & \includegraphics[width=36mm]{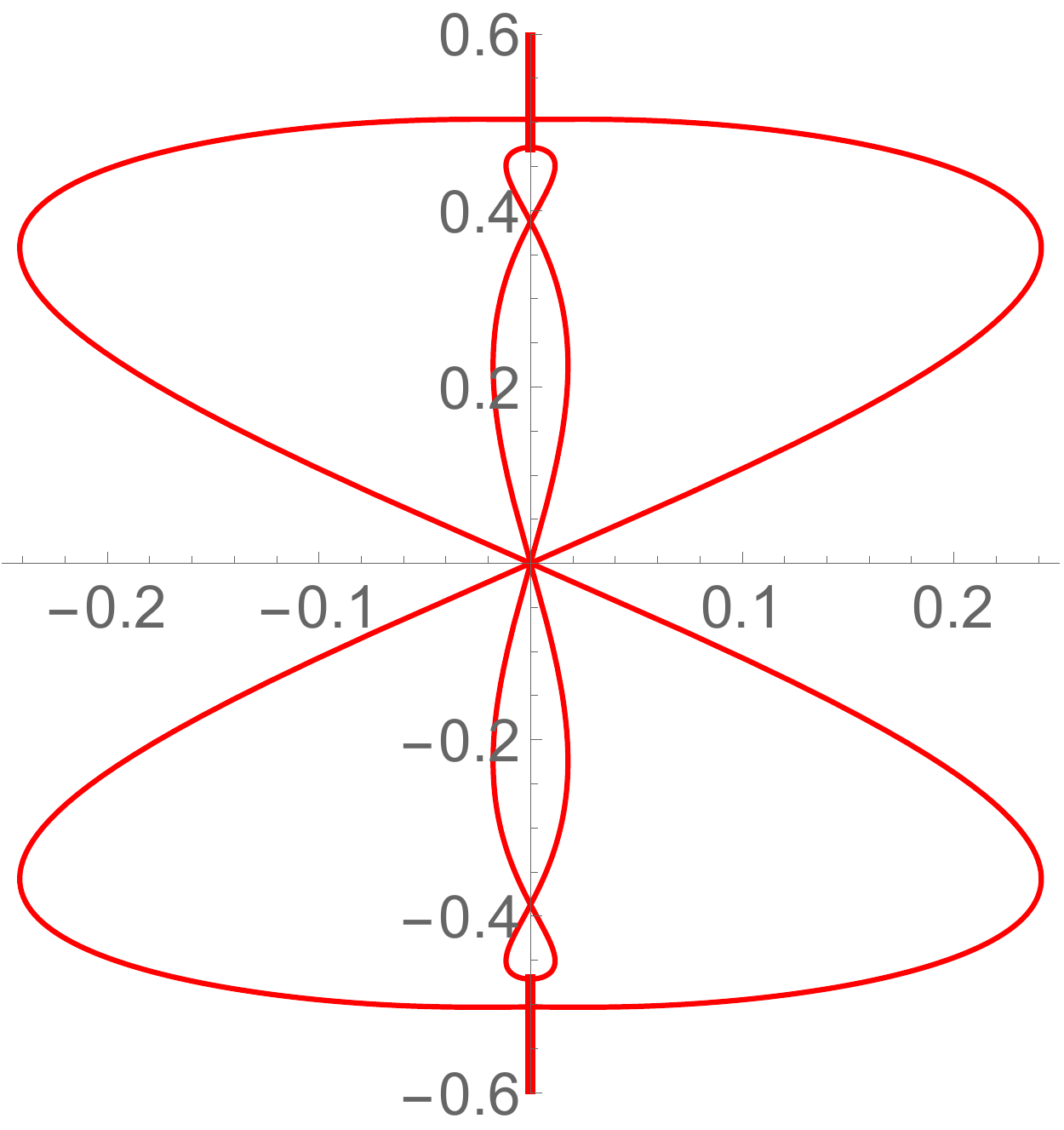} & \includegraphics[width=36mm]{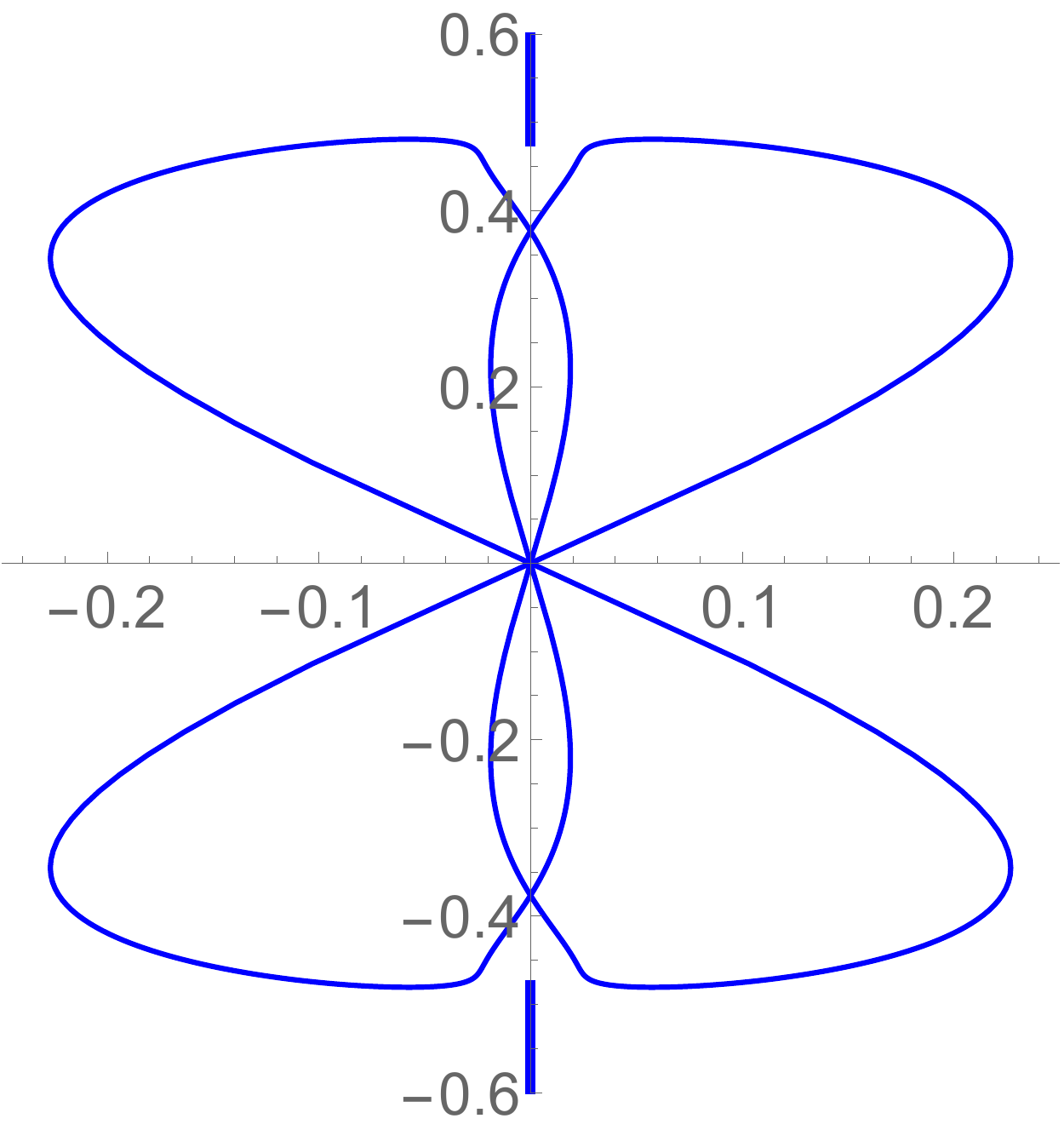} \\
(1a) & (1b) & (1c) & (1d)  \\[4pt]
\includegraphics[width=36mm]{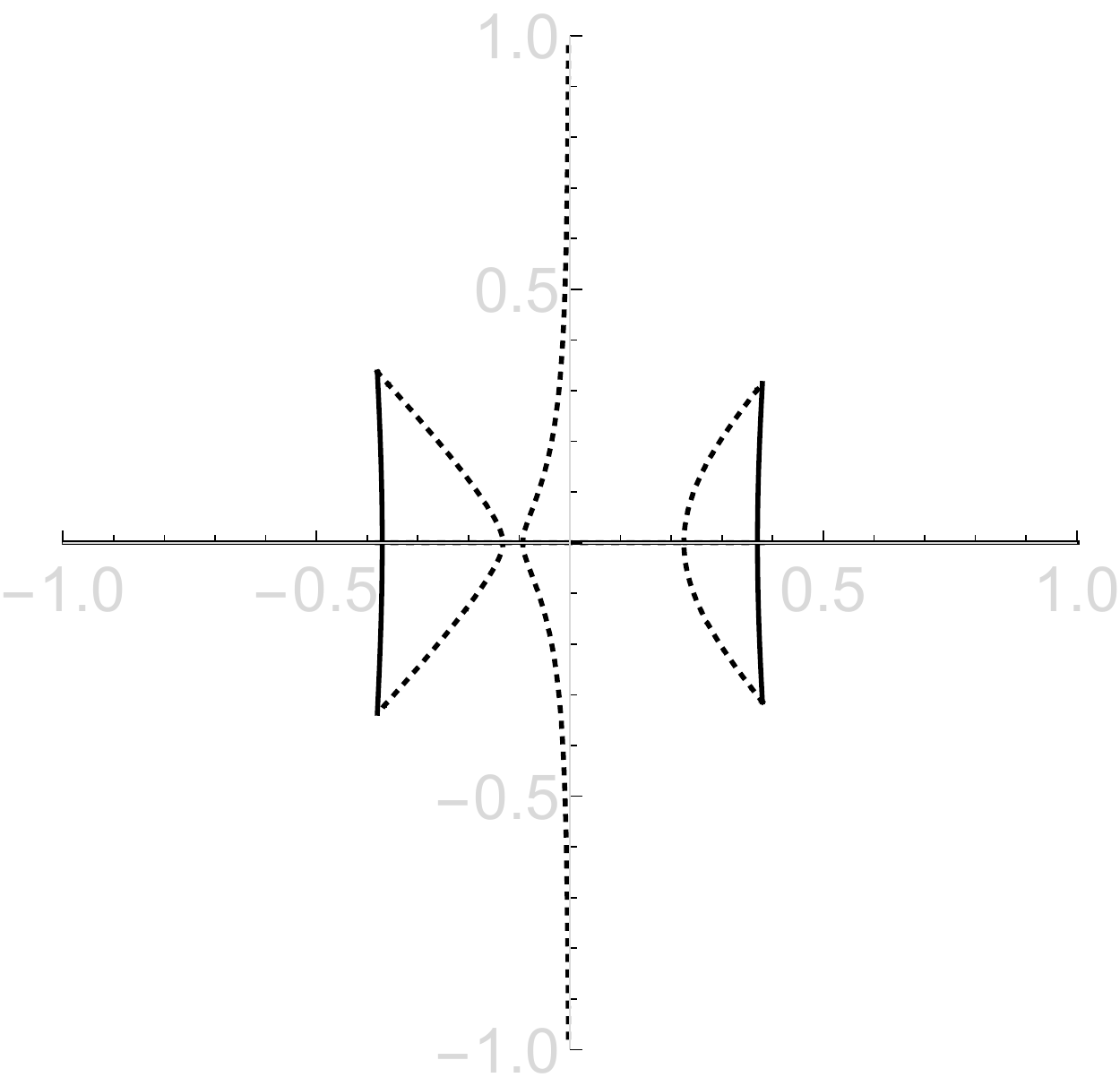} & \includegraphics[width=36mm]{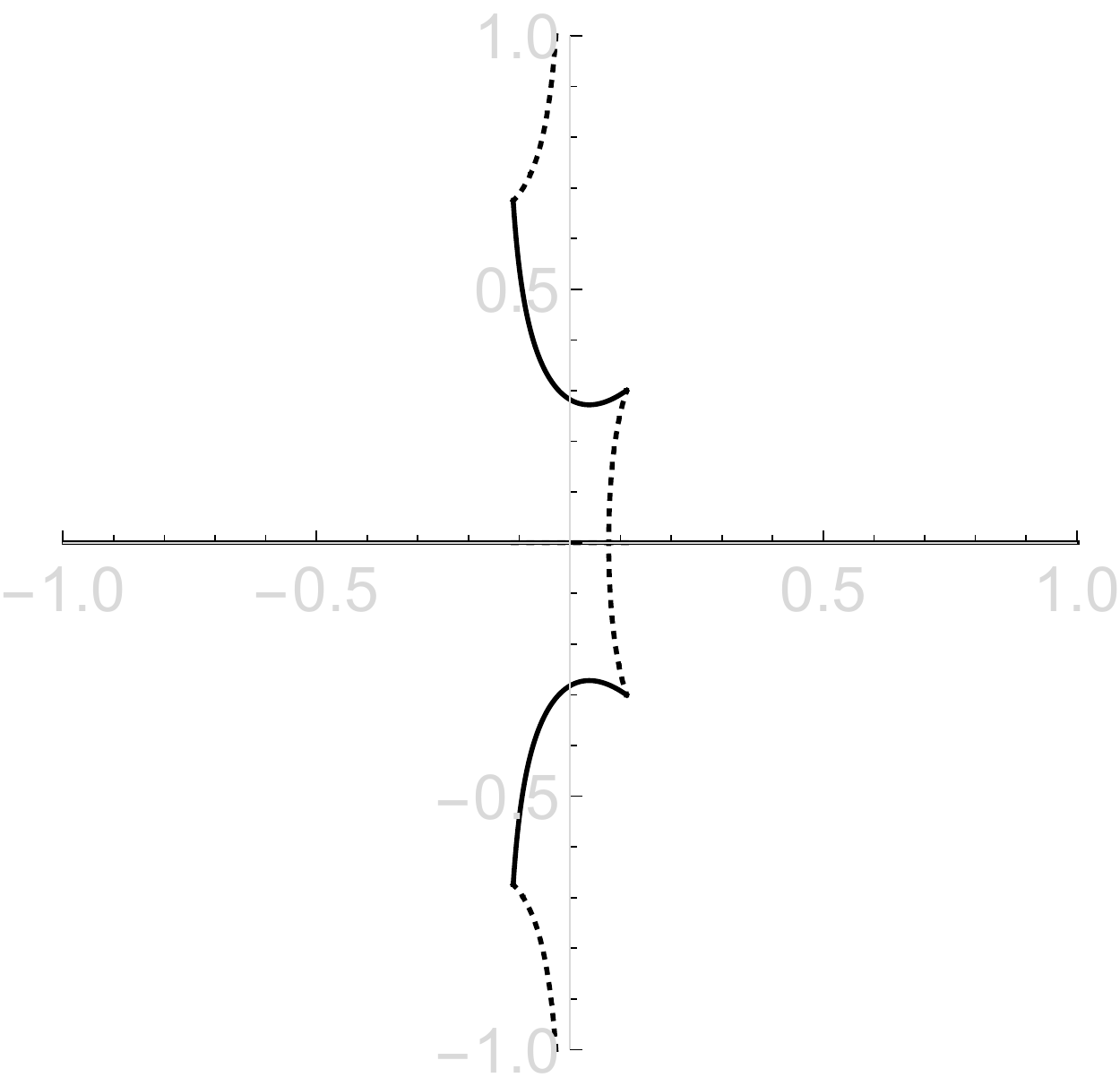} & \includegraphics[width=36mm]{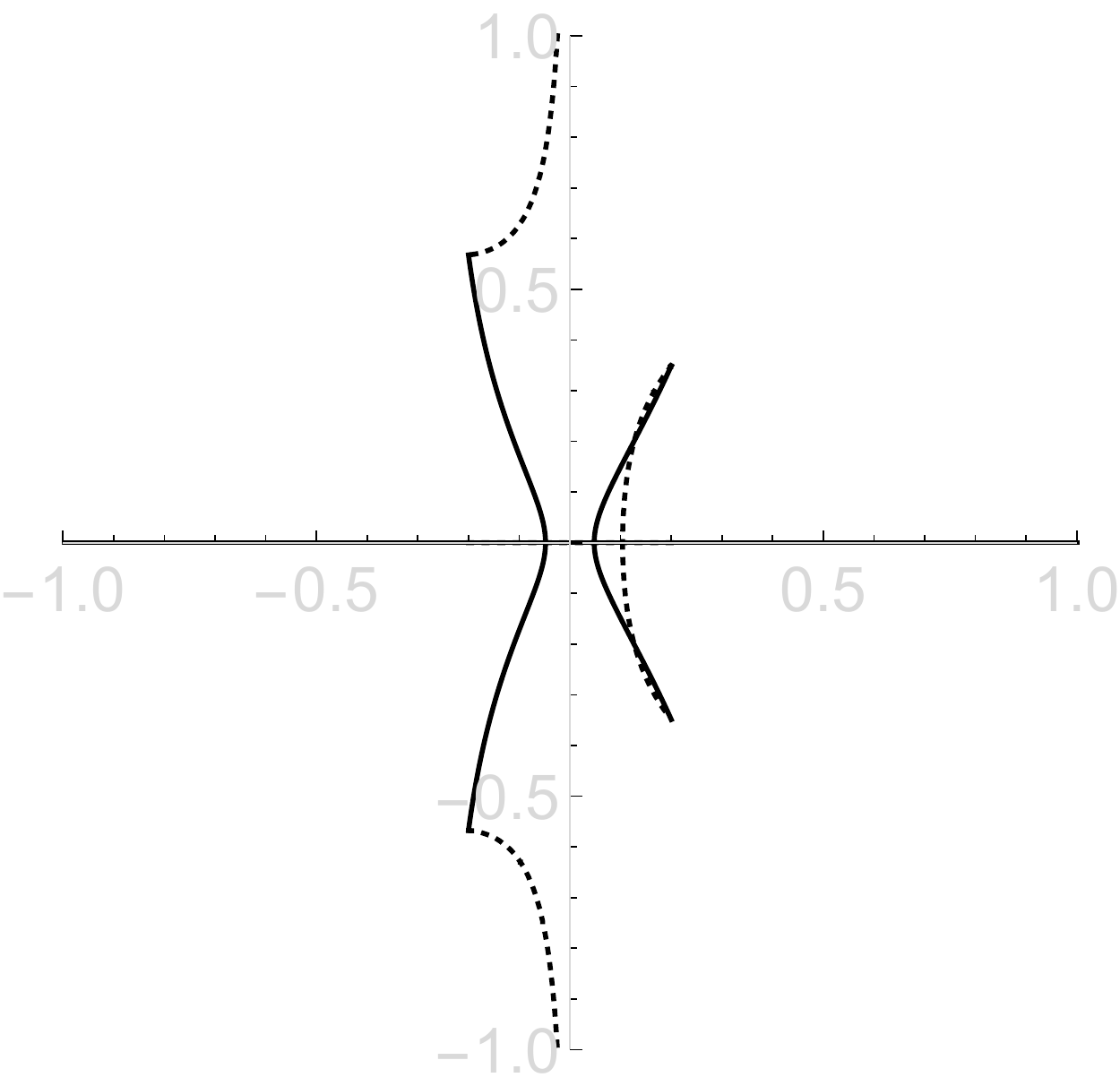} &  \includegraphics[width=36mm]{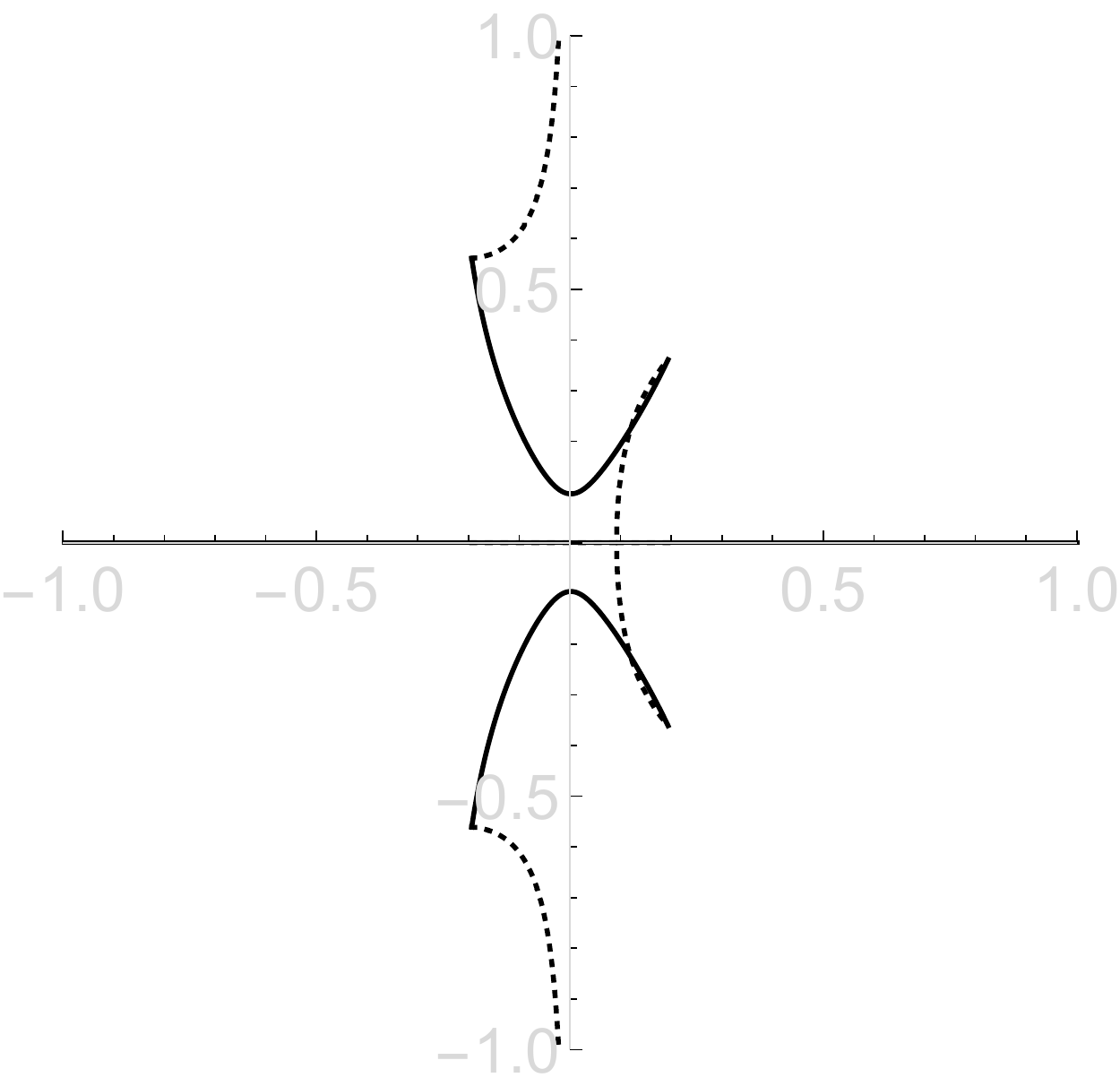} \\
(2a) & (2b) & (2c) & (2d)
\end{tabular}
\caption{(1) $\sigma_{\mathcal{L}}$ for the nontrivial-phase cases and (2) the corresponding $\sigma_L$ spectra (solid lines), values for which $\textrm{Re}\left(\Omega(\zeta)\right)=0$ (dotted). In (1), color corresponds to location in Figure \ref{allcases}. (a) Double-figure 8 solution, $(k,b)=(0.65,0.423);$ (b) non-self-intersecting butterfly solution, $(k,b)=(0.9,0.95);$ (c) triple-figure 8 solution, $(k,b)=(0.89,0.84);$ (d) self-intersecting butterfly solution, $(k,b)=(0.9,0.85)$.}
\label{ntpspectrum}
\end{figure}

To better examine this we look at the tangent vector field to the level curve (\ref{intcondI}). If we let $\zeta=\zeta_r+i \zeta_r$, then
\beq \label{Icond} I(\zeta)=I(\zeta_r+i \zeta_i) =  -2 i (\zeta_r+i \zeta_i) K(k) \pm 2 \left(\zeta_w(\alpha)K(k)-\left(E(k)-\frac{1}{3}\left(2-k^2\right) K(k)\right)\alpha \right).\eeq
The level curve $\left\{\zeta\in\mathbb{C}:\textrm{Re}\left[I(\zeta)\right] =0\right\},$ is exactly the condition for $\zeta\in\sigma_L$. Taking derivatives with respect to $\zeta_r$ and $\zeta_i$ gives a normal vector field to the level curves of the general condition  $\textrm{Re}\left[I(\zeta)\right] =C$ for any constant $C$, specifically, the normal vector is given by
$$ \left(\frac{\textrm{d}\textrm{Re} \left[I(\zeta_r+i \zeta_i)\right]}{\textrm{d} \zeta_r},\frac{\textrm{d} \textrm{Re}\left[I(\zeta_r+i \zeta_i)\right]}{\textrm{d} \zeta_i} \right).$$
Thus, the tangent vector field is
$$ \left(-\frac{\textrm{d} \textrm{Re}\left[I(\zeta_r+i \zeta_i)\right]}{\textrm{d} \zeta_i},\frac{\textrm{d}\textrm{Re} \left[I(\zeta_r+i \zeta_i)\right]}{\textrm{d} \zeta_r} \right).$$
By applying the chain rule and using the fact that $\textrm{Re}[i z] = -\textrm{Im}[z],$ we have that the tangent vector field to the level curves is
$$ \left(\textrm{Im}\left[ \frac{\textrm{d} I}{\textrm{d}\zeta} \right],\textrm{Re} \left[ \frac{\textrm{d} I}{\textrm{d} \zeta}\right] \right).$$
Substituting this back into (\ref{derivintcond}), the tangent vectors are
\beq \label{tanvfield} \left(\textrm{Im}\left[ \frac{2 E(k)-\left(1+b-k^2+4 \zeta^2\right)K(k)}{2 \Omega(\zeta)} \right],\textrm{Re} \left[\frac{2 E(k)-\left(1+b-k^2+4 \zeta^2\right)K(k)}{2 \Omega(\zeta)}\right] \right). \eeq
Thus, given a point in the $\sigma_L$ spectrum (lying on the 0 level curve of $\textrm{Re}\left[I(\zeta)\right]$), we can follow the tangent vector field to find other points in the $\sigma_L$ spectrum.
\subsection{Stokes wave case}
Applying this idea to the Stokes wave case, we see that generically
$$\zeta_c=\frac{\sqrt{1-b}}{2},\,\frac{\sqrt{1-b}}{2}, - \frac{\sqrt{1-b}}{2}\pm i \sqrt{b},$$
 \textit{i.e.}, there is a double root on the real axis and two conjugate roots. Following level curves we see that
\beq \label{zeta-stokes} \forall \zeta_i\in [-\sqrt{b},\sqrt{b}], \;\;\; -\frac{\sqrt{1-b}}{2}+i \zeta_i\in \sigma_L. \eeq
Substituting this into (\ref{OmegacondW}), we find that the $\sigma_{\mathcal{L}}$ spectrum for Stokes waves is given parametrically as a single-covered figure 8:
\beq \label{stokes-param} \lambda = \pm \left( 2\sqrt{b \zeta_i^2-\zeta_i^4}+2i \,\textrm{sgn}(\zeta_i)\sqrt{(1-b)(b-\zeta_i^2)}\right) \textrm{ for } \zeta_i \in [-\sqrt{b},\sqrt{b}]. \eeq
Plots of the $\sigma_L$ and the $\sigma_{\mathcal{L}}$ spectra are seen in Figure \ref{boundaryspectrum}(a) for $k=0,\,b=0.08$.
\subsection{dn case}
Similarly, in the $\dn$ case we find that
\beq \label{dn-zeta} \left[-\frac{1+\sqrt{1-k^2}}{2} i,-\frac{1-\sqrt{1-k^2}}{2} i\right]\cup \left[\frac{1-\sqrt{1-k^2}}{2} i,\frac{1+\sqrt{1-k^2}}{2} i\right]\in \sigma_L, \eeq
where $[\cdot,\cdot]$ corresponds to the straight line segment between its two endpoints.
Mapping this back to $\sigma_{\mathcal{L}}$ via (\ref{OmegacondW}), we find that there is a quadruple covering of the real axis
\beq \label{dn-real} \left[-\sqrt{1-k^2},\sqrt{1-k^2}\right]\in \sigma_{\mathcal{L}}. \eeq
Representative plots of these spectrum are seen in Figure \ref{boundaryspectrum}(b). This corrects a typo in \cite{K03}, and confirms the conjecture made in \cite{De7}.
\subsection{cn case}
For the $\cn$ case, less is known explicitly. Representative plots of the $\sigma_L$ spectrum are shown in Figure \ref{boundaryspectrum}(2c,2d). In both cases we have a quadrafold symmetry. The distinguishing factor between the two cases in (c) and (d) is whether or not $\overline{\sigma_L \setminus \mathbb{R}}$ leaving $\zeta_c$ crosses the real axis or the imaginary axis. Examining (\ref{tanvfield}) on the real axis we can determine the condition for a vertical tangent to occur. This happens when
\beq \label{CNtop} \zeta = \pm \frac{\sqrt{2 E(k)-K(k)}}{2 \sqrt{K(k)}}. \eeq
Equating $\zeta = 0,$ we solve for $k$ such that the vertical tangent occurs at the origin. With $2 E(k^*)-K(k^*) =0,$ we find that $k^*\approx 0.908909$.
This gives two cases: if $k<k^*$ then $\overline{\sigma_L \setminus \mathbb{R}}$ crosses the real axis, and if $k>k^*$ then $\overline{\sigma_L \setminus \mathbb{R}}$ crosses the imaginary axis. When $k<k^*$ we know the crossing of the real axis occurs when $\zeta$ satisfies (\ref{CNtop}). Mapping this back to $\sigma_{\mathcal{L}}$, we see that this point corresponds to the top (or bottom) of the figure 8
\beq \label{CNtopval} \lambda = \pm i \frac{\sqrt{(1-k^2)K^2(k)-2(1-k^2)E(k)K(k)+E^2(k)}}{K(k)}. \eeq
For all $k<k^*$ the figure 8 is pierced by the covering on the imaginary axis as seen in Figure \ref{boundaryspectrum}(1c), but as $k\rightarrow k^*,$ (\ref{CNtopval}) approaches $\pm i/2$ which is the extent of the covering on the imaginary axis as seen in Section \ref{imagaxis}. Thus for $k>k^*$, the figure 8 is no longer pierced by $\sigma_{\mathcal{L}}\cap i \mathbb{R}$, as is the case in Figure \ref{boundaryspectrum}(1d).

\subsection{Nontrival-phase cases}

Plots of generic cases of the $\sigma_L$ spectrum are seen in Figure \ref{ntpspectrum}(2a-d). The idea of whether $\overline{\sigma_L \setminus \mathbb{R}}$ crosses the real or imaginary axis still applies. The same analysis as above yields conditions on when $\zeta$ crosses the real axis.
We find that when
\beq \label{gentop} \zeta = \pm \frac{\sqrt{2 E(k)-K(k)-(b-k^2) K(k)}}{2 \sqrt{K(k)}}, \eeq
$\overline{\sigma_L \setminus \mathbb{R}}$ crosses the real axis. Mapping this back to $\sigma_{\mathcal{L}}$, this corresponds to the top (or bottom) of the outside figure 8:
\beq \label{largefigure8} \lambda = \pm i \sqrt{\frac{E^2(k)}{K^2(k)}-2 (1-b)\frac{E(k)}{K(k)}+(1-2b^2-k^2+2b k^2)+2c\sqrt{2 \frac{E(k)}{K(k)}+k^2-b-1}}, \eeq
and the top (or bottom) of the enclosed figure 8 (or triple-figure 8):
\beq \label{smallfigure8} \lambda = \pm i \sqrt{\frac{E^2(k)}{K^2(k)}-2 (1-b)\frac{E(k)}{K(k)}+(1-2b^2-k^2+2b k^2)-2c\sqrt{2 \frac{E(k)}{K(k)}+k^2-b-1}}.\eeq

We note that $\zeta < 0$ in (\ref{gentop}) corresponds to the top (or bottom) of the outside figure 8 in (\ref{largefigure8}), while $\zeta>0$ in (\ref{gentop}) corresponds to the top (or bottom) of the enclosed figure 8 in (\ref{smallfigure8}). This is difficult to show directly, but is seen from the more general result that for any $\zeta\in \mathbb{R},$ $\Omega^2(- |\zeta|)>\Omega^2(|\zeta|)$, which is derived directly from (\ref{Omegaw}).

Equating $\zeta=0$ in (\ref{gentop}) gives the condition for differentiating between figure 8's and butterflies:
\beq \label{splitbutterflies} b = -1+k^2+\frac{2 E(k)}{K(k)}. \eeq
If $b$ is less than this value the spectrum looks like in Figure \ref{ntpspectrum}(1a or 1c), and if $b$ is greater than this value the spectrum looks like in Figure \ref{ntpspectrum}(1b or 1d). In Figure \ref{ntpspecialspectrum}(a) we show the case when (\ref{splitbutterflies}) is exactly satisfied.

\begin{figure}
\centering
\begin{tabular}{cccc}
  \includegraphics[width=36mm]{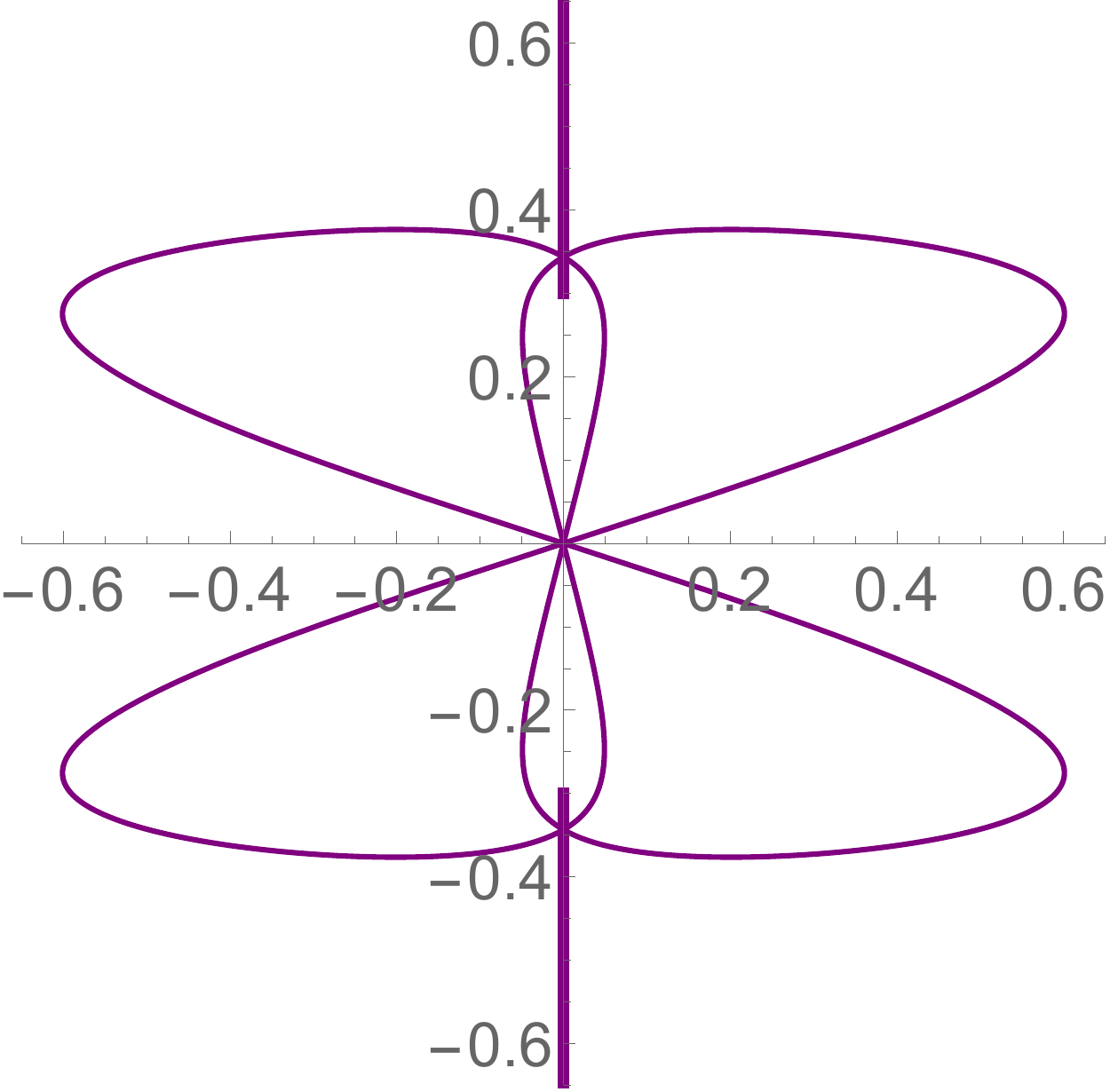} & \includegraphics[width=36mm]{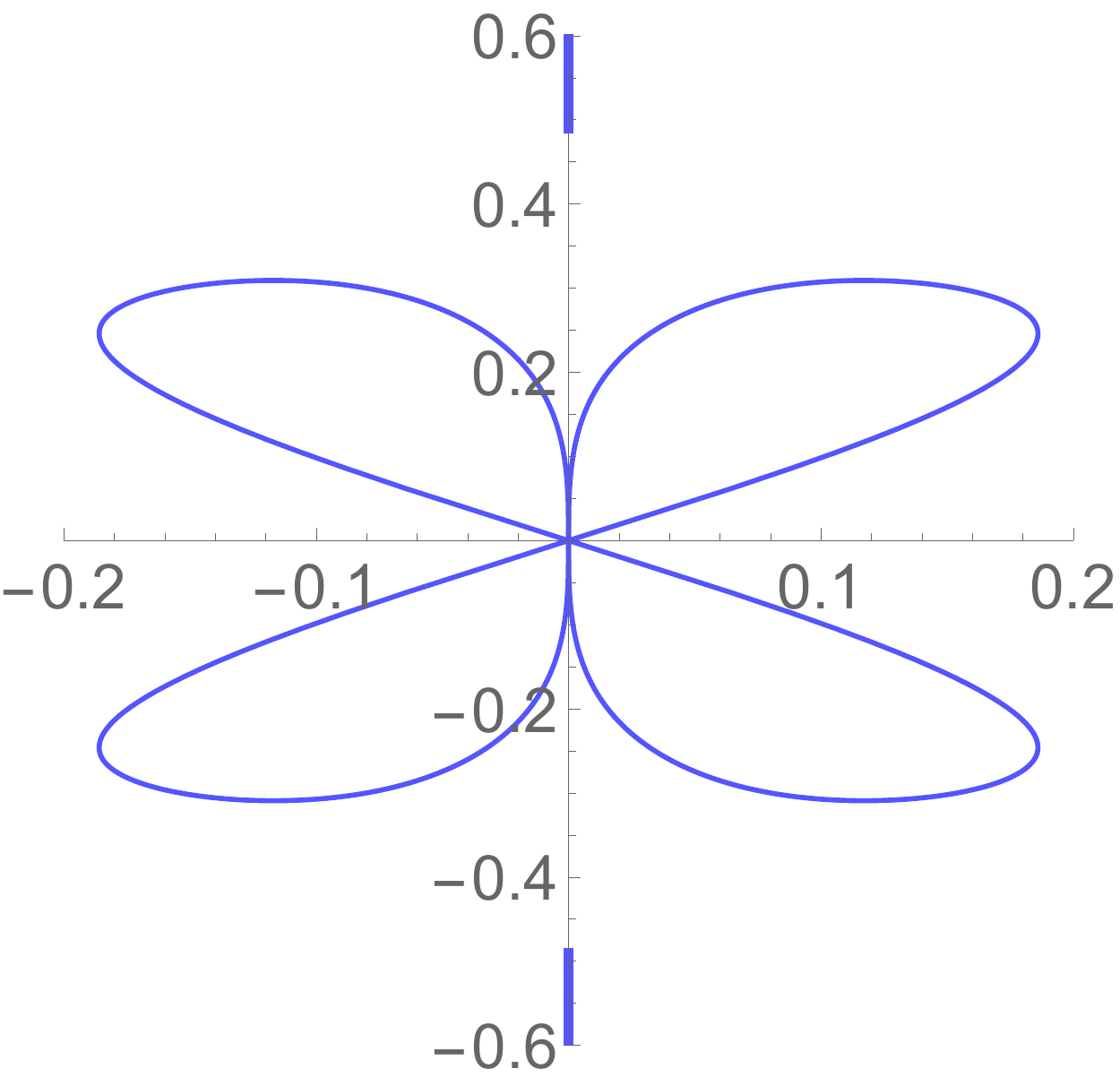} & \includegraphics[width=36mm]{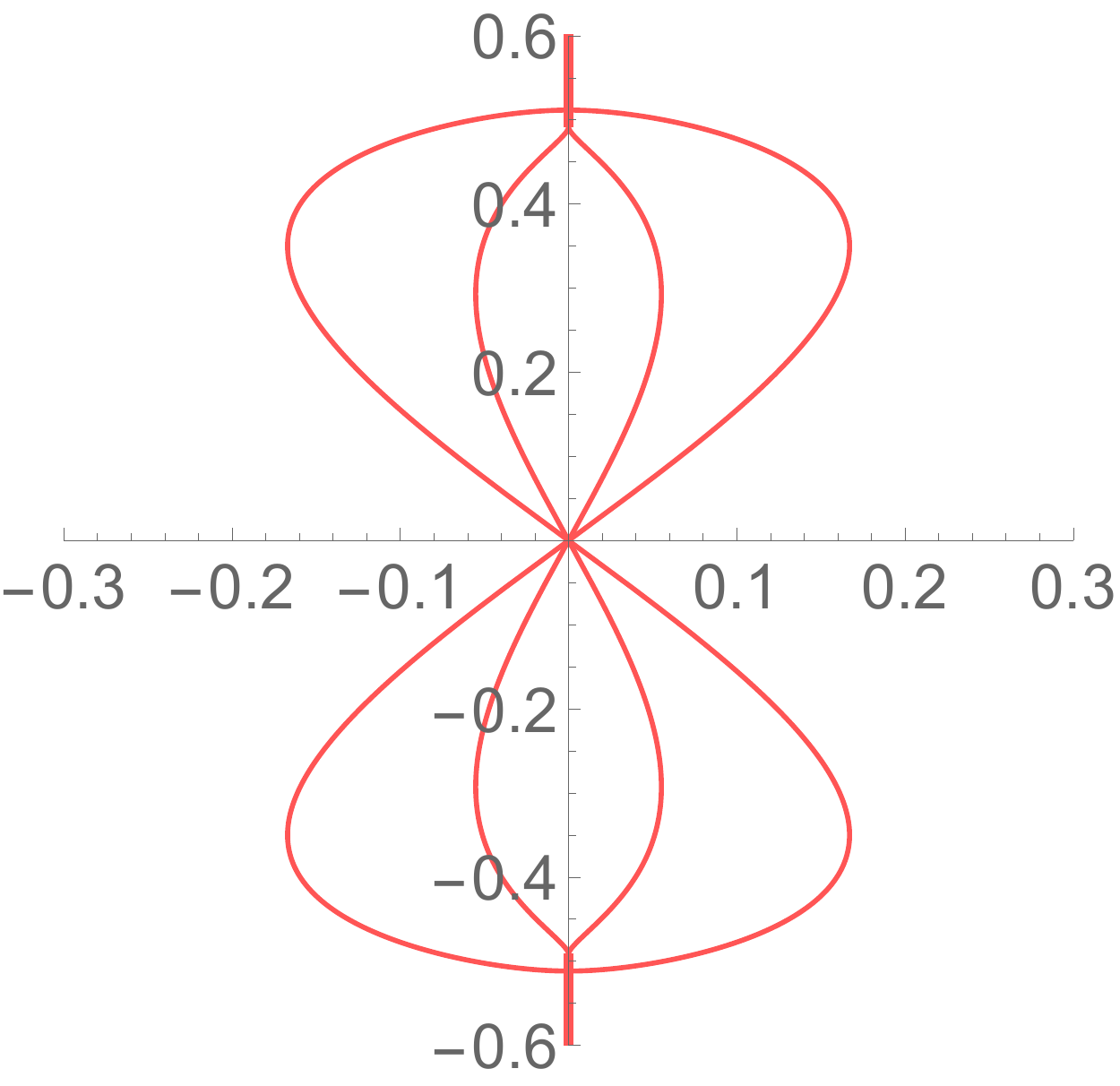} & \includegraphics[width=36mm]{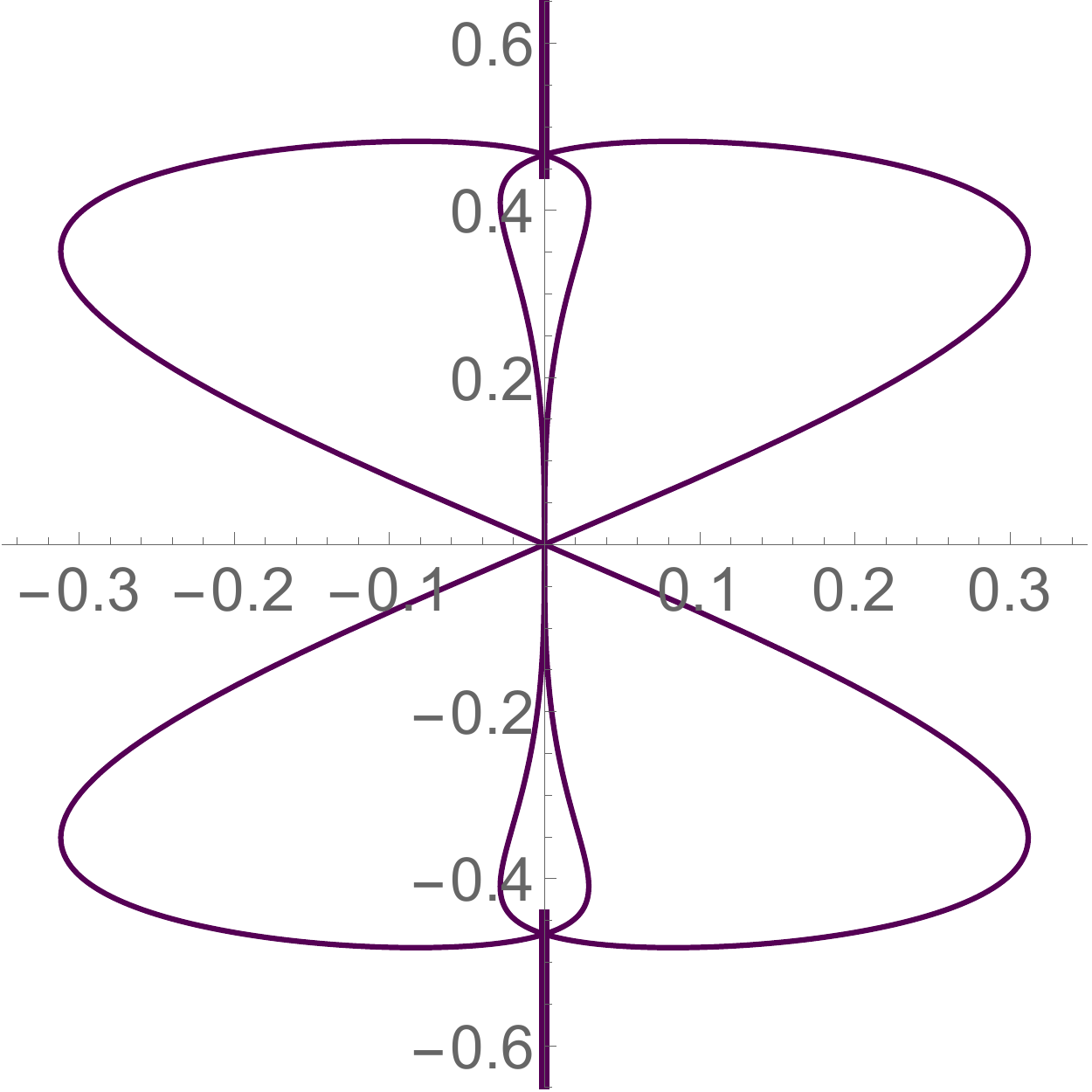} \\
(1a) & (1b) & (1c) & (1d)  \\[4pt]
  \includegraphics[width=36mm]{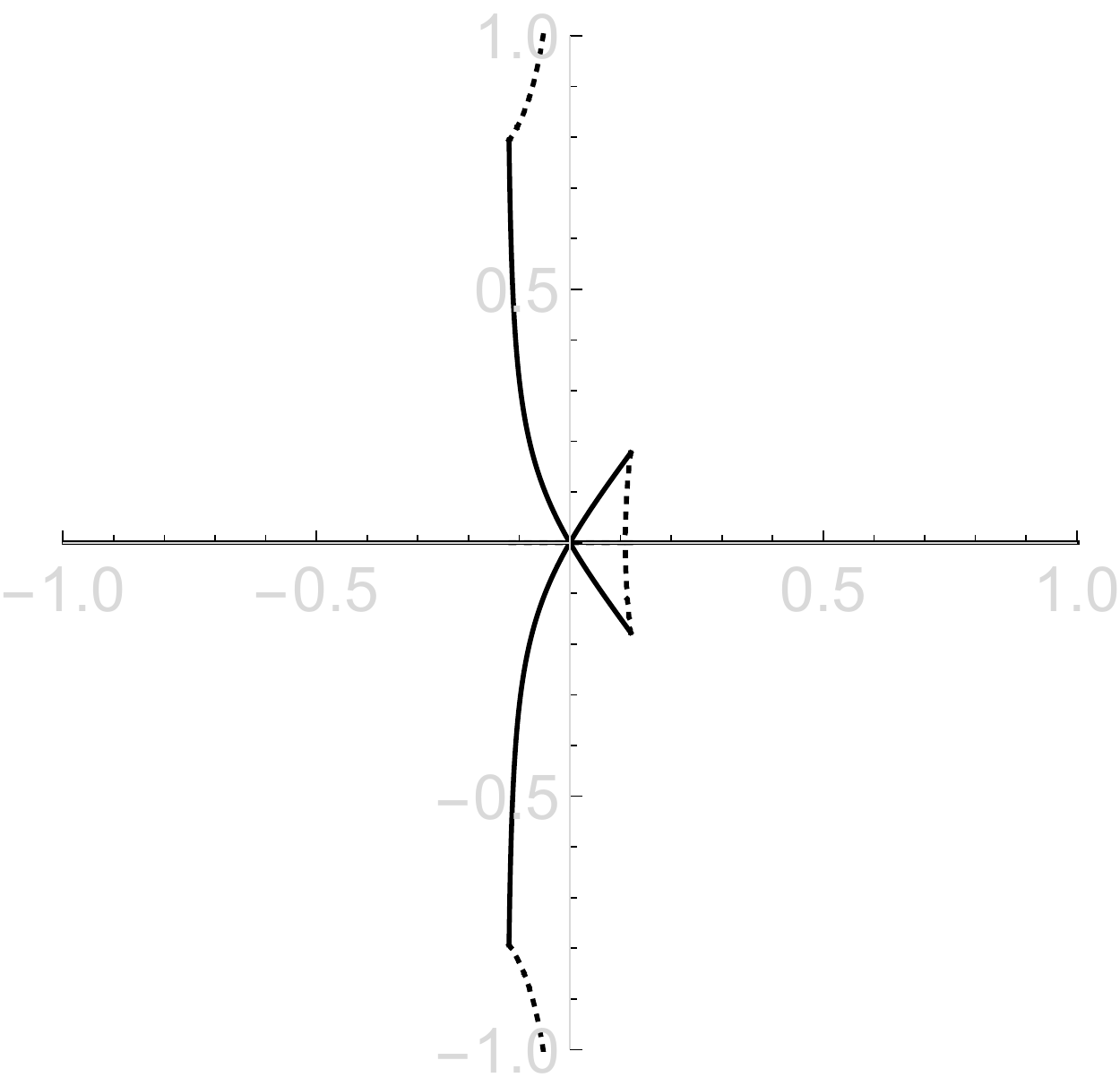} & \includegraphics[width=36mm]{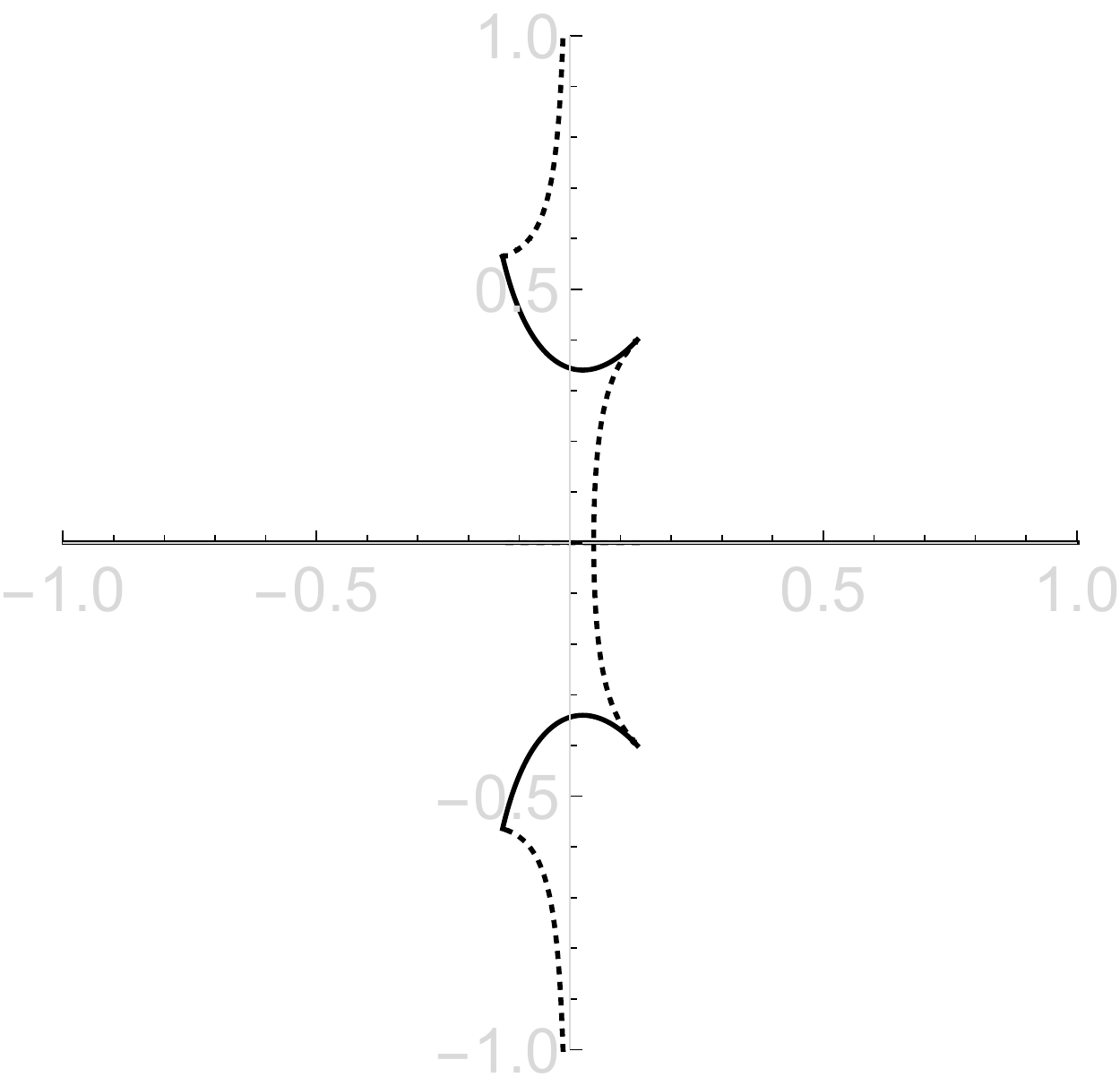} & \includegraphics[width=36mm]{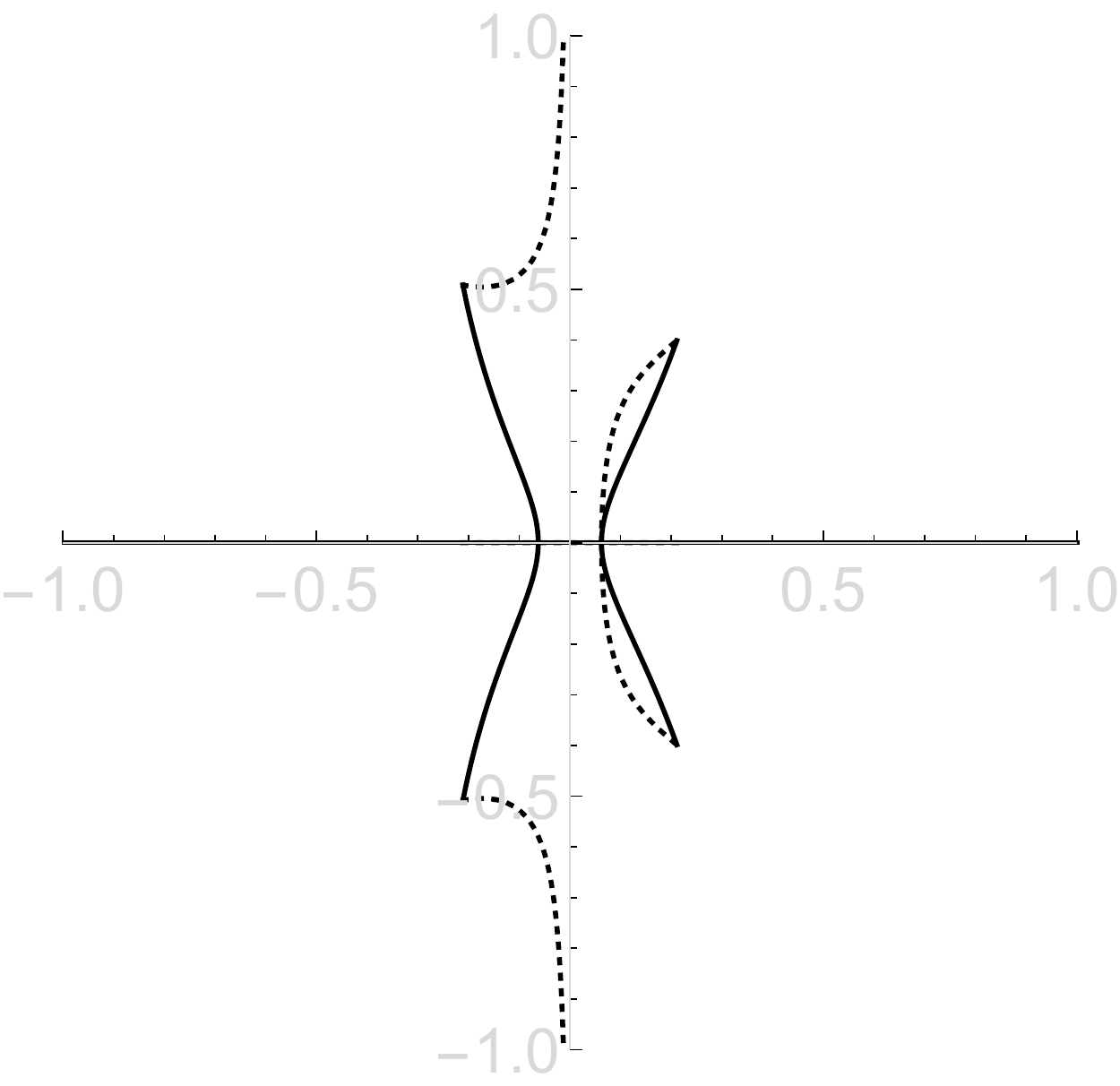} & \includegraphics[width=36mm]{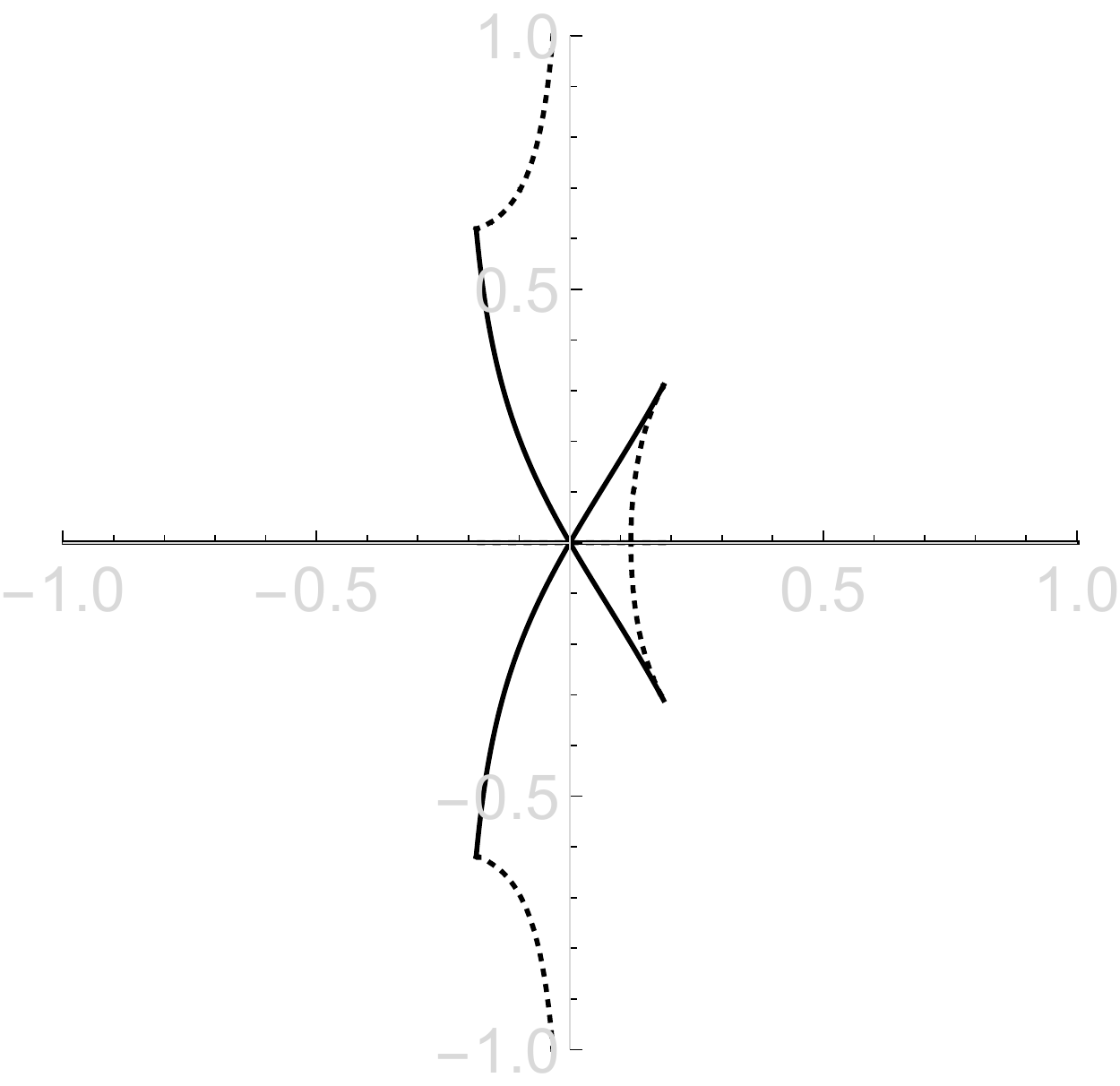} \\
(2a) & (2b) & (2c) & (2d)
\end{tabular}
\caption{(1) $\sigma_{\mathcal{L}}$ for the cases separating regions and (2) the corresponding $\sigma_L$ spectra (solid lines), values for which $\textrm{Re}\left(\Omega(\zeta)\right)=0$ (dotted). In (1), color corresponds to location in Figure \ref{allcases}. (a) Split between figure 8s and butterflies, $(k,b)=(0.75,0.942384);$ (b) split between self-intersecting and non-self-intersecting butterflies, $(k,b)=(0.95,0.929542);$ (c) lower split between figure 8 and triple-figure 8, $(k,b)=(0.9,0.821993);$ (d) four-corners point, $(k,b)=(0.876430,0.863399)$.}
\label{ntpspecialspectrum}
\end{figure}

Next we examine the slopes of the $\sigma_{\mathcal{L}}$ curves at the origin. Because $\sigma_{\mathcal{L}}=2 S_\Omega$ it suffices to examine the slopes for the set $S_\Omega$. We let $\Omega = \Omega_r+i \Omega_i,$ and we consider $\zeta_i$ as a function of $\zeta_r$ so that $\Omega\left(\zeta_r,\zeta_i(\zeta_r)\right)$. Applying the chain rule we have that the slope at any point in the set $S_\Omega$ is
\beq \label{derivO} \frac{ \textrm{d}\Omega_i}{\textrm{d} \Omega_r} = \frac{\textrm{d}\Omega_i/\textrm{d}\zeta_r}{\textrm{d}\Omega_r/\textrm{d}\zeta_r} = \frac{ \frac{\textrm{d}\Omega_i}{\textrm{d} \zeta_r}+\frac{\textrm{d} \Omega_i}{\textrm{d} \zeta_i} \frac{ \textrm{d} \zeta_i}{\textrm{d} \zeta_r}}{\frac{\textrm{d} \Omega_r}{\textrm{d}\zeta_r}+\frac{\textrm{d} \Omega_r}{\textrm{d} \zeta_i} \frac{\textrm{d} \zeta_i}{\textrm{d} \zeta_r}},\eeq
where
\beq \label{derivZ} \frac{\textrm{d} \zeta_i}{\textrm{d} \zeta_r} = - \frac{\textrm{d} \textrm{Re}(I)/\textrm{d}\zeta_r}{\textrm{d} \textrm{Re}(I)/\textrm{d} \zeta_i}. \eeq
We examine (\ref{derivO}) near where $\Omega=0$ and $\zeta=\zeta_c$.
The slopes around the origin are
\beq \label{slopes1}\frac{ \textrm{d}\Omega_i}{\textrm{d} \Omega_r}=\pm \frac{\left(2\sqrt{b(1-b)(b-k^2)}+\sqrt{1-b}(k^2-2b)\right)E(k)}{\left(\sqrt{b-k^2}-\sqrt{b}\right)\left(\left(b-1-\sqrt{b(b-k^2)}\right)E(k)+(1-k^2)K(k)\right)}, \eeq
\beq \label{slopes2}\frac{ \textrm{d}\Omega_i}{\textrm{d} \Omega_r}=\pm \frac{\left(2\sqrt{b(1-b)(b-k^2)}-\sqrt{1-b}(k^2-2b)\right)E(k)}{\left(\sqrt{b-k^2}+\sqrt{b}\right)\left(-\left(b-1-\sqrt{b(b-k^2)}\right)E(k)+(1-k^2)K(k)\right)}. \eeq
In the $\cn$ case ($b=k^2$) the slopes at the origin simplify to
\beq \frac{ \textrm{d}\Omega_i}{\textrm{d} \Omega_r}= \pm \frac{k E(k)}{\sqrt{1-k^2}(E(k)-K(k))}. \eeq
For the $\cn$ solutions, these slopes are always finite. This is not necessarily the case for nontrivial-phase solutions. Specifically, while the slopes in (\ref{slopes2}) are always finite, the slopes in (\ref{slopes1}) can be infinite if
\beq \label{topboundary} \left(b-1-\sqrt{b(b-k^2)}\right)E(k)+(1-k^2)K(k)=0. \eeq
Spectra corresponding to solutions for which this condition is satisfied are shown in in Figure \ref{ntpspecialspectrum}(b). The condition corresponds to the splitting between the two butterfly regions, as well as the upper splitting between the triple-figure 8 and the figure 8s regions. See Figure \ref{allcases}.
Further application of the chain rule can yield expressions for derivatives around the origin of any order, and the same technique can be applied around the top of the figure 8s. In doing this we can obtain Taylor series approximations of $\sigma_{\mathcal{L}}$ to any order.

Finally, an expression is obtained for the lower boundary of the triple-figure 8s and figure 8s regions. A representative example of this case is seen in Figure \ref{ntpspecialspectrum}(c). The boundary between these regions occurs at the bifurcation when $\sigma_{\mathcal{L}}\cap i\mathbb{R}$ and $\overline{\sigma_{\mathcal{L}}\setminus \i \mathcal{R}}$ have a threefold intersection, see Figure~\ref{triple8seq}(b). This occurs when
\beq \label{triplefig8cond} \textrm{Rt(b,k)}=\sqrt{\frac{2E(k)-K(k)-b K(k)+k^2 K(k)}{4K(k)}}, \eeq
where $\textrm{Rt(b,k)}$ is the smallest real root of the cubic equation
\beq -c +\left(-1+3b-k^2\right) Y+4 Y^3 = 0. \eeq
This is seen directly as the left-hand side of (\ref{triplefig8cond}) gives the point when $\textrm{Re}(\Omega)=0$ intersects the real axis and the right-hand side is (\ref{gentop}), the point where $\overline{\sigma_L\setminus \mathbb{R}}$ intersects the real axis.

\begin{figure}
\centering
\begin{tabular}{ccc}
  \includegraphics[height=36mm]{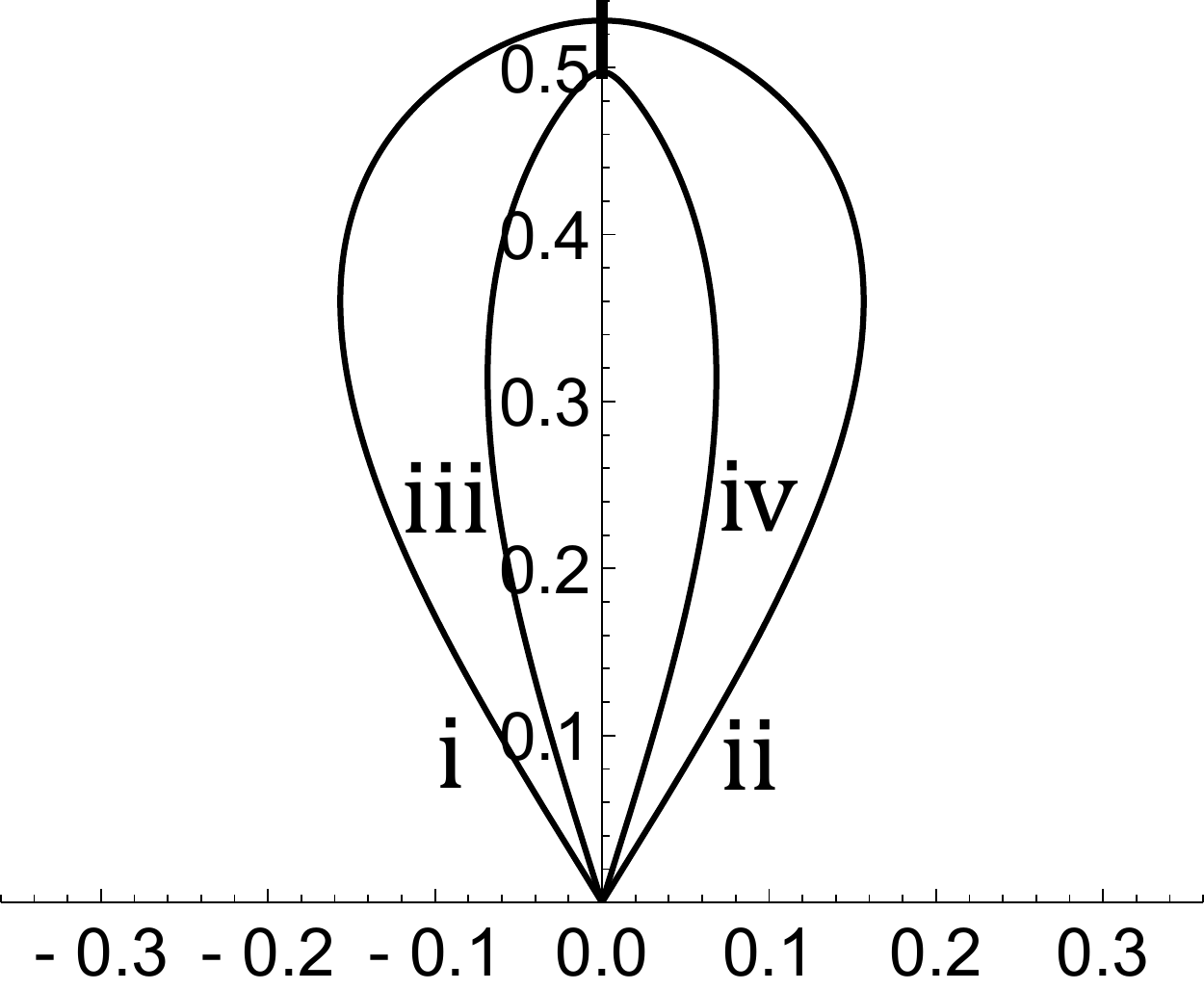} & \includegraphics[height=36mm]{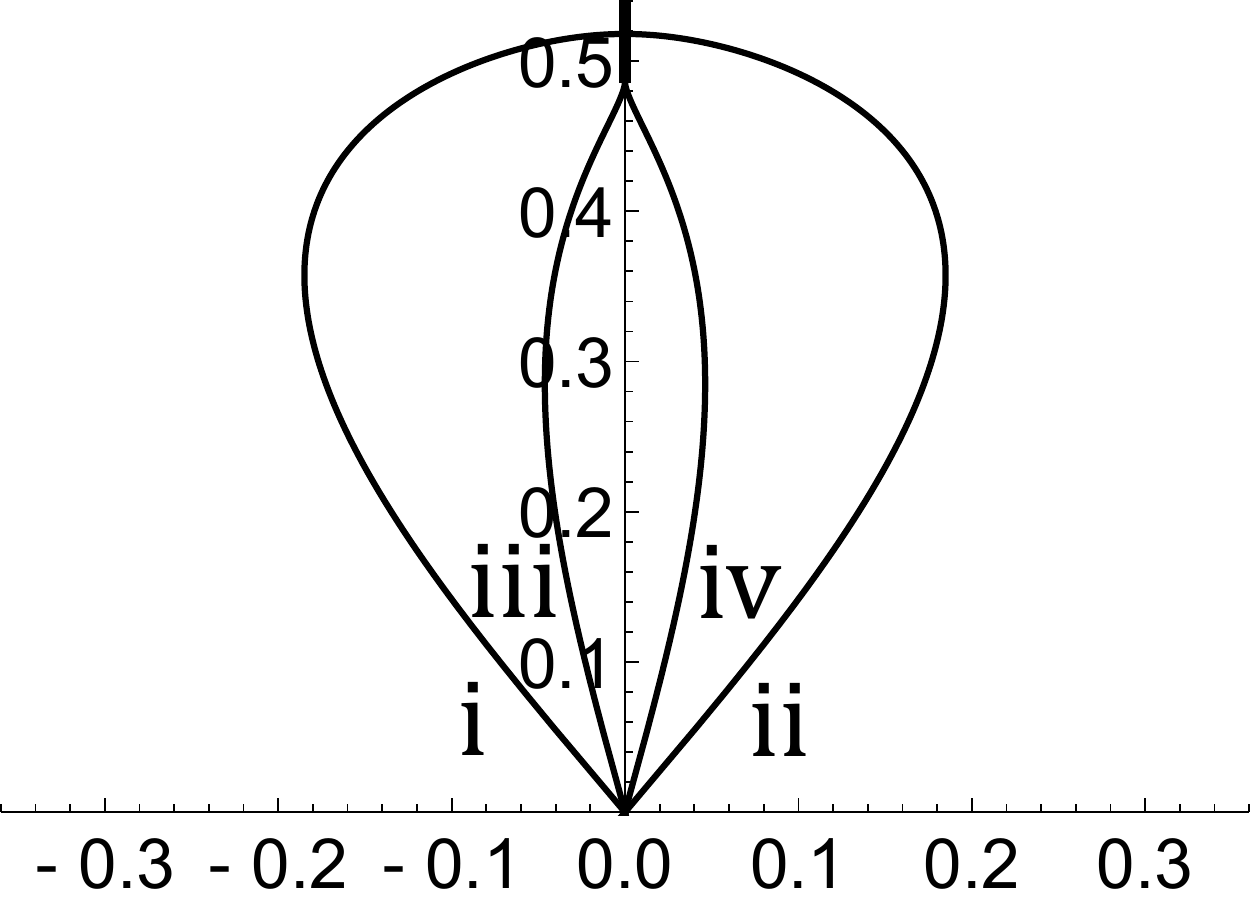} & \includegraphics[height=36mm]{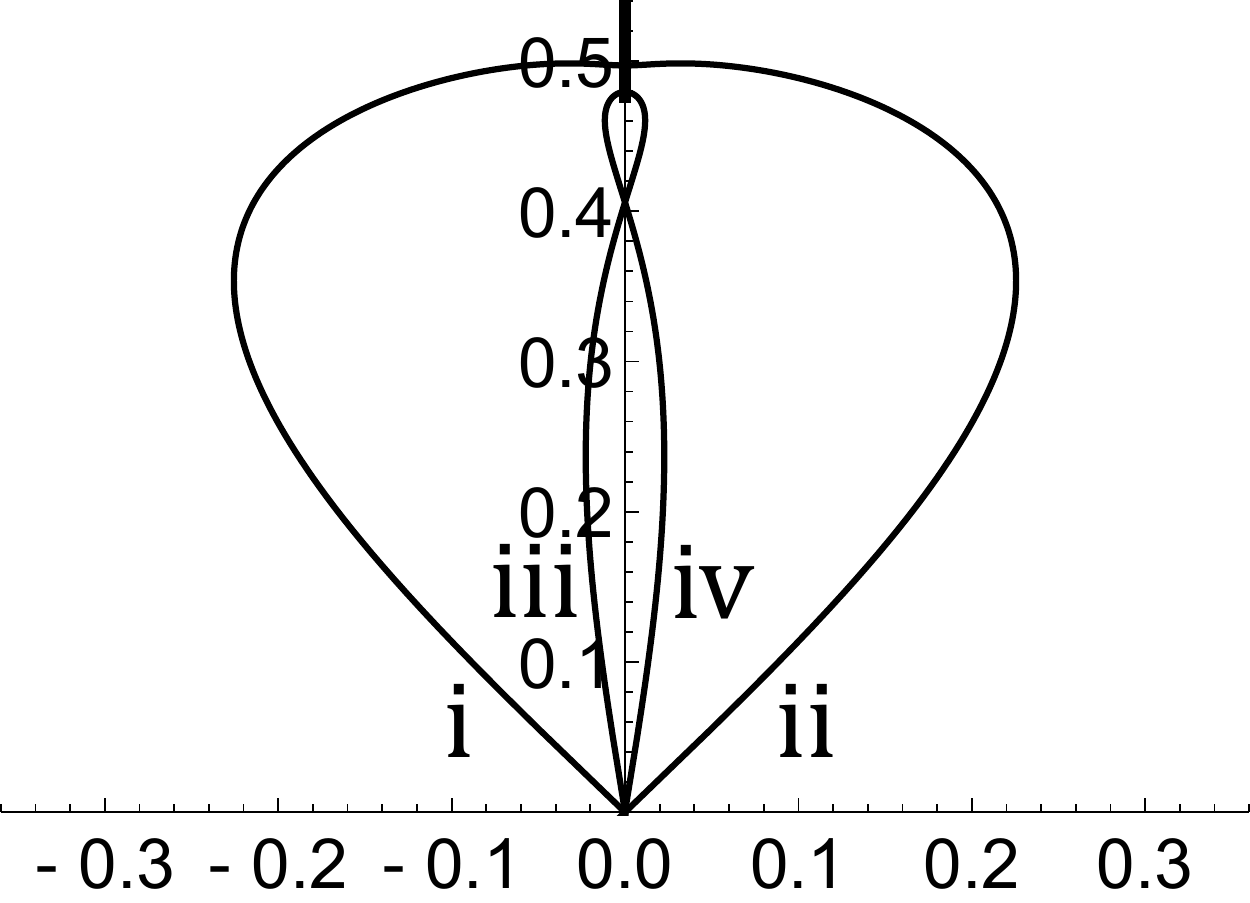} \\
(a) & (b) & (c)  \\[4pt]
  \includegraphics[height=36mm]{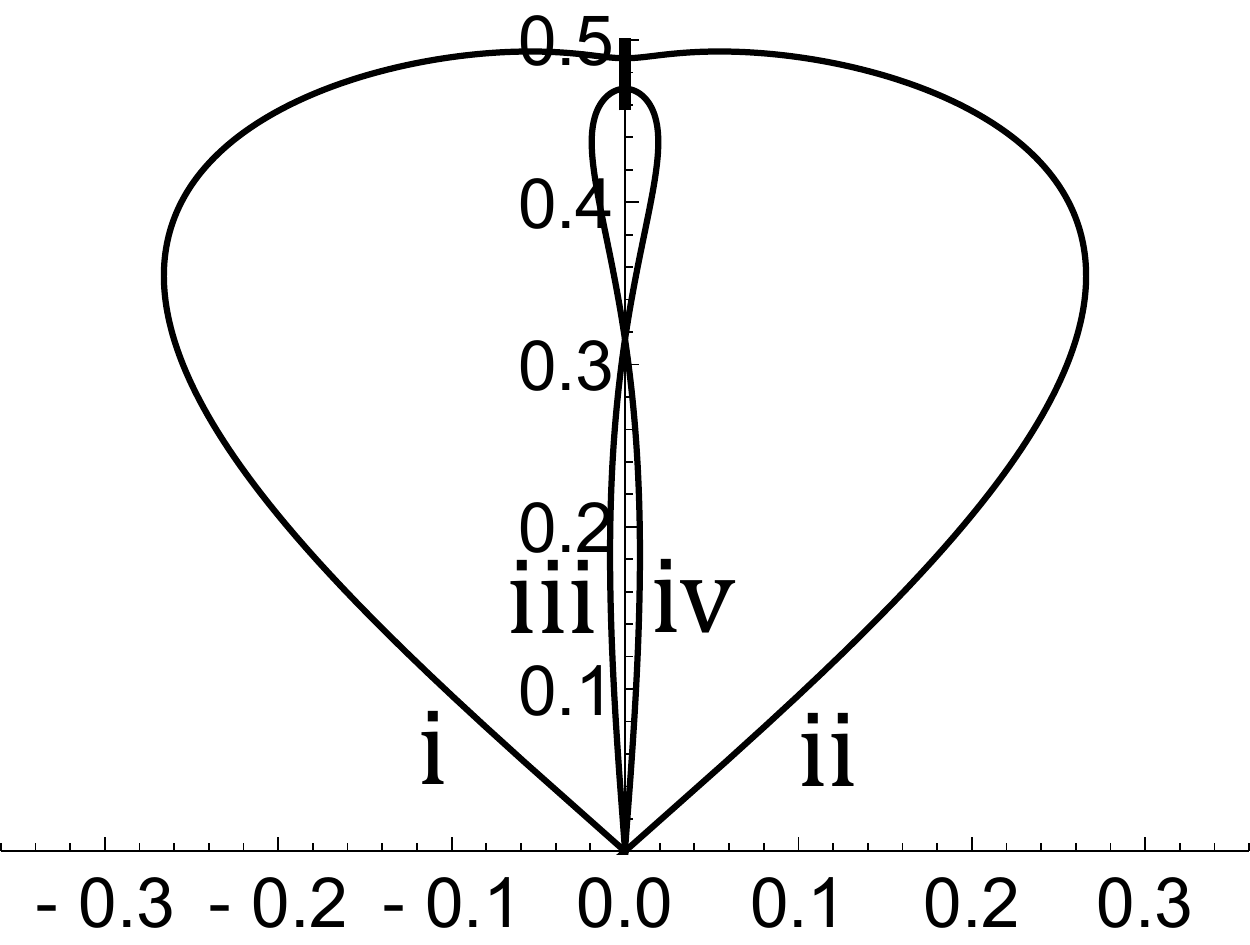} & \includegraphics[height=36mm]{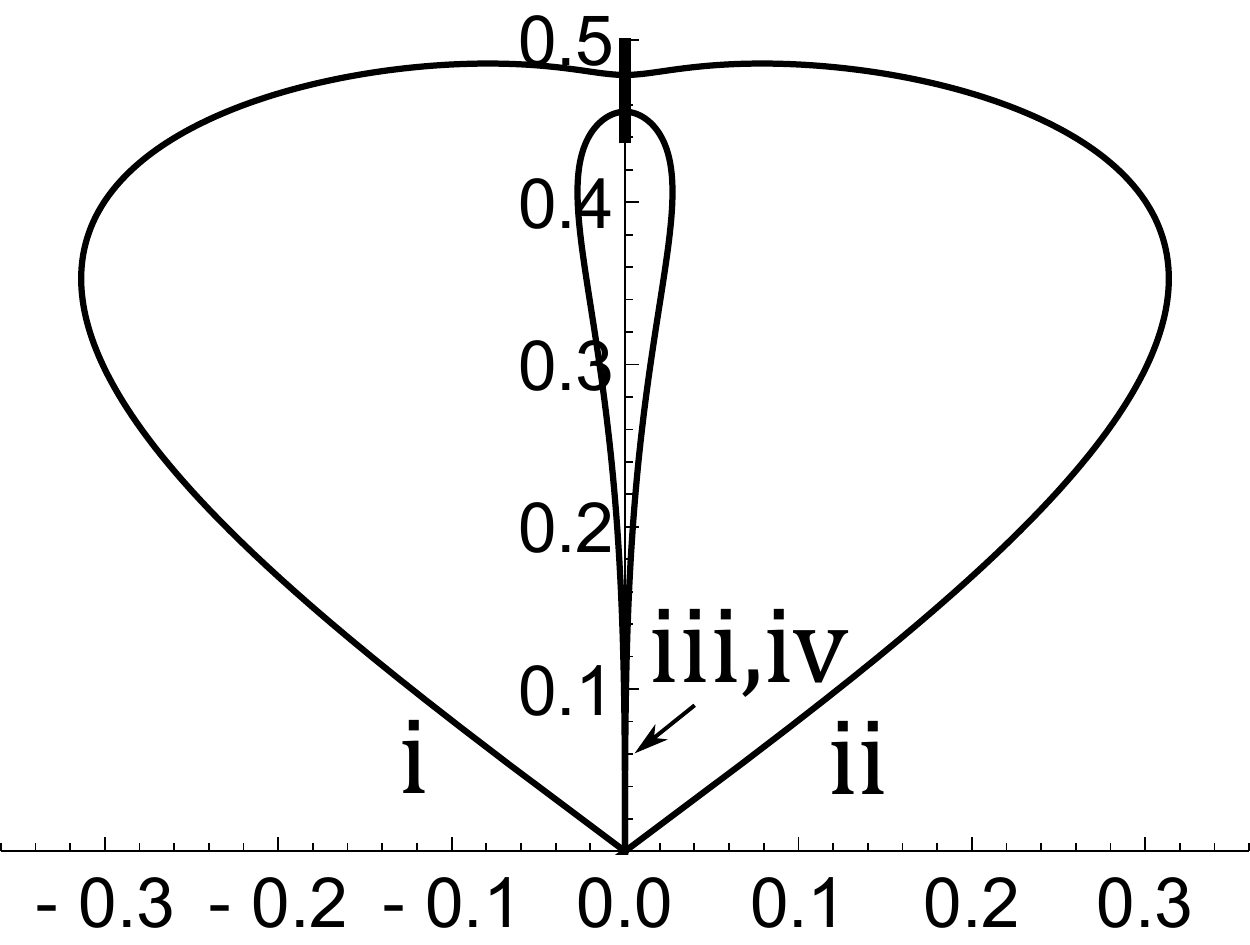} & \includegraphics[height=36mm]{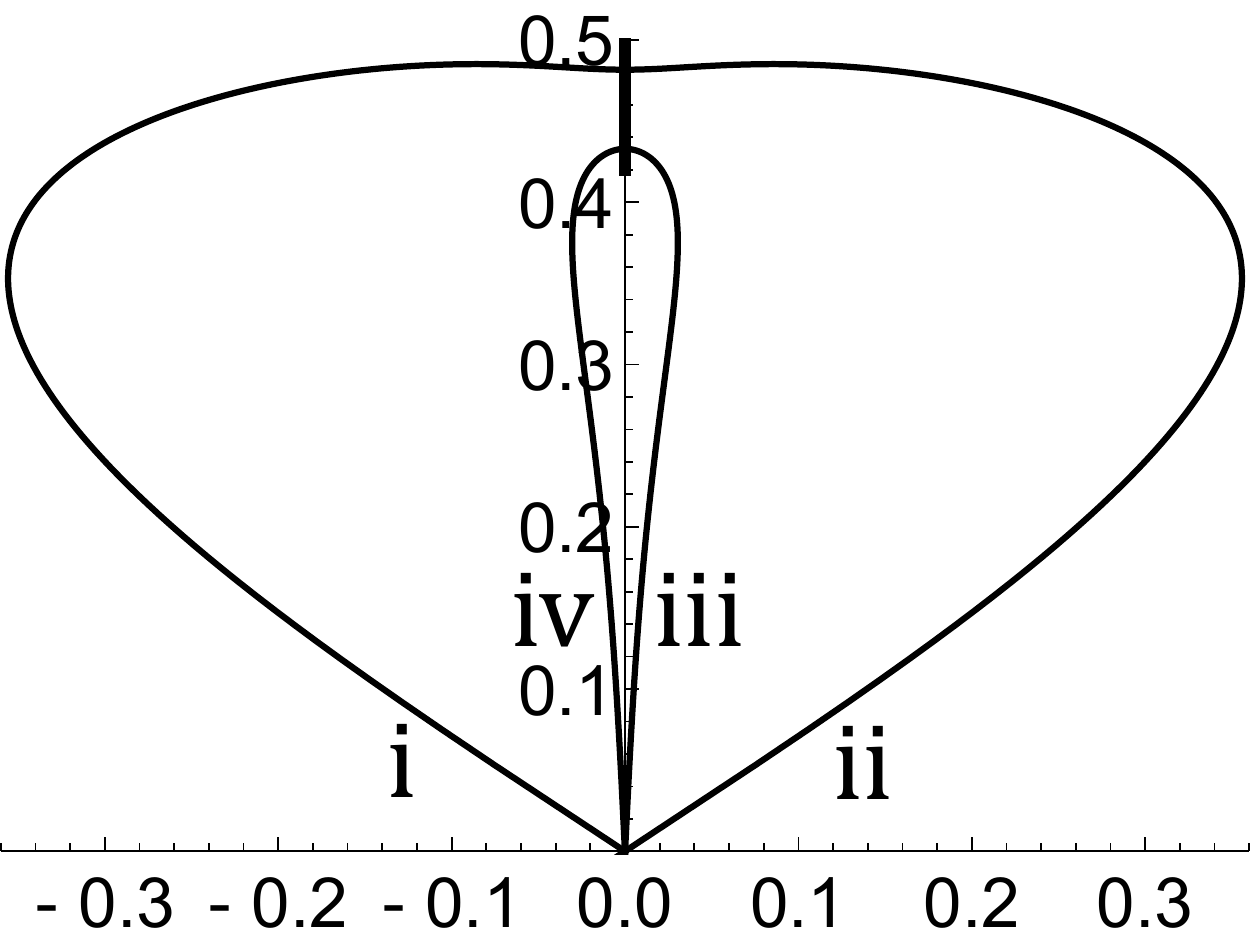} \\
(d) & (e) & (f)
\end{tabular}
\caption{(1) $\sigma_{\mathcal{L}}$ in the upper-half plane for a sequence of parameter values demonstrating the boundaries of the triple-figure 8 region.  (a) Two figure 8s, lower region, $(k,b)=(0.89,0.8);$ (b) lower boundary of triple-figure 8 region, $(k,b)=(0.895,0.819747),$ the enclosed figure 8 is not smooth at the top; (c) triple-figure 8 near lower boundary, $(k,b)=(0.895,0.84);$ (d) triple-figure 8 near upper boundary, $(k,b)=(0.887,0.85)$; (e) upper boundary of the triple-figure 8 region, $(k,b) =(0.875,0.862349);$ (f) Two figure 8s, upper region, $(k,b) = (0.86,0.87)$.}
\label{triple8seq}
\end{figure}

In Figure~\ref{triple8seq}, we plot $\sigma_{\mathcal{L}}\cap \mathbb{C}^+$. In $\mathbb{C}^+$ there are two lobes to the triple-figure 8, one near the origin and one away from the origin, see Figure~\ref{triple8seq}(c,d). For triple-figure 8s near the lower boundary of the region as in Figure~\ref{triple8seq}(c), the lobe of $\overline{\sigma_{\mathcal{L}}\setminus i \mathbb{R}}$ near the origin is larger than the lobe away from the origin. In contrast, for triple-figure 8s near the upper boundary of the region, see Figure~\ref{triple8seq}(d), the lobe of $\overline{\sigma_{\mathcal{L}}\setminus i \mathbb{R}}$ away from the origin is larger.

We also mention the four curves of $\overline{\sigma_{\mathcal{L}}\setminus i\mathbb{R}}$ near the origin which we label i, ii, iii and iv in Figure~\ref{triple8seq}. These curves give a distinguishing feature between regions I and II in Figure~\ref{allcases} both with two figure 8s. Specifically, the curves iii and iv of $\overline{\sigma_{\mathcal{L}}\setminus i\mathbb{R}}$ for the enclosed figure 8 near the origin switch places. This is seen from examining the slopes of these curves in (\ref{slopes1}) and also by comparing the relative positions of curves c and d in Figure~\ref{triple8seq}(a) and Figure~\ref{triple8seq}(f).

Lastly, we mention the four-corners point seen in Figure \ref{ntpspecialspectrum}(d). This point occurs at the intersection of (\ref{splitbutterflies}) and (\ref{topboundary}), the intersection of all four nontrivial-phase regions. At this point, $\overline{\sigma_{\mathcal{L}}\setminus i \mathbb{R}}$ has vertical tangents at the origin as well as a four-way intersection point on the imaginary axis corresponding to $\zeta=0$ in $\sigma_L$.

\section{Floquet theory and subharmonic perturbations}\label{subharmonic}

We examine $\sigma_{\mathcal{L}}$ using a Floquet parameter description. We use this to prove some spectral stability results with respect to perturbations of an integer multiple of the fundamental period of the solution, \textit{i.e.}, subharmonic perturbations.

Note that the solutions to the stationary problem (\ref{stationaryeqn}) are not periodic in general, as they may have a nontrivial phase. On the other hand, (\ref{spectralprob}) is a spectral problem with periodic coefficients since it depends only on $R(x)$.

We write the eigenfunctions from (\ref{spectralprob}) using a Floquet-Bloch decomposition
\beq \label{floquetdecom}
\left( \ba{c} U(x) \\ V(x) \ea \right) = e^{i \mu x} \left( \ba{c} \hat U (x)\\\hat V (x) \ea \right),~~ \hat U\left(x+ T(k)\right)=\hat U(x),~~ \hat V\left(x+ T(k)\right)=\hat V(x).
\eeq
with $\mu \in \left[-\pi/T(k),\pi/T(k)\right)$ \cite{De7,BDN}. Here $T(k)=2K(k)$ for all solutions, except $T(k)=4K(k)$ for the $\cn$ solution. From Floquet's Theorem \cite{De7}, all bounded solutions of (\ref{spectralprob}) are of this form, and our analysis includes perturbations of an arbitrary period. Specifically, $\mu = 2m\pi/T(k)$ for $m\in \mathbb{Z}$ corresponds to perturbations of the same period $T(k)$ of our solutions, and in general,
\beq\label{muperiods} \mu = \frac{2m\pi}{P T(k)},~~m,P\in \mathbb{Z},\eeq
corresponds to perturbations of period $P T(k)$. The choice of the specific range of $\mu$ is arbitrary, as long as it is of length $2\pi/T(k)$. For added clarity in this section, we plot some figures using the larger range $\left[-2\pi/T(k),2\pi/T(k)\right),$ before modding out, reducing the $\mu$ interval to $\left[-\pi/T(k),\pi/T(k)\right).$

In the previous sections $\sigma_{\mathcal{L}}$ is parameterized in terms of $\zeta$. We wish to parameterize $\sigma_{\mathcal{L}}$ in terms of $\mu$. We examine the $U$ eigenfunction from (\ref{floquetdecom}). From the periodicity of $\hat U$ we have
\beq e^{i \mu T(k)} = \frac{ U(x+T(k))}{U(x)}.\eeq
Using (\ref{eigenfunctions}), (\ref{varphidef}), and (\ref{gammacon}), we have
\beq e^{i \mu T(k)} =  \exp \left(-2\int_0^{T(k)} \frac{(A(x)-\Omega) \phi+B_x(x)+i \zeta B(x)}{B(x)}\textrm{d}x \right)\exp \left(i \theta(T(k))\right), \eeq
where we have used the periodicity properties
\beq A\left(x+T(k)\right)=A(x),~~B\left(x+T(k)\right)=B(x)e^{i\theta(T(k))},~~\theta\left(x+T(k)\right)=\theta(x)+\theta\left(T(k)\right).\eeq
Using (\ref{intcond4}),

\beq \label{muparametric} \mu(\zeta) = \frac{2 i I(\zeta)}{T(k)}+\frac{\theta\left(T(k)\right)}{T(k)}+\frac{2\pi n}{T(k)}, \eeq

where $I(\zeta)$ is given in (\ref{Icond}), $n\in \mathbb{Z}$, and
\beq \theta(T(k)) =
\begin{cases}
\int_0^{T(k)} \frac{\sqrt{b(1-b)(b-k^2)}}{b-k^2 \sn^2(y,k)}\textrm{d}y, &\mbox{if } b>k^2, \\
\pi, & \mbox{if }b=k^2,
\end{cases}\eeq
from (\ref{thetaeqn}).
Equation (\ref{muparametric}) relates the two spectral parameters $\zeta$ and $\mu$.

\begin{figure}
\centering
\begin{tabular}{cccc}
  \includegraphics[width=36mm]{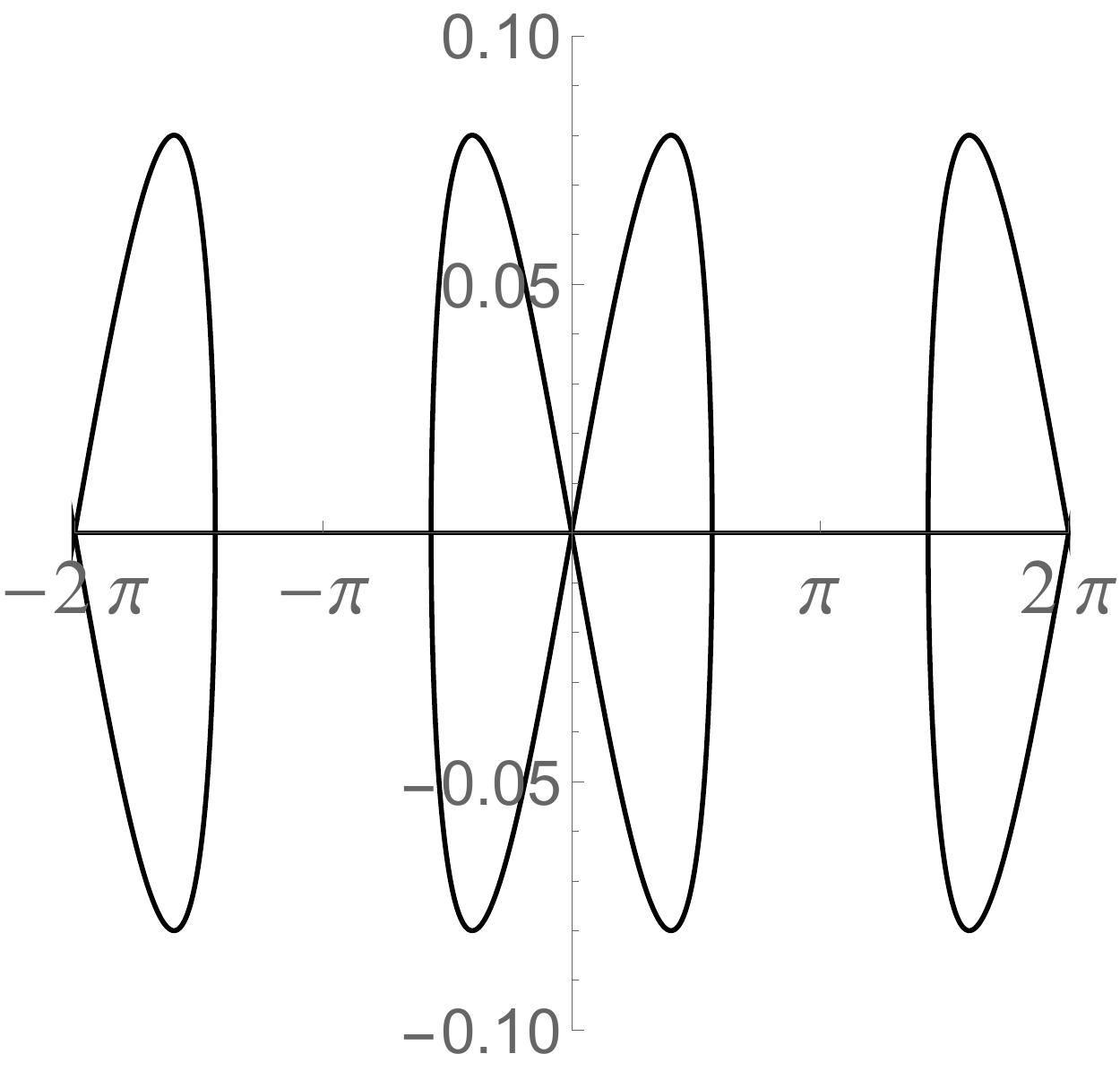} & \includegraphics[width=36mm]{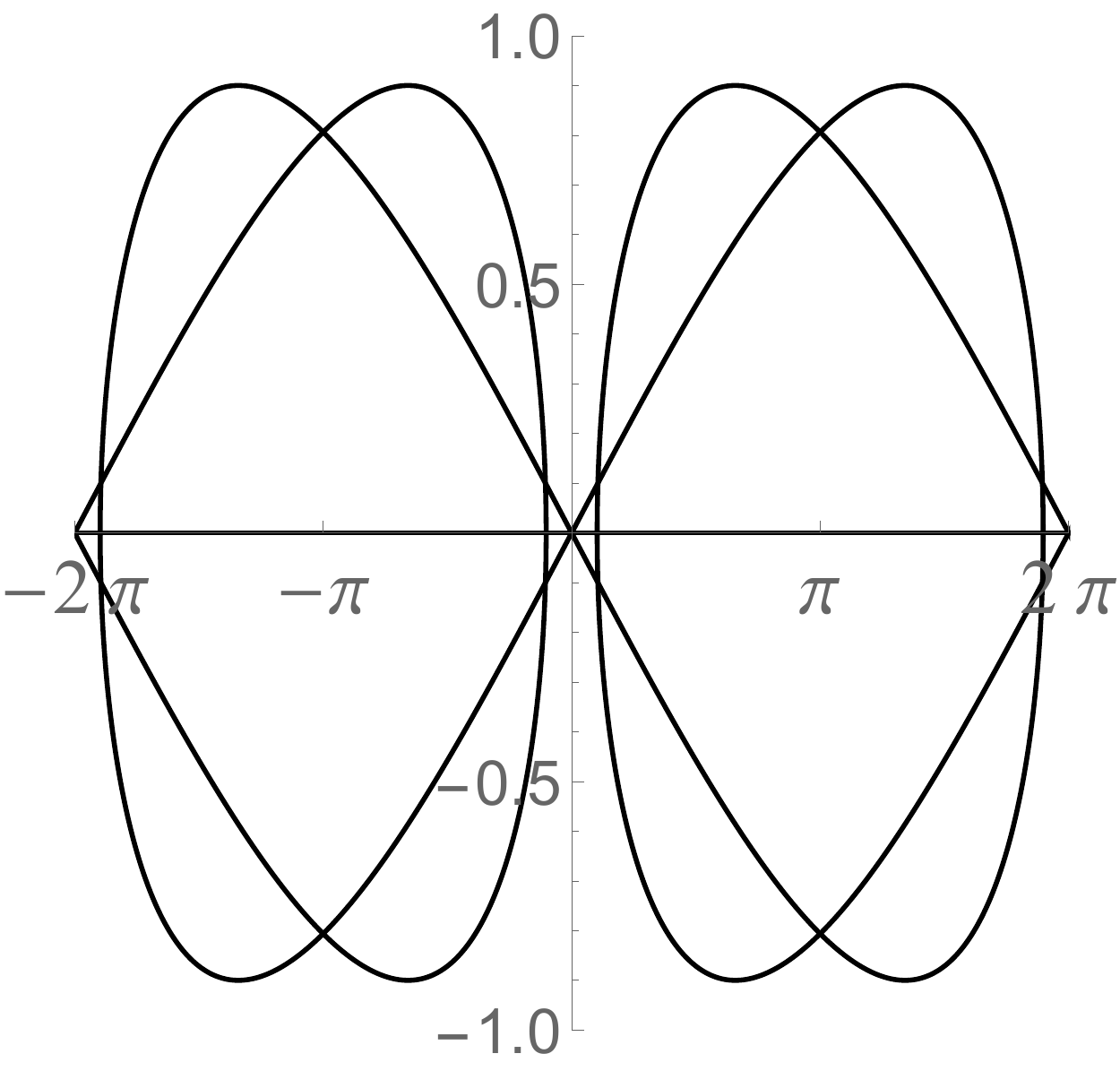} & \includegraphics[width=36mm]{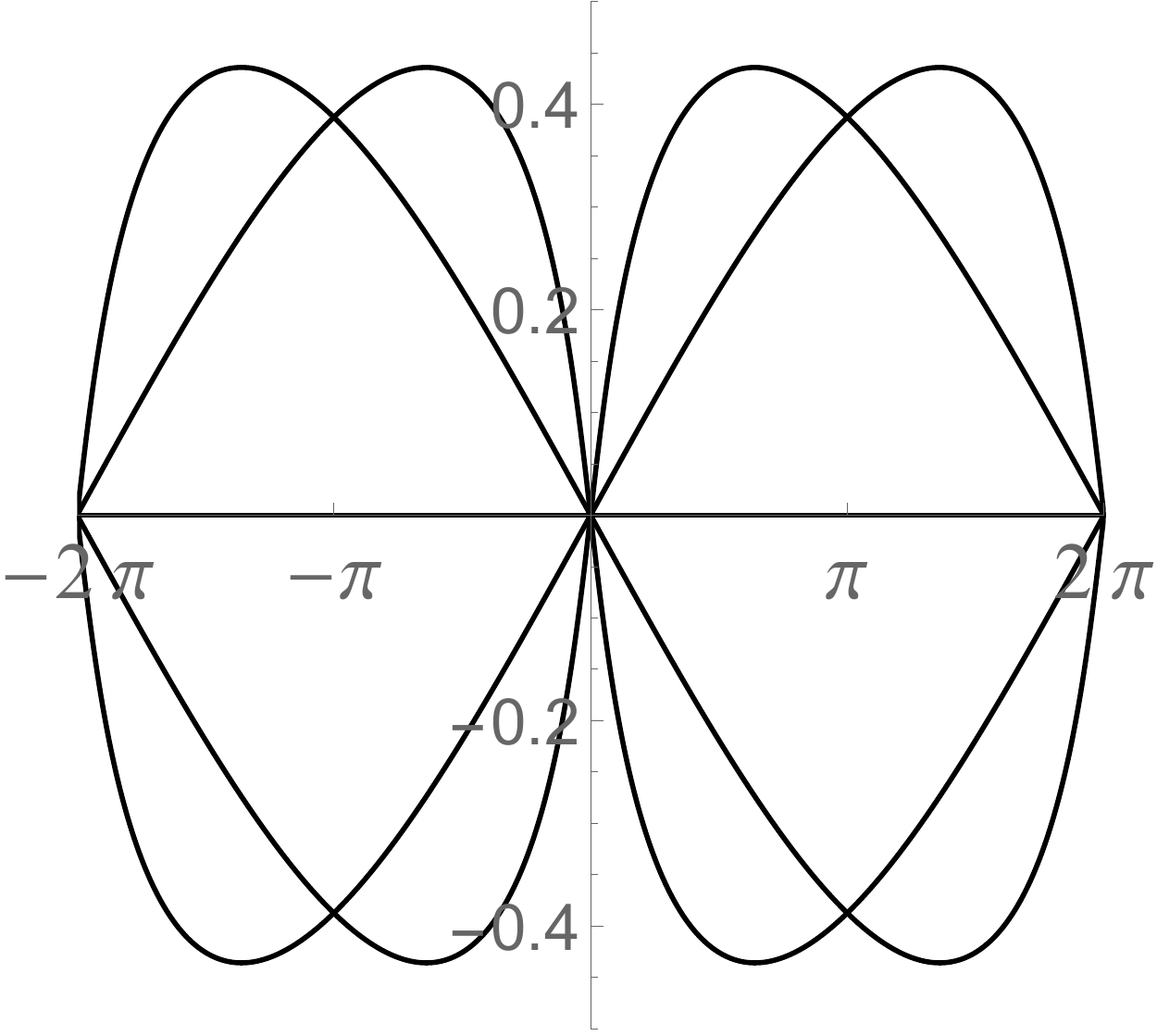} & \includegraphics[width=36mm]{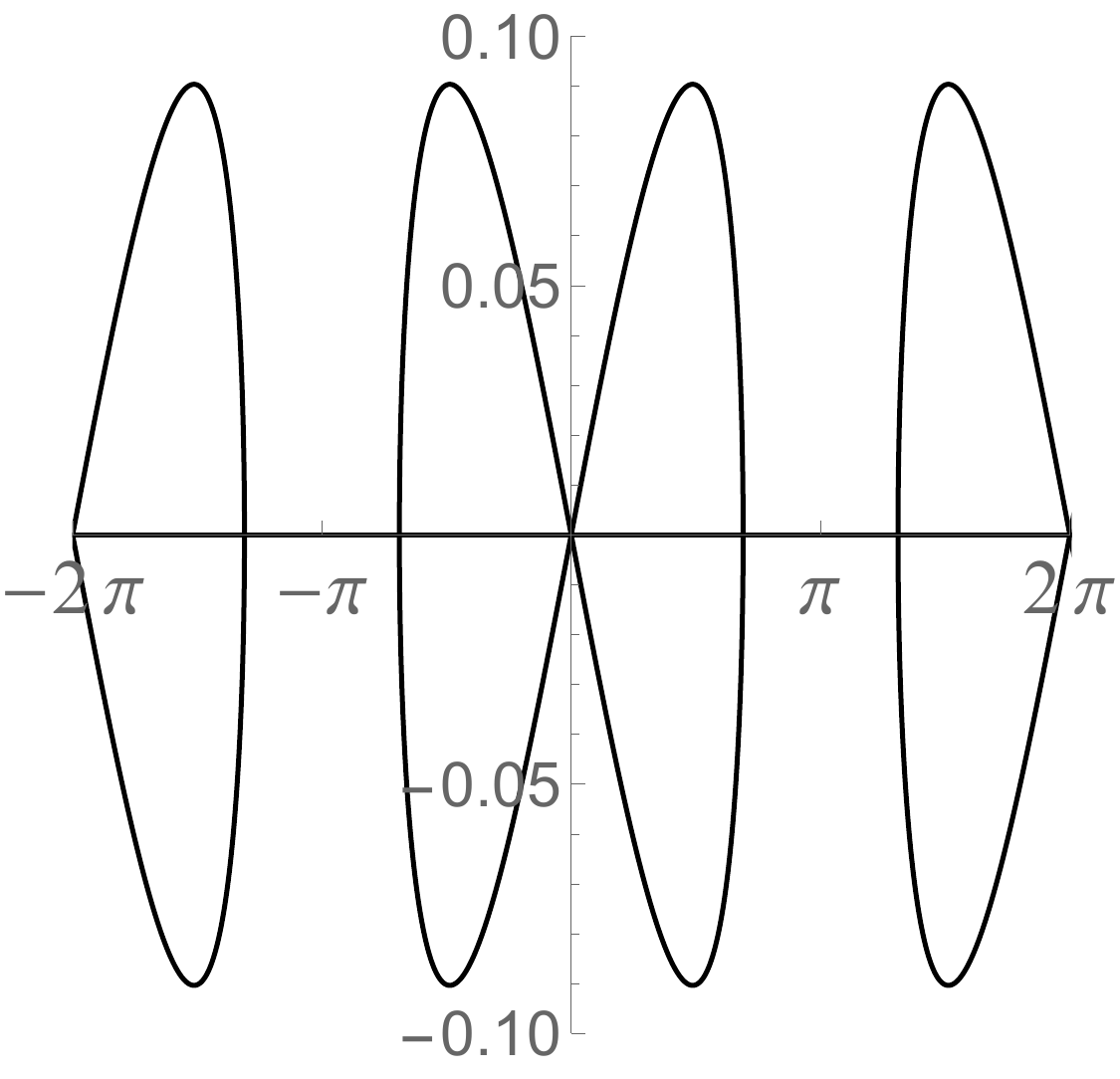} \\
(a) & (b) & (c) & (d)  \\[4pt]
\includegraphics[width=36mm]{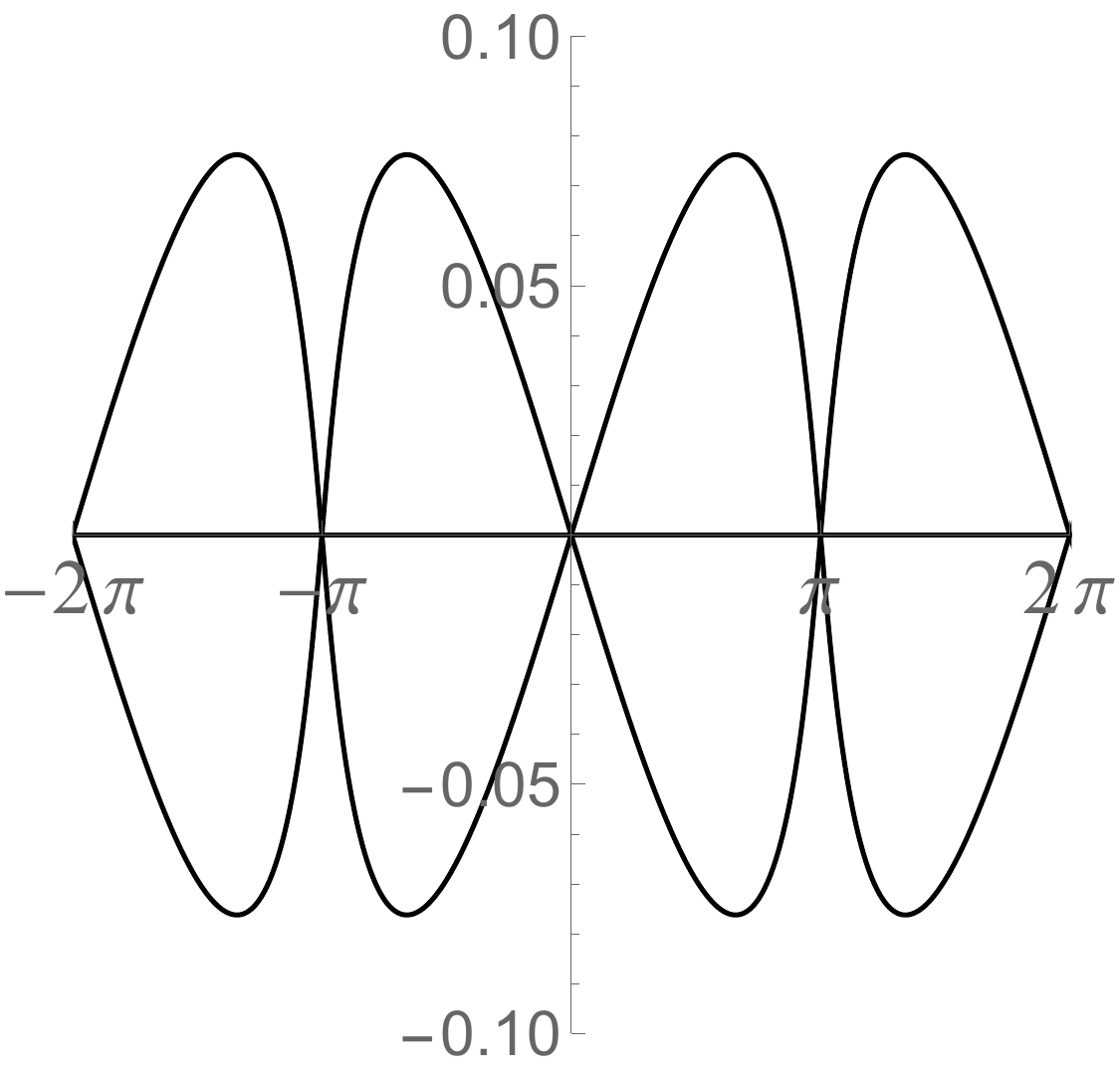} & \includegraphics[width=36mm]{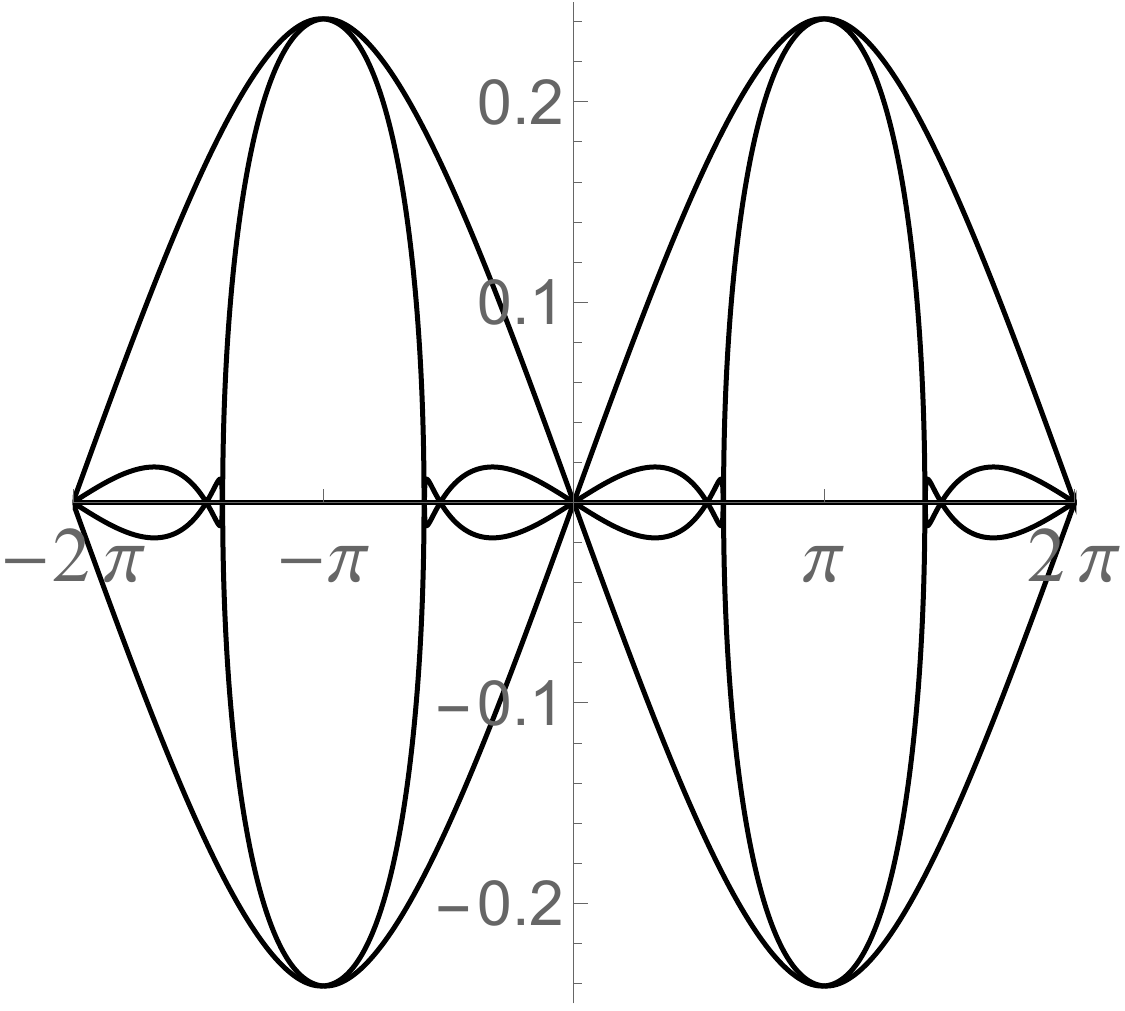} & \includegraphics[width=36mm]{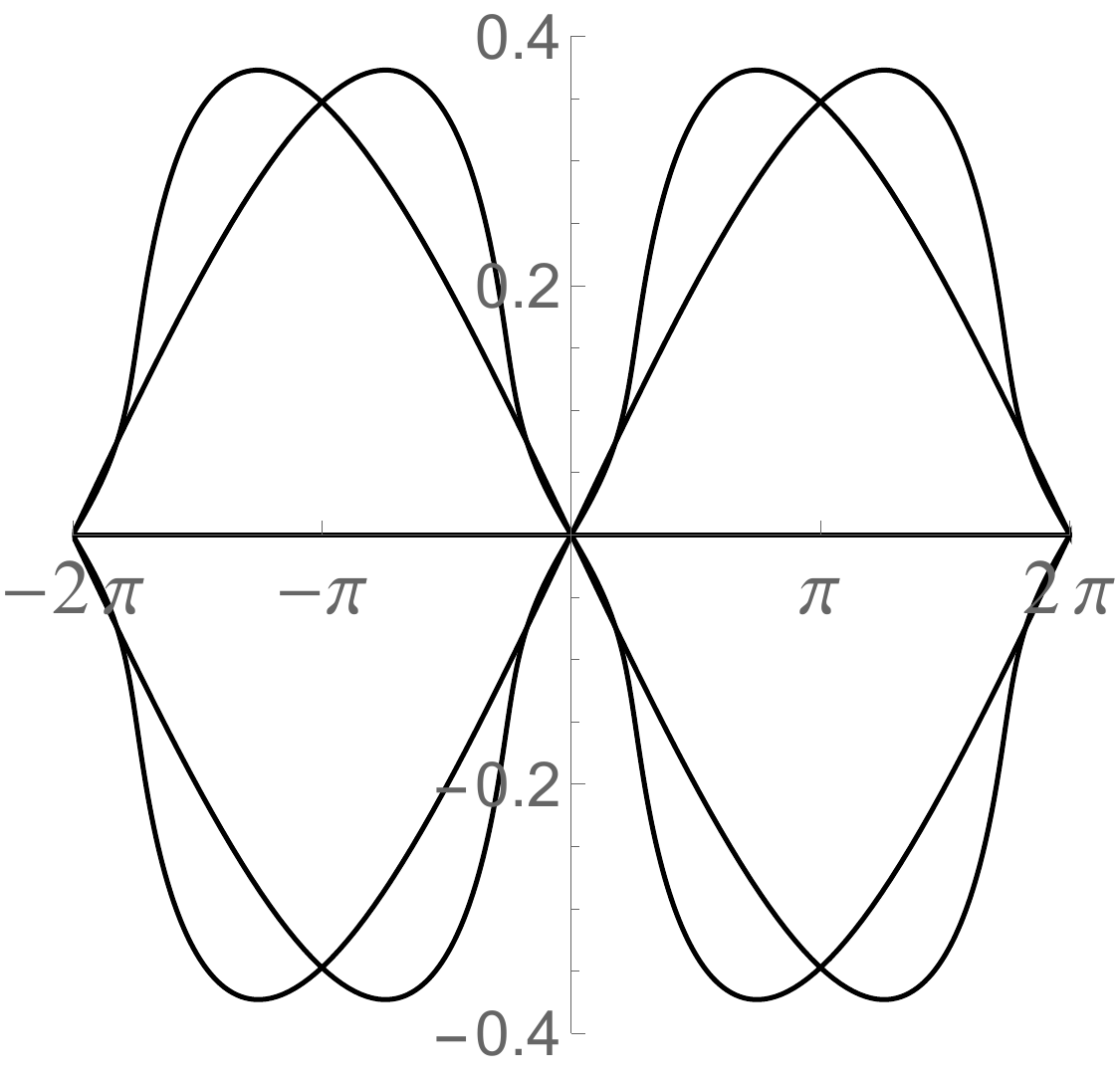} & \includegraphics[width=36mm]{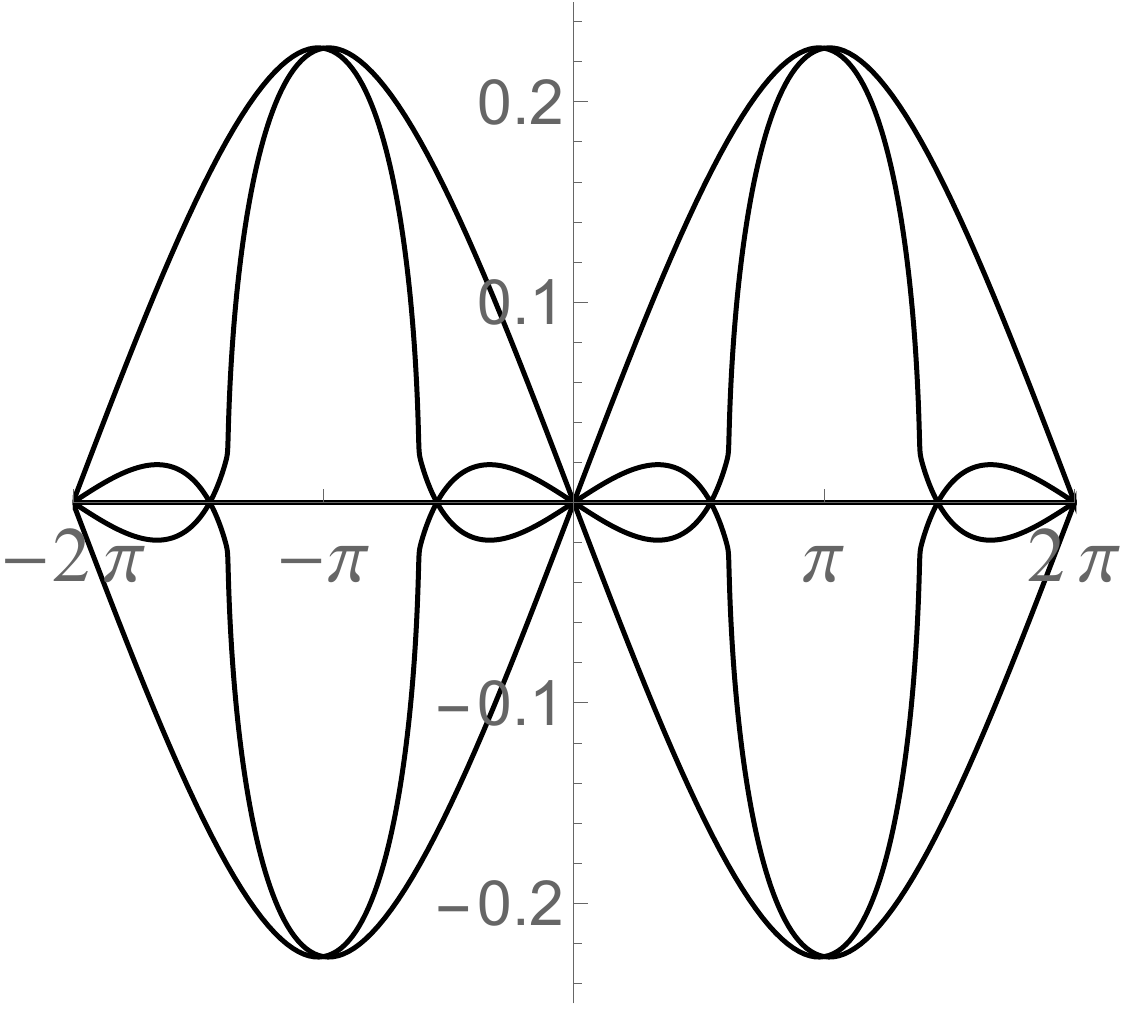} \\
(e) & (f) & (g) & (h)
\end{tabular}
\caption{The real part of the spectrum $\textrm{Re}(\lambda)$ (vertical axis) as a function of $\mu T(k)$ (horizontal axis). $T(k) \mu=2m\pi/P$ for integers $m$ and $P$ corresponds to perturbations of period $P$ times the period of the underlying solution. (a) Stokes wave solution, $(k,b)=(0,0.08);$ (b) Stokes wave solution, $(k,b)=(0,0.9);$ (c) $\dn$ solution, $(k,b)=(0.9,1);$ (d) $\cn$ solution, $(k,b)=(0.65,0.4225);$ (e) $\cn$ solution, $(k,b)=(0.95,0.9025);$ (f) triple-figure 8 solution, $(k,b)=(0.89,0.84);$ (g) non-self-intersecting butterfly solution, $(k,b)=(0.9,0.95);$ (h) self-intersecting butterfly solution, $(k,b)=(0.9,0.85).$ }
\label{mu-figures}
\end{figure}

In what follows we discuss the stability of solutions with respect to perturbations of integer multiples of their fundamental periods, so-called subharmonic perturbations \cite{GCT}. The expression (\ref{muparametric}) gives an easy way to do this. Specifically, from (\ref{muperiods}) we know which values of $\mu$ correspond to perturbations of what type. For stability, we need all spectral elements associated with a given $\mu$ value to have zero real part. In Figure \ref{mu-figures} we plot the real part of $\sigma_{\mathcal{L}}$ as a function of $\mu T(k)$ using (\ref{lambdacond}), (\ref{OmegacondW}), and (\ref{muparametric}). We rescale $\mu$ by the fundamental period $T(k)$ for consistency in our figures. Specifically,
$$\mu T(k) = \frac{2\pi m}{P},$$
 corresponds to perturbations of $P T(k)$ for any integer $n$. In what follows, we omit $\sigma_{\mathcal{L}}\cap i \mathbb{R}$.

\subsection{Stokes wave case}

We begin with the spectrum for Stokes waves (see Figures \ref{mu-figures}(a,b)). After simplification,
\begin{align} \label{muTk} \mu T(k) & = -2 \pi \textrm{sgn}(s)\sqrt{b-s^2}+2 \pi n, \\
\label{reallambda} \textrm{Re} (\lambda) & = \pm 2 \sqrt{b s^2-s^4},
\end{align}
for $s\in [-\sqrt{b},\sqrt{b}]$ and $n\in \mathbb{Z}$. Qualitatively, for any value of $n$, the parametric plot of $\textrm{Re}\left(\sigma_{\mathcal{L}}\right)$ as a function of $ \mu T(k)$ looks like a figure 8 on its side. Specifically, The figure 8 is centered at $(2\pi n,0)$ and extends left and right to $(2\pi n\pm 2\pi \sqrt{b},0),$ with non-zero values in between, see Figures \ref{mu-figures}(a,b). This leads to the following theorem:
\vspace{2mm}
\begin{thm}\label{thm-stokes}
For any positive integer $P$, Stokes wave solutions to (\ref{fNLS}) with $b\le 1/P^2$ are stable with respect to perturbations of period $P\pi.$
\end{thm}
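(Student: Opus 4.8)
The plan is to use the parametric description of the Stokes-wave spectrum given in equations~(\ref{muTk}) and~(\ref{reallambda}). Recall that a solution is stable with respect to perturbations of period $P\,T(k)$ precisely when every spectral element $\lambda$ associated with a Floquet exponent $\mu$ of the form $\mu T(k)=2\pi m/P$, $m\in\mathbb{Z}$, satisfies $\textrm{Re}(\lambda)=0$. For the Stokes wave $T(k)=2K(0)=\pi$, so perturbations of period $P\pi$ correspond exactly to $\mu T(k)=2\pi m/P$. Thus the statement reduces to: if $b\le 1/P^2$, then for every $s\in[-\sqrt{b},\sqrt{b}]$ for which $-2\pi\,\textrm{sgn}(s)\sqrt{b-s^2}+2\pi n=2\pi m/P$ has a solution with $n\in\mathbb{Z}$, one necessarily has $bs^2-s^4=0$, i.e. $s=0$ or $s^2=b$.

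First I would reduce modulo $2\pi$: the condition $\mu T(k)=2\pi m/P$ with $m\in\mathbb{Z}$ is equivalent to requiring $-\textrm{sgn}(s)\sqrt{b-s^2}\equiv m/P \pmod 1$ for some integer $m$, i.e. $\sqrt{b-s^2}\in \tfrac{1}{P}\mathbb{Z}$ (using that $\sqrt{b-s^2}\ge 0$ and picking up the sign/integer ambiguity into $n$ and $m$). Next I would observe that since $0\le s^2\le b\le 1/P^2$, we have $0\le b-s^2\le b\le 1/P^2$, hence $0\le \sqrt{b-s^2}\le 1/P$. The only elements of $\tfrac1P\mathbb{Z}$ in the closed interval $[0,1/P]$ are $0$ and $1/P$. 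The case $\sqrt{b-s^2}=0$ gives $s^2=b$, so $bs^2-s^4=b^2-b^2=0$ and $\textrm{Re}(\lambda)=0$. The case $\sqrt{b-s^2}=1/P$ forces $b-s^2=1/P^2\ge b$, hence $s^2\le 0$, so $s=0$ and $b=1/P^2$; then again $bs^2-s^4=0$ and $\textrm{Re}(\lambda)=0$. In either case the associated spectral elements lie on the imaginary axis, which is what stability requires, and the proof is complete.

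The only point needing a little care — and the part I would write out most carefully — is the bookkeeping of the integer $n$ and the sign function $\textrm{sgn}(s)$ in~(\ref{muTk}) when translating "$\mu T(k)\in \tfrac{2\pi}{P}\mathbb{Z}$" into "$\sqrt{b-s^2}\in\tfrac1P\mathbb{Z}$"; one should check both signs of $s$ and confirm nothing is lost by absorbing $\textrm{sgn}(s)$ into the choice of representative. This is genuinely routine: $-\textrm{sgn}(s)\sqrt{b-s^2}+n = m/P$ with $n\in\mathbb{Z}$ is solvable in $m\in\mathbb{Z}$ iff $\textrm{sgn}(s)\sqrt{b-s^2}\in \tfrac1P\mathbb{Z}$ iff $\sqrt{b-s^2}\in\tfrac1P\mathbb{Z}$. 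After that, the pigeonhole argument on the interval $[0,1/P]$ is immediate, and there is no real obstacle. (One should also note that the $\sigma_{\mathcal L}\cap i\mathbb{R}$ part of the spectrum is harmless since it contributes nothing to $\textrm{Re}(\lambda)$, consistent with the remark preceding the theorem that $\sigma_{\mathcal L}\cap i\mathbb R$ is omitted from this discussion.)
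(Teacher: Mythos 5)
Your proposal is correct and follows essentially the same route as the paper: both rest on the parametrization (\ref{muTk})--(\ref{reallambda}) and the key inequality $\sqrt{b-s^2}\le\sqrt{b}\le 1/P$, which is exactly the paper's condition (\ref{twoconds}). The paper phrases this geometrically (the figure 8 centered at $2\pi n$ extends only over $[2\pi n-2\pi\sqrt{b},\,2\pi n+2\pi\sqrt{b}]$, hence avoids the grid points $2\pi m/P$ with $m\not\equiv 0 \bmod P$), while your reduction modulo $2\pi$ and pigeonhole on $\tfrac1P\mathbb{Z}\cap[0,1/P]$ is an equivalent pointwise repackaging of the same computation, handling the endpoint cases just as cleanly.
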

\begin{proof}
First, $T(k) = T(0) =\pi$.
Let $P\in\mathbb{N}_0.$ For stability with respect to perturbations of period $PT(k)$ we need that for $\mu T(k) = \frac{2\pi m}{P},$ the spectral elements $\lambda\in \sigma_{\mathcal{L}}$ have zero real part, \textit{i.e.}, for $\mu T(k) = 0, \frac{2\pi}{P},\ldots,\frac{2\pi(P-1)}{P},$ $\textrm{Re}(\lambda)=0.$ From (\ref{muTk}), $\mu T(k)=0$ only when $s = \pm \sqrt{b},$ which corresponds to $\textrm{Re}(\lambda)=0$ from (\ref{reallambda}). Thus it suffices to consider $\mu T(k) = \frac{2\pi}{P},\ldots,\frac{2\pi(P-1)}{P}.$
Qualitatively, we have figure 8s centered at $\mu T(k) = 2\pi n$ extending over $[2\pi n - 2\pi \sqrt{b},2\pi n+2\pi \sqrt{b}].$
Specifically, as $s$ ranges from $-\sqrt{b}$ to $0$, $\mu T(k)$ monotonically increases from $2\pi n$ to $2\pi (n+\sqrt{b})$. Over the same range, $|\textrm{Re}(\lambda)|$ increases from $0$ (at $s=-\sqrt{b}$) to $b$ (at $s=-\sqrt{b/2}$) then decreases back down to $0$ (at $s=0$) mapping out the right-half of the figure 8. For $s\in(0,\sqrt{b})$, the left-half of the figure 8 is produced symmetrically.
Relevant to the interval $[0,2\pi)$ are the figure 8s centered at $0$ and $2\pi$. If the right-most edge of the figure 8 centered at $\mu T(k)=0$ is less than $2\pi/P$ and the left most edge of the figure 8 centered at $\mu T(k) = 2\pi$ is greater than $2\pi (P-1)/P,$ then the real part of the spectrum at $\mu T(k) = \frac{2\pi}{P},\ldots,\frac{2\pi(P-1)}{P}$ is zero. These conditions are
\beq \label{twoconds} 2\pi \sqrt{b} \le \frac{2\pi}{P} \textrm{ and } 2\pi-2\pi \sqrt{b}\ge \frac{2\pi(P-1)}{P}.\eeq
Simplifying both conditions gives $0\le b\le 1/P^2,$ completing the proof.
\end{proof}

For more intuition about this result, one can examine Figure \ref{mu-figures}. In Figure \ref{mu-figures}(a), $b=0.08.$ Here $b<1/P^2$ for $P=1,2,3$ so this Stokes wave solution is stable with respect to perturbations of periods $\pi,2\pi,3\pi$. This is readily seen in Figure \ref{mu-figures}(a) where the figure 8 centered at the origin extends to $\mu T(k) = \pm 2\pi \sqrt{0.08}\approx 0.567 \pi,$ so $\textrm{Re}(\lambda)=0$ when $\mu T(k)=0,2\pi/3,\pi,4\pi/3,2\pi.$ In Figure \ref{mu-figures}(b), $b<1/P^2$ only for $P=1,$ so the Stokes wave solution is only stable with respect to perturbations of the fundamental period $\pi$. Indeed, the figure 8 centered at the origin extends to $\mu T(k) = \pm 2\pi \sqrt{0.9}\approx 1.90 \pi,$ so $\textrm{Re}(\lambda)=0$ only for $\mu T(k) = 0$.

In order to proceed with results for the $\dn,$ $\cn,$ and general nontrivial-phase solutions we provide the following useful lemma:
\vspace{2mm}
\begin{lemma}\label{monotonelemma}For any analytic function $f(z) = u(x,y) +i v(x,y),$ on a contour where $u(x,y)=$ constant, $v(x,y)$ is strictly monotone, provided the contour does not traverse a saddle point.
\end{lemma}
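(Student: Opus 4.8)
The plan is to use the Cauchy--Riemann equations together with the fact that a non-vanishing gradient of a harmonic function forces strict monotonicity of its harmonic conjugate along a level curve. First I would parametrize the contour $\{u(x,y)=c\}$ locally by arclength $\tau$, writing it as $\gamma(\tau)=(x(\tau),y(\tau))$. Differentiating the constraint $u(x(\tau),y(\tau))=c$ gives $u_x x' + u_y y' = 0$, so the tangent vector $(x',y')$ is parallel to $(-u_y,u_x)$. Then $\frac{\textrm{d}}{\textrm{d}\tau} v(x(\tau),y(\tau)) = v_x x' + v_y y'$, and substituting the Cauchy--Riemann equations $v_x=-u_y$, $v_y=u_x$ together with $(x',y') = \rho(-u_y,u_x)$ for some scalar $\rho$ of fixed sign along a smooth arc gives $\frac{\textrm{d}v}{\textrm{d}\tau} = \rho(u_y^2 + u_x^2) = \rho|\nabla u|^2 = \rho|f'(z)|^2$.

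The key observation is then that $\frac{\textrm{d}v}{\textrm{d}\tau}$ vanishes only where $f'(z)=0$, i.e. at a critical point of $f$. Since $f$ is analytic, such points are where the level curves of $u$ can branch or cross --- precisely the saddle points of $u$ (a critical point of an analytic function is a saddle of both $\textrm{Re}\,f$ and $\textrm{Im}\,f$ by the maximum principle, since neither harmonic function can have a genuine local extremum there). Away from such points $|f'(z)|^2 > 0$ and $\rho$ has constant sign on the arc, so $v$ is strictly monotone in $\tau$. This is the content of the lemma: "provided the contour does not traverse a saddle point" rules out exactly the zeros of $f'$, which are the only obstruction to the sign of $\frac{\textrm{d}v}{\textrm{d}\tau}$ changing or to $v'$ vanishing.

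I would then note that the hypotheses are essentially just unpacking this computation: analyticity gives the Cauchy--Riemann relations, the level-set condition fixes the tangent direction, and excluding saddle points guarantees $f'\neq 0$ on the relevant arc. The main (and really only) obstacle is bookkeeping the orientation: one must check that $\rho$ does not flip sign as one moves along the contour, which is automatic on a connected smooth arc not containing a critical point, since the tangent field $(-u_y,u_x)$ is then continuous and non-vanishing, so a consistent choice of orientation keeps $\rho>0$ throughout. Once that is observed, strict monotonicity of $v$ follows immediately from $\frac{\textrm{d}v}{\textrm{d}\tau} = \rho|f'|^2 > 0$. No delicate estimates are needed; the proof is a short application of the Cauchy--Riemann equations.
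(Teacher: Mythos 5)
Your argument is correct and is exactly the approach the paper takes: its proof consists of the single sentence that the lemma is an immediate consequence of the Cauchy--Riemann relations, and your computation $\frac{\textrm{d}v}{\textrm{d}\tau}=\rho\,|f'(z)|^2$ along the level curve, with $f'\neq 0$ away from saddle points, is just the detailed unpacking of that remark. No gaps; the orientation bookkeeping you flag is handled correctly.
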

\begin{proof} This is an immediate consequence of the Cauchy-Riemann relations \cite{BORN}.
\end{proof}
Thus along contours where $\textrm{Re}(I(\zeta))=0,$ if there are no saddle points, then $\textrm{Im}(I(\zeta))$ is monotone. If we fix $b$ and $k$, using (\ref{muparametric}) we see that $\mu(\zeta)$ is also monotone along curves with $\textrm{Re}(\zeta) = 0$.

\subsection{dn case}
A representative plot of $\mu T(k)$ vs $\textrm{Re}(\lambda)$ for a $\dn$ solution is shown in Figure \ref{mu-figures}c.
 We prove the following theorem:
\vspace{2mm}
\begin{thm}
The $\dn$ solutions to (\ref{fNLS}) are stable with respect to co-periodic perturbations, but not to subharmonic perturbations.
\end{thm}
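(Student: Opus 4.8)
The plan is to use the $\mu$-parametrization \eqref{muparametric} specialized to the $\dn$ case, together with the explicit description \eqref{dn-zeta} of $\sigma_L$ and the monotonicity Lemma~\ref{monotonelemma}. First I would record that for $\dn$ solutions $b=1$, so $c=0$, $\theta(T(k))=0$, and $T(k)=2K(k)$; hence \eqref{muparametric} reduces to $\mu(\zeta) = 2iI(\zeta)/(2K(k)) + 2\pi n/(2K(k))$, and by Lemma~\ref{monotonelemma} applied to the analytic function $I(\zeta)$ along the contour $\mathrm{Re}(I(\zeta))=0$ (which is exactly $\sigma_L$), the quantity $\mathrm{Im}(I(\zeta))$, and therefore $\mu T(k)$, is strictly monotone as $\zeta$ traverses the relevant arc of $\sigma_L$, as long as we avoid saddle points. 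By \eqref{dn-zeta} the part of $\sigma_L$ off the real axis is the union of the two vertical segments $[\,i(1-k')/2,\,i(1+k')/2\,]$ and its reflection, where $k'=\sqrt{1-k^2}$; along these $\Omega(\zeta)$ runs over the interval giving the real spectral band \eqref{dn-real}, and $\lambda=2\Omega$ is real and nonzero on the interior.

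Next I would pin down the endpoints. The endpoints of the segment in \eqref{dn-zeta} are among the roots $\zeta_c$ of $\Omega^2=0$ in \eqref{zeta-roots} (with $b=1$ these are $\pm i(1\mp k')/2$ — purely imaginary), so at those endpoints $\lambda=0$. I would then compute $\mu T(k)$ at each endpoint using \eqref{muparametric}: at $\zeta=\zeta_c$ one has $I(\zeta_c) = -2i\zeta_c K(k) \pm 2(\zeta_w(\alpha)K(k) - \zeta_w(\omega_1)\alpha)$ evaluated at the corresponding $\alpha$, and the upshot should be that the two endpoints of each arc map to $\mu T(k)=0$ and $\mu T(k)=\pm 2\pi$ (possibly after choosing the branch integer $n$), i.e. the figure-eight in the $(\mu T(k),\mathrm{Re}\,\lambda)$-plane is centered at $\mu T(k)=2\pi n$ and its lobes extend out to but not beyond $\mu T(k)=2\pi(n\pm 1)$. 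This is visible in Figure~\ref{mu-figures}(c). Granting that, co-periodic perturbations correspond to $\mu T(k)=2\pi m$, $m\in\mathbb Z$, which are exactly the endpoints where $\mathrm{Re}(\lambda)=0$; hence there is no co-periodic instability, which also recovers the orbital stability statement of \cite{GH2}. Conversely, any subharmonic perturbation of period $PK(k)\cdot(\text{something})$ with $P\ge 2$ corresponds to some $\mu T(k)=2\pi m/P \notin 2\pi\mathbb Z$ lying strictly inside a lobe, where by the monotonicity just established $\mathrm{Re}(\lambda)\ne 0$; hence instability.

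For the co-periodic stability half, rather than leaning on the figure-eight picture I would prefer the cleaner route: show directly from \eqref{dn-zeta} that the only $\zeta\in\sigma_L$ with $\lambda=2\Omega(\zeta)\in i\mathbb R$ and simultaneously $\mu T(k)\in 2\pi\mathbb Z$ are the segment endpoints (where $\lambda=0$) plus the real-$\zeta$ locus (which gives the purely imaginary part of $\sigma_{\mathcal L}$ that we are omitting). Concretely: on the real axis $\mu(\zeta)$ is real and one checks $\mathrm{Re}(\lambda)=0$ there anyway; on the vertical segments $\mathrm{Re}(\lambda)\ne 0$ except at the endpoints; and $\mu T(k)$ is monotone between the endpoints by Lemma~\ref{monotonelemma}, taking the value $2\pi n$ only at the two ends. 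So no interior point of a lobe is hit by a co-periodic $\mu$. For the subharmonic instability half, fix $P\ge 2$; then $\mu T(k)=2\pi/P$ is strictly between the endpoint values $0$ and $2\pi$, so by monotonicity and continuity it is attained at some interior $\zeta_*$ of the segment, at which $\mathrm{Re}(\lambda(\zeta_*))=2\mathrm{Re}(\Omega(\zeta_*))>0$; by the Hamiltonian symmetry of the spectrum the reflected eigenvalue is also present, so the $\dn$ solution is spectrally unstable to perturbations of that period.

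The main obstacle I anticipate is the bookkeeping at the endpoints: verifying that the two ends of the arc \eqref{dn-zeta} really map under \eqref{muparametric} to $\mu T(k)$-values differing by exactly $2\pi$ (so that co-periodic $\mu$'s fall precisely on the zeros and never inside a lobe), and that no saddle point of $I(\zeta)$ sits on the arc so that Lemma~\ref{monotonelemma} applies throughout. This requires evaluating $\zeta_w$ at the specific $\alpha$-values coming from $\alpha=\wp^{-1}(-C_4/C_5)$ at $\zeta=\zeta_c$ and at $\zeta=0$; using the half-period value $\zeta_w(\omega_1)=\sqrt{e_1-e_3}(E(k)-\tfrac{e_1}{e_1-e_3}K(k))$ and the simplifications available when $b=1$, this should collapse, but it is the one place where genuine computation is unavoidable. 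Everything else — the monotonicity, the reduction of \eqref{muparametric}, and the translation between "interior of a lobe" and "nonzero real part" — is immediate from results already in hand.
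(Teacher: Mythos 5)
Your proposal is correct and follows essentially the same route as the paper: restrict to the vertical segments of $\sigma_L$ from (\ref{dn-zeta}), note the endpoints are roots of $\Omega^2=0$ mapping to $\mu T(k)=0$ and $2\pi$ with $\mathrm{Re}(\lambda)=0$, and use Lemma \ref{monotonelemma} to conclude that interior points (where $\mathrm{Re}(\lambda)\neq 0$) sweep out all intermediate $\mu$-values, giving co-periodic stability and subharmonic instability. The endpoint bookkeeping you flag as the remaining computation is exactly what the paper asserts at this point without further elaboration.
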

\begin{proof} It suffices to consider values of $\zeta$ in the range given by (\ref{dn-zeta}), as these are the only $\zeta$ which correspond to $\lambda$ with positive real part. We can limit our study to
$$ \zeta\in \left[\frac{1-\sqrt{1-k^2}}{2} i,\frac{1+\sqrt{1-k^2}}{2} i\right]:=\left[\zeta_b,\zeta_t\right],$$
as $\zeta$ with negative imaginary part correspond to symmetric values of $\mu T(k)$. For $\zeta = \zeta_t$, $\mu T(k) = 0,$ and $\textrm{Re}(\lambda)=0$. Similarly, for $\zeta =\zeta_b$, $\mu T(k) = 2\pi,$ and $\textrm{Re}(\lambda)=0$. From Lemma \ref{monotonelemma} we know that $\mu T(k)$ increases monotonically as $\zeta$ ranges from $\zeta_t$ to $\zeta_b,$ and since $\textrm{Re}(\lambda(\zeta))>0$ in that range we have that some $\zeta$ in the range will correspond to a $\lambda$ with positive real part. Hence, $\dn$ solutions are unstable with respect to perturbations other than their fundamental period. Additionally, since $\zeta_t$ and $\zeta_b$ are the only values of $\zeta$ corresponding to $\mu T(k) = 2\pi n$ we have that $\dn$ solutions are stable with respect to perturbations of their fundamental period.
\end{proof}

\subsection{cn case}
Note that $T(k)=4K(k)$ for $\cn$ solutions.
\vspace{2mm}
\begin{thm}\label{cn-thm}
The $\cn$ solutions with $k<k^*$ are stable with respect to perturbations of period $P T(k)$, if they satisfy the condition:
\beq
\label{cn-condition} \pi-2 i I(-\zeta_t) \le \frac{2\pi}{P},
\eeq
for
\beq \label{CNtopminus} \zeta_t = \frac{\sqrt{2 E(k)-K(k)}}{2 \sqrt{K(k)}}. \eeq
\end{thm}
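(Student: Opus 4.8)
\emph{Proof plan.} The argument parallels the Stokes-wave and $\dn$ cases. For the $\cn$ solution $T(k)=4K(k)$, and by Section~\ref{imagaxis} real values of $\zeta$ contribute only points with $\textrm{Re}(\lambda)=0$, so instability can arise only from $\zeta\in\overline{\sigma_L\setminus\mathbb{R}}$. For $k<k^*$ the description in Section~\ref{regions} identifies $\overline{\sigma_L\setminus\mathbb{R}}$ as the preimage of the (double-covered) figure~8: it meets the real axis exactly at $\pm\zeta_t$ with $\zeta_t$ as in (\ref{CNtopminus}), and it passes through the four roots $\zeta_c$ of $\Omega^2$, at which $\lambda=2\Omega=0$. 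Using the quadrafold symmetry of $\sigma_L$, together with the facts that for $b=k^2$ the quartic $\Omega^2(\zeta)$ in (\ref{OmegacondW}) is even in $\zeta$ and (from the definitions of $\tau$ and $\alpha$) $I(-\zeta)=-I(\zeta)$, while $\textrm{Re}(\lambda)$ is invariant under $\zeta\mapsto-\zeta$ and $\zeta\mapsto\bar\zeta$, the analysis reduces to tracking a single connected arc $\Gamma\subset\overline{\sigma_L\setminus\mathbb{R}}$ joining $\zeta_t$ to $-\zeta_t$ through one root $\zeta_c$ in the open upper half plane.

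On the interior of each half of $\Gamma$ one has $\textrm{Re}(\lambda)=2\,\textrm{Re}(\Omega)>0$; moreover $I$ has no saddle point there, since by (\ref{derivintcond}) the only zeros of $\textrm{d}I/\textrm{d}\zeta$ occur at $\pm\zeta_t$ (when $b=k^2$) and the only branch point is $\zeta_c$. Hence Lemma~\ref{monotonelemma} applies, and $\textrm{Im}(I(\zeta))$ — and therefore, via (\ref{muparametric}) with $\theta(T(k))=\pi$, the quantity $\mu(\zeta)T(k)$ — is strictly monotone along each half of $\Gamma$, with a well-defined limit $\mu(\zeta_c)T(k)$ as $\zeta\to\zeta_c$ because $I$ is continuous there.

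Next I evaluate the endpoints. At $\zeta=\zeta_c$, $\lambda=0$ belongs to the co-periodic spectrum (the quadruple zero of $\sigma_{\mathcal{L}}$ at the origin), so the integer $n$ in (\ref{muparametric}) may be chosen to normalize $\mu(\zeta_c)T(k)=0$; this forces $2iI(\zeta_c)$ to be an odd multiple of $\pi$ since $2iI(\zeta_c)+\pi\in2\pi\mathbb{Z}$. At $\zeta=\pm\zeta_t$, $\lambda=\pm iv$ is purely imaginary by (\ref{CNtopval}), so these are again points with $\textrm{Re}(\lambda)=0$, and moreover $-\zeta_t\in\mathbb{R}\subset\sigma_L$ forces $\textrm{Re}(I(-\zeta_t))=0$, so $2iI(-\zeta_t)\in\mathbb{R}$. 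Computing $\mu(\zeta_t)T(k)-\mu(\zeta_c)T(k)=2iI(\zeta_t)-2iI(\zeta_c)=-2iI(-\zeta_t)-(\text{odd multiple of }\pi)$ and reducing modulo $2\pi$ shows that as $\zeta$ runs over $\Gamma$, $\mu T(k)$ sweeps the interval $[-a,a]$ with $a:=\pi-2iI(-\zeta_t)$, attaining the center value $0$ only at $\zeta_c$ and with $\textrm{Re}(\lambda)>0$ on the two open subintervals $(-a,0)$ and $(0,a)$. Translating by the residual $2\pi\mathbb{Z}$ periodicity of $e^{i\mu T(k)}$, the full set of $\mu T(k)$ at which $\textrm{Re}(\lambda)\ne0$ is $\bigcup_{n\in\mathbb{Z}}\big((2\pi n-a,\,2\pi n+a)\setminus\{2\pi n\}\big)$, exactly the ``figure~8 on its side'' picture of Figure~\ref{mu-figures}(d).

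Stability with respect to perturbations of period $PT(k)$ requires $\textrm{Re}(\lambda)=0$ at every $\mu T(k)=2\pi m/P$, $m\in\mathbb{Z}$. The resonant points coincide with the centers $2\pi n$ exactly when $P\mid m$, and the nonzero resonant points nearest a center lie at distance exactly $2\pi/P$; hence all resonant points avoid the bad set precisely when $a\le 2\pi/P$, which is (\ref{cn-condition}). I expect the main obstacle to be the topological bookkeeping of the first step: verifying rigorously, and uniformly in $k<k^*$, that $\overline{\sigma_L\setminus\mathbb{R}}$ decomposes into symmetric copies of a single arc $\Gamma$ meeting the real axis only at $\pm\zeta_t$, passing through exactly one root $\zeta_c$ with $\textrm{Re}(\lambda)$ strictly positive in its interior and no saddle of $I$ in between, together with fixing the branch and sign conventions (the choice of $\Omega$-branch on each half of $\Gamma$, the consistent choice of $\zeta_c$, and the normalization $\mu(\zeta_c)T(k)=0$ relative to the $\pi$-shift coming from $\theta(T(k))$) so that the two real crossings land symmetrically about the center.
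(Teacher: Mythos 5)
Your proposal is correct and follows essentially the same route as the paper: restrict by symmetry to one arc of the non-real Lax spectrum running from a root of $\Omega^2(\zeta)$ (where $\lambda=0$ and $\mu T(k)\equiv 0$) to $\mp\zeta_t$ on the real axis (where $\textrm{Re}(\lambda)=0$ and $\mu_t T(k)=\pi-2iI(-\zeta_t)$), use Lemma~\ref{monotonelemma} for monotonicity of $\mu T(k)$ along the arc, and then apply the same interval bookkeeping as in Theorem~\ref{thm-stokes} to obtain $\mu_t T(k)\le 2\pi/P$, i.e.\ (\ref{cn-condition}). The topological and branch-choice details you flag as the main obstacle are likewise treated only implicitly in the paper (via the figures and the description of the spectrum in Sections~\ref{imagaxis}--\ref{regions}), so your plan is on par with the published argument.
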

\begin{proof}
We examine $\zeta\in \sigma_L$ that satisfy (\ref{intcond4}), see Figure \ref{boundaryspectrum}(2c). The figure 8 spectrum is double covered, so without loss of generality, we consider only values of $\zeta$ in the left-half plane. Specifically we consider values of $\zeta$ ranging from $\zeta_- = -\frac{\sqrt{1-k^2}}{2}-\frac{k}{2}i$ to $\zeta_+ =\frac{\sqrt{1-k^2}}{2}- \frac{k}{2}i$ passing along the level curve through $\zeta=-\zeta_t$. At $\zeta_-,$ $\mu T(k) = 0$ and $\textrm{Re}(\lambda)=0$. As $\zeta$ moves from $\zeta_-$ to $-\zeta_t,$ $\mu T(k)$ monotonically increases (Lemma \ref{monotonelemma}) until it reaches $\mu_t T(k)=\pi-2 i I(-\zeta_t)$ at $\zeta =- \zeta_t$. At $-\zeta_t,$ $\textrm{Re}(\lambda) = 0$. Note that we are only considering the lower-left quarter plane. The analysis for $\zeta$ ranging from $\zeta_+$ to $\zeta_t$ is symmetric in $\mu T(k)$.

The only values of $\zeta$ which have $\textrm{Re}(\lambda)>0$ are within the ranges $[2\pi n-\mu_t,2\pi n+\mu_t].$ As in Theorem \ref{thm-stokes}, relevant to the interval $[0,2\pi)$ are the figure 8s centered at $0$ and $2\pi$. For stability the right-most edge of the figure 8 centered at $\mu T(k) =0$ needs to be less than $2\pi/P$ and the left-most edge of the figure 8 centered at $\mu T(k) = 2\pi$ to be greater than $2\pi (P-1)/P.$ These conditions are
\beq \label{mutop-cond} \mu_t \le \frac{2\pi}{P} \textrm{ and } 2\pi - \mu_t \ge \frac{2\pi(P-1)}{P},\eeq
which are the same conditions as (\ref{cn-condition}).
\end{proof}

\begin{thm}\label{cnstar-thm}
The $\cn$ solutions with $k>k^*$ are stable with respect to perturbations of period $T(k)$ and period $2T(k)$.
\end{thm}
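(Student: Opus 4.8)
The plan is to run the argument of Theorems~\ref{thm-stokes} and~\ref{cn-thm}: parametrize the portion of $\sigma_{\mathcal L}$ with $\textrm{Re}(\lambda)\neq 0$ by the Floquet exponent via~(\ref{muparametric}) and check that neither $\mu T(k)\equiv 0$ nor $\mu T(k)\equiv\pi$ (mod $2\pi$) is attained on it. For $\cn$ solutions $T(k)=4K(k)$ and $b=k^2$, so $\theta(T(k))=\pi$ and~(\ref{muparametric}) becomes $\mu(\zeta)T(k)=2iI(\zeta)+\pi+2\pi n$. Since $\textrm{Re}(\lambda)=2\,\textrm{Re}(\Omega)$, since $\Omega^2(\zeta)<0$ for real $\zeta$, and since for $k>k^*$ (hence $k>1/\sqrt2$, $\omega<0$) one reads from~(\ref{OmegacondW}) that $\Omega^2(\zeta)<0$ also for $\zeta\in i\mathbb R$, the only $\zeta$ that can produce $\textrm{Re}(\lambda)\neq 0$ lie on the part of $\overline{\sigma_L\setminus\mathbb R}$ off both axes. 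By Section~\ref{regions} this is the double-covered, unpierced figure 8, with top and bottom at the purely imaginary points $\lambda=\pm i\ell$ of~(\ref{CNtopval}) ($\ell>1/2$), whose $\zeta$-support is a quadrafold-symmetric family of arcs running from the four roots $\zeta_c=\pm\tfrac{\sqrt{1-k^2}}{2}\pm\tfrac k2 i$ of $\Omega^2=0$ (from~(\ref{zeta-roots}) with $b=k^2$) to the two points $\pm\zeta_t$ on the imaginary $\zeta$-axis, where $\zeta_t=\tfrac{\sqrt{2E(k)-K(k)}}{2\sqrt{K(k)}}$ from~(\ref{CNtop}) is purely imaginary for $k>k^*$. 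The quadrafold symmetry reduces everything to the single arc $\mathcal A$ in the closed first quadrant, from the root $\zeta_c$ with $\textrm{Re}(\zeta_c),\textrm{Im}(\zeta_c)>0$ to $\zeta_t$.

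I would then establish three facts. First, by~(\ref{derivintcond}) the critical points of the holomorphic function $I$ — equivalently the saddles of $\textrm{Re}(I)$ — are exactly the solutions of $2E(k)-(1+b-k^2+4\zeta^2)K(k)=0$, which for $b=k^2$ are $\zeta=\pm\zeta_t$ (with $\Omega(\pm\zeta_t)\neq 0$ since $\ell>0$). Thus $\mathcal A$ carries no saddle other than its endpoint $\zeta_t$, and by Lemma~\ref{monotonelemma} the quantity $\textrm{Im}(I(\zeta))$, hence $\mu(\zeta)T(k)$, is strictly monotone along $\mathcal A$. Second, at $\zeta_c$ one has $\Omega=0$, so $\lambda=0$; since $\lambda=0$ is the four‑fold kernel element generated by the symmetries of~(\ref{fNLS}), which acts at $\mu=0$, $\mu(\zeta_c)T(k)\equiv 0$ (mod $2\pi$). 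Third, $\zeta_t$ is purely imaginary, hence a fixed point of the involution $\zeta\mapsto-\bar\zeta$ of $\sigma_L$; on the stability side this involution is $(U,V)\mapsto(\bar U,-\bar V)$, which — using that $R(x)$ is real and, for $\cn$, $c=0$, so that $\mathcal L_\pm$ have real coefficients and $S=0$ — sends eigenfunctions at $(\lambda,\mu)$ to eigenfunctions at $(-\bar\lambda,-\mu)$ and in particular negates the (real) Floquet exponent. So $\mu(\zeta_t)T(k)\equiv-\mu(\zeta_t)T(k)$, forcing $\mu(\zeta_t)T(k)\in\{0,\pi\}$ (mod $2\pi$). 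Combining the three, $\mu T(k)$ varies strictly monotonically along $\mathcal A$ from $\equiv 0$ at $\zeta_c$, so $\mu(\zeta_t)T(k)\not\equiv 0$ and therefore $\mu(\zeta_t)T(k)\equiv\pi$.

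The conclusion is then immediate. Along $\mathcal A$, $\mu T(k)$ runs strictly monotonically from $0$ at $\zeta_c$ to $\pi$ at $\zeta_t$, with $\textrm{Re}(\lambda)>0$ on the open arc (which maps to the half of the upper figure‑8 loop in $\sigma_{\mathcal L}$ between the origin and $i\ell$); reflecting by the quadrafold symmetry, every $\zeta$ with $\textrm{Re}(\lambda)\neq 0$ has $\mu T(k)\in(0,\pi)\cup(\pi,2\pi)$ (mod $2\pi$). Hence $\mu T(k)=0$ and $\mu T(k)=\pi$ are attained only at $\zeta\in\mathbb R\cup i\mathbb R$, at $\zeta_c$, or at $\pm\zeta_t$ — all with $\textrm{Re}(\lambda)=0$ — which is precisely spectral stability against perturbations of period $T(k)$ (the case $\mu T(k)=0$) and of period $2T(k)$ (the cases $\mu T(k)=0,\pi$). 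Note this also shows instability for $P\ge 3$, consistent with the rest of the section.

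The step I expect to need the most care is the third fact: checking cleanly that the stability‑problem symmetry $(U,V)\mapsto(\bar U,-\bar V)$ really corresponds on the Lax side to $\zeta\mapsto-\bar\zeta$ (by conjugating~(\ref{chixphi}) together with the squared‑eigenfunction formula~(\ref{eigenfunctions})), so that the fixed point $\zeta_t$ carries a symmetric Floquet datum and the symmetry fixes — rather than merely permutes — the spectral element over $\zeta_t$. If that route proves awkward, the alternative is to evaluate $I(\zeta_t)=-2i\zeta_t K(k)+2\tau\bigl(\zeta_w(\alpha)K(k)-\zeta_w(K(k))\alpha\bigr)$ directly at $\alpha=\alpha(\zeta_t)$, using the saddle relation $\tfrac{dI}{d\zeta}(\zeta_t)=0$ and Weierstrass $\zeta$‑function identities in the spirit of the derivation of~(\ref{CNtopval}); as a cross‑check, the analogous computation at $\zeta=0$, where $\alpha(0)$ is a half period and the Legendre relation collapses $I(0)$ to an integer multiple of $i\pi$, gives $\mu(0)T(k)\equiv\pi$. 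A secondary point, to be settled via the tangent field~(\ref{tanvfield}) as in Section~\ref{regions}, is the global connectivity of the arcs, so that $\mathcal A$ is genuinely a single arc with $\zeta_t$ as its only saddle and the quadrafold reflections exhaust $\overline{\sigma_L\setminus\mathbb R}$.
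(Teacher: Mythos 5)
Your overall strategy is the paper's own: parametrize the unstable arcs by the Floquet exponent via (\ref{muparametric}), use Lemma~\ref{monotonelemma} for strict monotonicity of $\mu T(k)$ along the quarter arc running from a root of $\Omega^2$ to the point where the arc meets the imaginary $\zeta$-axis, and pin the endpoint values at $0$ and $\pi$. However, the way you pin down the imaginary-axis endpoint has two genuine problems. First, you identify that endpoint with $\zeta_t=\sqrt{2E(k)-K(k)}/\bigl(2\sqrt{K(k)}\bigr)$, i.e.\ with the zero of $\mathrm{d}I/\mathrm{d}\zeta$ continued off the real axis. The paper deliberately refrains from any such identification (``we do not know explicitly where on the imaginary axis $\zeta_t$ is''), and for good reason: for $b=k^2$ one has $\Omega^2<0$ on $i\mathbb{R}$, so on the imaginary axis the numerator of (\ref{derivintcond}) is real while $\Omega$ is purely imaginary; by (\ref{tanvfield}) the level curve $\mathrm{Re}(I)=0$ therefore crosses the imaginary axis horizontally at any \emph{regular} point, and nothing forces $\mathrm{Re}(I)$ to vanish at the saddle. (In the $k<k^*$ case the crossing does occur at the saddle, but only because the real axis is already contained in the zero level set, so a crossing there must be a critical point; no such mechanism operates on the imaginary axis.) This unsupported identification also undercuts your fallback of evaluating $I(\zeta_t)$ ``using the saddle relation.''

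Second, and this is the essential gap: your argument that $\mu T(k)\equiv\pi$ at the crossing is incomplete. The symmetry argument yields $\mu\equiv-\mu$ only if the involution fixes, rather than permutes, the spectral elements sitting over the crossing point (you flag this yourself and do not resolve it); and even granting the dichotomy $\mu T(k)\in\{0,\pi\}\pmod{2\pi}$, excluding $0$ by ``strict monotonicity from $0$'' does not close the argument, since a strictly increasing $\mu T(k)$ could a priori reach $2\pi\equiv 0$ at the endpoint. The paper closes exactly this step by a direct computation: for \emph{any} $\zeta$ on the imaginary axis one has $\mu T(k)=\pi$ and $\mathrm{Re}(\lambda)=0$ (one way to see it: for the real $\cn$ potential, $I(\zeta)$ is real-valued on $i\mathbb{R}$, so $\mathrm{Re}(I)=0$ forces $I=0$ there, and (\ref{muparametric}) with $\theta(T(k))=\pi$ gives $\mu T(k)\equiv\pi$), which requires neither knowledge of where the crossing occurs nor any branch-fixing argument. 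If you replace your ``third fact'' by this computation and drop the saddle identification (which your argument does not actually need), the remainder of your proof coincides with the paper's.
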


\begin{proof}
We examine $\zeta \in \sigma_L$ that satisfy (\ref{intcond4}), see Figure \ref{boundaryspectrum} (2d). Similar to the proof of Theorem~\ref{cn-thm} we consider $\zeta$ in the lower-left quarter plane only. The parameter $\zeta$ ranges from $\zeta_3$ to $\zeta_t$ with $\zeta_t\in i \mathbb{R}$. At $\zeta_3,$ $\mu T(k) =0,$ and $\textrm{Re}(\lambda) = 0$. As $\zeta$ moves to $\zeta_t,$ we know that $\mu T(k)$ increases monotonically (Lemma \ref{monotonelemma}) until it reaches $\zeta_t$. We do not know explicitly where on the imaginary axis $\zeta_t$ is, but it satisfies (\ref{intcond4}). For any $\zeta$ on the imaginary axis, we can compute directly $\mu T(k) = \pi$, $\textrm{Re}(\lambda) = 0$. Thus the figure 8 centered at $\mu T(k) = 0$ extends outward to $\mu T(k) = \pi$. Similarly, using symmetries, the figure 8 centered at $\mu T(k) = 2\pi$ extends backward to $\mu T(k) = \pi,$ see Figure \ref{mu-figures}(e). Both figure 8s have $\textrm{Re}(\lambda) =0$ at $\mu T(k) = 0$ and $\mu T(k) = \pi$, so we have stability with respect to perturbations of periods $2 T(k)$ and $T(k)$.
\end{proof}

\subsection{Nontrivial-phases cases}
\begin{thm}\label{figure8s-thm}
Nontrivial-phase solutions in the figure 8s region and the triple-figure 8 region are stable with respect to subharmonic perturbations of period $P T(k)$ if they satisfy the condition
\beq
\label{figure8s-condition} \theta(T(k))-2 i I(-\zeta_t) \le \frac{2\pi}{P},
\eeq
with
\beq \label{figure8sthmresult} \zeta_t = \frac{\sqrt{2 E(k)-K(k) -(b-k^2) K(k)}}{2 \sqrt{K(k)}}. \eeq
\end{thm}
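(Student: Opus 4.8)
The plan is to run the argument of Theorem~\ref{cn-thm} with only two modifications: the phase increment over one period is now the general quantity $\theta(T(k))$ of (\ref{thetaeqn}) rather than the value $\pi$ occurring in the $\cn$ case, and $\overline{\sigma_{\mathcal{L}}\setminus i\mathbb{R}}$ now consists of an outer figure~8 together with an enclosed figure~8 (figure~8s region) or an enclosed triple-figure~8 (triple-figure~8 region) in place of a single figure~8.

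First I would localize the unstable spectrum exactly as in Section~\ref{imagaxis}: since $\textrm{Re}(\lambda)=2\,\textrm{Re}(\Omega(\zeta))$ vanishes precisely when $\zeta$ is real, only $\zeta\in\overline{\sigma_L\setminus\mathbb{R}}$ can produce spectral elements with positive real part. By Section~\ref{regions} such $\zeta$ lie on the arcs of the level set $\{\textrm{Re}(I(\zeta))=0\}$ that emanate from the four roots $\zeta_1,\dots,\zeta_4$ of $\Omega^2$ in (\ref{zeta-roots}); the arc bounding the outer figure~8 meets the real $\zeta$-axis at $-\zeta_t$, where $\zeta_t>0$ is given by (\ref{figure8sthmresult}), so that $-\zeta_t$ is the $\zeta<0$ branch of (\ref{gentop}), identified in the remark following (\ref{gentop}) with the outer figure~8; the arc(s) bounding the enclosed figure~8 or triple-figure~8 meet the real axis at $+\zeta_t$ (with additional real crossings in the triple-figure~8 case).

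Next I would track $\mu T(k)$ along these arcs via (\ref{muparametric}). At each root $\zeta_j$ one has $\Omega=0$, hence $\lambda=0$, and substituting $\Omega=0$ in (\ref{muparametric}) places the endpoints of every arc at $\mu T(k)\in2\pi\mathbb{Z}$, consistently with the four-fold zero of $\sigma_{\mathcal{L}}$ at the origin noted after (\ref{zeta-roots}). Along each arc $\textrm{Re}(I)$ is constant, and --- the decisive point --- in the figure~8s and triple-figure~8 regions the arcs contain no saddle point of $I$, the exceptional point being the self-crossing of a butterfly (which is exactly why the two butterfly regions are excluded from the statement). Hence Lemma~\ref{monotonelemma} forces $\textrm{Im}(I)$, and so $\mu T(k)$ by (\ref{muparametric}), to be strictly monotone on each arc. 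As in the proof of Theorem~\ref{cn-thm}, the outer figure~8 then traces in the $(\mu T(k),\textrm{Re}\,\lambda)$-plane a sideways figure~8 centered at every $2\pi n$ whose extreme abscissae occur where its arc crosses the real $\zeta$-axis, i.e. at $\zeta=-\zeta_t$; evaluating (\ref{muparametric}) there gives half-width $\mu_t T(k)=\theta(T(k))-2iI(-\zeta_t)$, which reduces to $\pi-2iI(-\zeta_t)$ when $b=k^2$.

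The main obstacle is to check that the enclosed figure~8 (respectively triple-figure~8) does not protrude, in the variable $\mu T(k)$, beyond the band $\bigcup_{n}[\,2\pi n-\mu_t,\,2\pi n+\mu_t\,]$ swept out by the outer figure~8. I expect to get this from the same monotonicity considerations, now applied along the real $\zeta$-axis, where $\mathbb{R}\subset\sigma_L$ and $dI/d\zeta$ from (\ref{derivintcond}) is purely imaginary, so that $\mu T(k)$ is monotone between the real crossings $-\zeta_t$ and $+\zeta_t$, combined with the nesting of the two figure~8s in the $\lambda$-plane and the inequality $\Omega^2(-|\zeta|)>\Omega^2(|\zeta|)$ noted after (\ref{gentop}). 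Granting this, the set of $\mu T(k)$ carrying spectrum with $\textrm{Re}(\lambda)>0$ is exactly $\bigcup_{n\in\mathbb{Z}}(\,2\pi n-\mu_t,\,2\pi n+\mu_t\,)$, so --- arguing verbatim as in Theorems~\ref{thm-stokes} and \ref{cn-thm} --- $\textrm{Re}(\lambda)=0$ at every $\mu T(k)=2\pi m/P$ with $m=1,\dots,P-1$ if and only if $\mu_t\le 2\pi/P$, which is (\ref{figure8s-condition}); this completes the proof.
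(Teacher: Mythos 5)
Your overall route is the paper's: split $\sigma_L$ into the real axis, the arc between $\zeta_3$ and $\zeta_2$ (outer figure 8, crossing the real axis at $-\zeta_t$), and the arc between $\zeta_4$ and $\zeta_1$ (enclosed figure 8 or triple-figure 8, crossing at $+\zeta_t$), use Lemma \ref{monotonelemma} to make $\mu T(k)$ monotone along each arc, and read off the half-width $\mu_t T(k)=\theta(T(k))-2iI(-\zeta_t)$ of the outer component exactly as in Theorem \ref{cn-thm}. Up to that point your argument matches the paper's cases 1 and 2 (the paper even notes there that this only establishes necessity of the band condition for the outer component).

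The gap is in the step you yourself flag with ``I expect to get this'' and ``Granting this'': the proof that the enclosed figure 8 (or triple-figure 8) does not protrude beyond the band swept out by the outer one, i.e.\ that $-2iI(\zeta_t)\le -2iI(-\zeta_t)$. The ingredients you cite do not deliver this. Monotonicity of $\mu T(k)$ along the real $\zeta$-axis between $-\zeta_t$ and $+\zeta_t$ (equivalently, that $g(\zeta)=-2iI(\zeta)$ has no critical point in $(-\zeta_t,\zeta_t)$, since by (\ref{derivintcond}) $g'$ vanishes only at $\pm\zeta_t$) only shows the two values differ; it does not tell you \emph{which} is larger, and the direction depends on the sign of $\Omega$ on the real axis, i.e.\ on the branch/sign convention $\tau$ built into $I$. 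The facts you invoke to fix the direction --- the nesting of the two figure 8s in the $\lambda$-plane and $\Omega^2(-|\zeta|)>\Omega^2(|\zeta|)$ --- concern the vertical ($\lambda$) extent of the curves, not their extent in $\mu$, so they cannot decide the inequality. The paper closes exactly this point by a computation: restricting $g(\zeta)=-2iI(\zeta)$ to $\zeta\in\mathbb{R}$ (where it is real-valued), noting its only critical points are $\pm\zeta_t$, and checking $g''$ there to conclude $-\zeta_t$ is a local maximum and $\zeta_t$ a local minimum, hence $g(-\zeta_t)>g(\zeta_t)$ and the single condition (\ref{figure8s-condition}) is sufficient. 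Without that (or an equivalent sign determination), your proposal establishes the condition only for the outer figure 8, i.e.\ only a necessary-type bound, not the stated sufficiency; a minor additional point is that in the triple-figure 8 case the enclosed arc has extra real-axis crossings, which the $g$-argument also handles since all such crossing values lie between $g(\zeta_t)$ and $g(-\zeta_t)$.
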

\begin{proof}
We examine $\zeta \in \sigma_L$ which satisfy (\ref{intcond4}), see Figure \ref{ntpspectrum}(2a,2c). Recall that $\zeta_i$ corresponds to the root of $\Omega^2(\zeta)$ in the $i$th quadrant from (\ref{zeta-roots}). The $\zeta$ spectrum has three components which we examine separately:
\begin{enumerate}
\item \textit{$\zeta$ strictly real, corresponding to $\sigma_{\mathcal{L}}\cap i \mathbb{R}$}.

$\zeta$ strictly real corresponds to $\lambda$ strictly imaginary, so these values do not need to be examined further.

\item \textit{$\zeta$ ranging between $\zeta_3$ and $\zeta_2,$ corresponding to the outside figure 8.}

For $\zeta$ ranging between $\zeta_3$ and $\zeta_2$ we follow identical steps from the proof of Theorem \ref{cn-thm}. Taking the right-most edge of the outside figure 8 centered at $\mu T(k) = 0$ to be less than $2\pi/P$ and the left-most edge of the outside figure 8 centered at $\mu T(k) = 2\pi$ to be greater than $2\pi (P-1)/P,$ we arrive at analogous conditions to (\ref{mutop-cond}) which reduce to (\ref{figure8sthmresult}) as desired. Note that we have shown only that (\ref{figure8sthmresult}) is a necessary condition.

\item \textit{$\zeta$ ranging between $\zeta_4$ and $\zeta_1,$ corresponding to the enclosed figure 8 or the triple-figure 8.}

For $\zeta$ ranging between $\zeta_4$ and $\zeta_1$, we know from Section \ref{regions} that this corresponds to the enclosed figure 8 (or triple-figure 8). Specifically, the top of this figure 8 (or triple-figure 8) is lower than the top of the other figure 8. It suffices to show that the extent of this figure 8 (or triple-figure 8) in $\mu T(k)$ is less than that of the larger figure 8. Indeed, if the enclosed figure 8 (or triple-figure 8) extends less in $\mu T(k)$ than the larger figure 8 does, then the stability bounds above are sufficient.

It suffices to show that $-2i I(\zeta_t)<-2i I(-\zeta_t).$ Let $g(\zeta) = -2i I(\zeta).$ We know $g(\zeta)$ is a real-valued function with real coefficients for $\zeta\in \mathbb{R}.$ Furthermore, from (\ref{derivintcond}),
\beq \frac{\textrm{d} g(\zeta)}{\textrm{d}\zeta} = \frac{2 E(k)-\left(1+b-k^2+4 \zeta^2\right)K(k)}{i \Omega(\zeta)} . \eeq
The only roots of $\textrm{d} g(\zeta)/\textrm{d}\zeta$ are $\zeta = \pm \zeta_t$. By checking $\textrm{d}^2 g(\zeta)/\textrm{d}\zeta^2$ we see that $g(\zeta_t)$ is a local minimum and $g(-\zeta_t)$ is a local maximum. Since there are no other extrema, $g(-\zeta_t)>g(\zeta_t)$ and (\ref{figure8sthmresult}) is a sufficient condition.
\end{enumerate}
\end{proof}

\begin{figure}
\centering
\includegraphics[width=8cm]{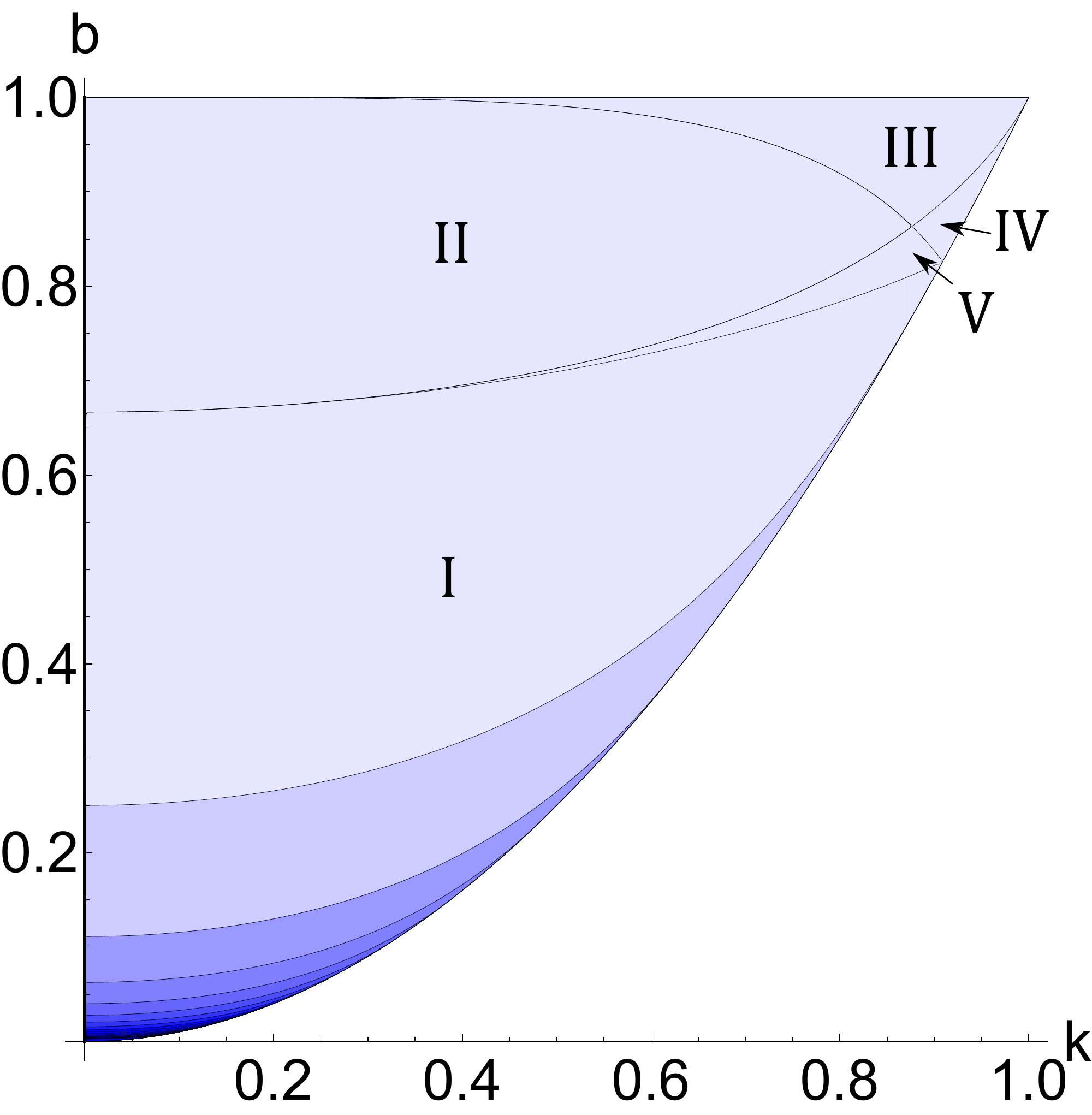}
\caption{A plot of parameter space showing the spectral stability of solutions with respect to various subharmonic perturbations. Lightest blue or darker (entire region): solutions stable with respect to perturbations of the fundamental period. Second lightest blue or darker: solutions stable with respect to perturbations of two times the fundamental period. Third lightest blue or darker: solutions stable with respect to perturbations of three times the fundamental period. Etc. }
\label{mu-allcases}
\end{figure}

\begin{thm}\label{butterfly-thm}
Nontrivial-phase solutions of butterfly type are stable with respect to perturbations of the fundamental period.
\end{thm}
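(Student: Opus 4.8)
The plan is to run the argument of Theorems~\ref{cnstar-thm} and~\ref{figure8s-thm} in the one topologically new configuration that occurs for butterfly solutions. As always it suffices to control the Floquet class $\mu=0$, i.e.\ to show $\textrm{Re}(\lambda)=0$ whenever $\mu(\zeta)T(k)\in 2\pi\mathbb{Z}$. By the symmetry of $\sigma_L$ about the real $\zeta$-axis, and since real $\zeta$ contribute only $\sigma_{\mathcal L}\cap i\mathbb{R}$, it is enough to examine $\zeta\in\overline{\sigma_L\setminus\mathbb{R}}$ with $\textrm{Im}(\zeta)\ge 0$. For a butterfly solution $b$ lies above the curve~(\ref{splitbutterflies}), so~(\ref{gentop}) has no real root and $\overline{\sigma_L\setminus\mathbb{R}}$ never meets the real axis; being bounded (boundedness of $\lambda$ forces that of $\zeta$ through~(\ref{OmegacondW})), its intersection with the closed upper half-plane is therefore a single arc $\Gamma$ whose endpoints must be zeros of $\Omega^2$ --- hence $\Gamma$ joins $\zeta_1$ to $\zeta_2$, the two roots in~(\ref{zeta-roots}) with positive imaginary part. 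This is the genuinely new feature: in the figure~8 region the arcs join $\zeta_2$ to $\zeta_3$ and $\zeta_1$ to $\zeta_4$ through the real axis, whereas here they join $\zeta_1$ to $\zeta_2$ and $\zeta_3$ to $\zeta_4$ and stay off it.

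First I would record that both endpoints of $\Gamma$ lie in the class $\mu=0$: the origin has multiplicity four in $\sigma_{\mathcal L}$, with generalized null vectors of $\mathcal L$ given by the phase- and translation-variations of $\phi=Re^{i\theta}$, which in the variables of~(\ref{linpert}) are $(0,R)$ and $(R',c/R)$ and are genuinely $T(k)$-periodic; hence $\mu(\zeta_j)T(k)\in 2\pi\mathbb{Z}$ for $j=1,\dots,4$. Next, $\Gamma$ lies on $\{\textrm{Re}(I)=0\}$, so Lemma~\ref{monotonelemma} applied to $I(\zeta)$ shows $\textrm{Im}(I(\zeta))$, and therefore $\mu(\zeta)T(k)$ by~(\ref{muparametric}), is strictly monotone along $\Gamma$ away from critical points of $I$; by~(\ref{derivintcond}) these sit only at $\zeta=\pm\zeta_t$ with $\zeta_t$ as in~(\ref{figure8sthmresult}), which is purely imaginary in the butterfly regime. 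For a non-self-intersecting butterfly $\Gamma$ contains no such point and $\mu(\zeta)T(k)$ is strictly monotone on all of $\Gamma$; for a self-intersecting butterfly $\Gamma$ passes through the saddle $\zeta=i|\zeta_t|$ --- this is exactly the self-intersection --- and I would split $\Gamma$ there into two subarcs on each of which $\mu(\zeta)T(k)$ is strictly monotone.

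Finally, strict monotonicity means $\mu(\zeta)T(k)$ attains each value at most once per (sub)arc, so the endpoint data $\mu(\zeta_1)T(k),\mu(\zeta_2)T(k)\in 2\pi\mathbb{Z}$, together with the fact that the interior of $\Gamma$ maps into $\{\textrm{Re}(\lambda)\neq 0\}$ (it is not real and is not a $\zeta_j$), force $\Gamma$ to sweep $\mu(\zeta)T(k)$ monotonically across exactly one period $[2\pi n,\,2\pi(n+1)]$; in the self-intersecting case one checks from~(\ref{muparametric}) that the saddle value of $\mu(\zeta)T(k)$ at $i|\zeta_t|$ is not an integer multiple of $2\pi$. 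Hence $\mu(\zeta)T(k)\in 2\pi\mathbb{Z}$ on $\Gamma$ only at $\zeta_1$ and $\zeta_2$, where $\lambda=0$; combined with the purely imaginary contribution of real $\zeta$ this shows the $\mu=0$ part of $\sigma_{\mathcal L}$ lies on $i\mathbb{R}$, i.e.\ stability with respect to perturbations of the fundamental period. The main obstacle is precisely this counting step --- ruling out that a wing sweeps two or more periods of $\mu T(k)$ (which would produce an interior touch-down at a multiple of $2\pi$ with $\textrm{Re}(\lambda)\neq 0$), and symmetrically ruling out a zero sweep. I expect this to go exactly as in Theorems~\ref{cnstar-thm} and~\ref{figure8s-thm}: follow the tangent field~(\ref{tanvfield}) around $\Gamma$ from $\zeta_1$ to $\zeta_2$ and read off the single monotone sweep visible in the parametric plots of Figure~\ref{mu-figures}(g,h); a fully rigorous version would instead count, via the argument principle, the winding of the Floquet multiplier $e^{i\mu T(k)}$ as $\zeta$ traverses $\Gamma$.
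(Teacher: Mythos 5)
Your overall strategy is the same as the paper's: discard real $\zeta$ (which gives only $\sigma_{\mathcal L}\cap i\mathbb{R}$), recognize that in the butterfly regime the non-real part of $\sigma_L$ consists of two arcs joining roots of $\Omega^2$ pairwise across the imaginary axis ($\zeta_1$ to $\zeta_2$ and $\zeta_3$ to $\zeta_4$, exactly as in the paper's proof), invoke Lemma \ref{monotonelemma} to get monotonicity of $\mu T(k)$ along an arc, and conclude that only the arc's endpoints can carry $\mu T(k)\in 2\pi\mathbb{Z}$, where $\lambda=0$. However, there is a genuine gap at the one step that actually closes the argument. The paper's proof uses the specific endpoint values: at $\zeta_3$ one has $\mu T(k)=0$ and at $\zeta_4$ one has $\mu T(k)=2\pi$, i.e.\ each wing sweeps exactly one Floquet period, so monotonicity immediately forbids any interior point of the arc from having $\mu T(k)\in 2\pi\mathbb{Z}$. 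You only establish (via the null-vector/periodicity argument) that the endpoints lie in the class $\mu T(k)\in 2\pi\mathbb{Z}$, and you explicitly concede that ruling out a zero sweep or a sweep of two or more periods is "the main obstacle," deferring it to what is "visible in the parametric plots" or to an unexecuted argument-principle computation. That is precisely the quantitative input the theorem needs: if a wing swept $4\pi$ in $\mu T(k)$, there would be an interior touch-down with $\mathrm{Re}(\lambda)\neq 0$ and the co-periodic stability claim would fail, so this cannot be left to the figures. To complete your argument you would need to evaluate $2iI(\zeta_j)+\theta(T(k))$ at the relevant roots (\ref{zeta-roots}) via (\ref{muparametric}), as the paper implicitly does.

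Two smaller points. First, your treatment of the self-intersecting butterfly assumes the arc passes through the saddle of $I$ at $\zeta=\pm i|\zeta_t|$ and that "this is exactly the self-intersection"; the self-intersection of the butterfly is a feature of the $\lambda$-plane picture (two distinct $\zeta$ on the arc mapping to the same $\lambda=2\Omega$) and need not correspond to a critical point of $I$ on the $\zeta$-arc, and your claim that the saddle value of $\mu T(k)$ is not a multiple of $2\pi$ is asserted, not checked — though being attentive to the saddle hypothesis in Lemma \ref{monotonelemma} is a point of care the paper's proof passes over silently. Second, your identification of the arc endpoints with the roots of $\Omega^2$ and the assignment $\mu(\zeta_j)T(k)\in 2\pi\mathbb{Z}$ via the $T(k)$-periodic null modes $(0,R)$ and $(R',c/R)$ is a reasonable alternative justification to the paper's direct assertion, but it tacitly assumes the squared-eigenfunction solutions at $\zeta_j$ are exactly these periodic modes rather than some other bounded solution at $\lambda=0$; this deserves a sentence of verification.
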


\begin{proof}
We examine $\zeta \in \sigma_L$ satisfying (\ref{intcond4}), see Figure \ref{ntpspectrum}(2b,2d). The $\zeta$ spectrum consists of three components:
\begin{enumerate}
\item $\zeta$ strictly real, corresponding to $\sigma_{\mathcal{L}}\subset i \mathbb{R}$.
\item $\zeta$ ranging between $\zeta_3$ and $\zeta_4,$ corresponding to two of the butterfly wings.
\item $\zeta$ ranging between $\zeta_2$ and $\zeta_1,$ corresponding to the other two butterfly wings.
\end{enumerate}
Case 1 consists only of values of $\zeta$ corresponding to $\lambda$ with zero real part so it need not be examined. Cases 2 and 3 are symmetric in $\mu$ so it suffices to look at case 2. With $\zeta = \zeta_3,$ $\mu T(k) = 0$ with $\textrm{Re}(\lambda) = 0$. Then, from Lemma \ref{monotonelemma}, $\mu T(k)$ increases monotonically as $\zeta$ varies from $\zeta_3$ to $\zeta_4$. At $\zeta = \zeta_4$, $\mu T(k)  =2\pi$, with $\textrm{Re}(\lambda) = 0$. Because of the monotone increase in $\mu T(k)$, $\zeta_3$ and $\zeta_4$ are the only possible values of $\zeta$ which correspond to $\mu T(k) =0,2\pi$. Since $\textrm{Re}(\lambda) = 0$ for both of these values of $\zeta,$ we have stability for perturbations of period $T(k)$ as desired.
\end{proof}

The above results are summarized in Figure \ref{mu-allcases} where we plot the different regions of parameter space corresponding to spectral stability with respect to different classes of subharmonic perturbations.

\section{Approximating the greatest real part of the spectrum}\label{approximations}

In this section we find an approximation to the value of the spectral element $\sigma_{max} \in \sigma_{\mathcal{L}}$ with greatest real part. This value is significant because it corresponds to the eigenfunction with the fastest growth rate. For the Stokes wave case and for the $\dn$ solution case $\text{Re}(\sigma_{max})$ is known explicitly, so in this section we focus on approximating $\sigma_{max}$ for the $\cn$ solutions and nontrivial-phase solutions. In the Stokes wave case $\text{Re}(\sigma_{max})=b$. This is seen from maximizing the real component of (\ref{stokes-param}). For the $\dn$ solution case, from (\ref{dn-real}) we know that the spectrum extends to $\text{Re}(\sigma_{max})=\sqrt{1-k^2}$.

From (\ref{derivO}) and (\ref{derivZ}) we have an expression for the slope at any point in the set $S_\Omega$. $\sigma_{max}$ occurs when the slope at that point is infinity, \textit{i.e.}, when the denominator in (\ref{derivO}) is identically zero:
\beq \label{grp-eqn} \frac{\textrm{d} \Omega_r}{\textrm{d}\zeta_r}+\frac{\textrm{d} \Omega_r}{\textrm{d} \zeta_i} \frac{\textrm{d} \zeta_i}{\textrm{d} \zeta_r} = 0.\eeq
To simplify this equation we note that the expressions for $ \textrm{d} \Omega_r/\textrm{d}\zeta_r$ and $\textrm{d} \Omega_r\textrm{d} \zeta_i $ are found using (\ref{Omegaw}) by substituting in $\Omega = \Omega_r+i \Omega_i$ and $\zeta=\zeta_r + i \zeta_i$, taking real and imaginary parts, and differentiating with respect to $\zeta_r$ and $\zeta_i$. For the expression $ \textrm{d} \zeta_i/\textrm{d} \zeta_r $ we use (\ref{derivZ}) and the fact that
\beq \label{dzetar} \frac{\textrm{d} \textrm{Re}(I)}{\textrm{d} \zeta_r} = \textrm{Re}\left[\frac{\textrm{d}I}{\textrm{d} \zeta}\right],\eeq
\beq \label{dzetai} \frac{\textrm{d} \textrm{Re}(I)}{\textrm{d} \zeta_i} = - \textrm{Im}\left[\frac{\textrm{d}I}{\textrm{d} \zeta}\right],\eeq
from Section \ref{regions}. Using (\ref{derivintcond}), we find the real and imaginary components of $\textrm{d}I/\textrm{d}\zeta$ as
\begin{align}
\label{redI} \textrm{Re}\left[\frac{\textrm{d}I}{\textrm{d}\zeta}\right] & = \frac{2 E(k) \Omega_r+K(k) \left( -8 \zeta_i \zeta_r \Omega_i-\left(1+b-k^2+4\zeta_r^2-4 \zeta_i^2 \right) \Omega_r \right)}{2 \left(\Omega_i^2+\Omega_r^2\right)}, \\
\label{imdI} \textrm{Im}\left[\frac{\textrm{d}I}{\textrm{d}\zeta}\right] & =- \frac{2 E(k) \Omega_i-K(k) \left( -8 \zeta_i \zeta_r \Omega_r+\left(1+b-k^2+4\zeta_r^2-4\zeta_i^2 \right) \Omega_i \right)}{2 \left(\Omega_i^2+\Omega_r^2\right)},
\end{align}
Using (\ref{Omegaw}), (\ref{redI}), and (\ref{imdI}) we simplify (\ref{grp-eqn}):
\begin{align}
\begin{split}
\label{conic-eqn}
\big(-1&+3b^2+k^4+16\zeta_i^4-2b\left(-1+2k^2+8\zeta_i^2\right)\\
&+ 8c \zeta_r+16\zeta_r^2+32 \zeta_i^2\zeta_r^2+16\zeta_r^4+8k^2(\zeta_i^2-\zeta_r^2)\big)K(k) \\
&+  \left(2-6b +2k^2+8 \zeta_i^2-24 \zeta_r^2\right) E(k)=0.
\end{split}
\end{align}

This equation gives a condition on the real and imaginary parts of $\zeta.$ By construction, if (\ref{conic-eqn}) and (\ref{intcond4}) are satisfied, then $\zeta\in\sigma_L$ maps to $\sigma_{max}$. We denote such $\zeta$ as $\zeta_{max}$. We note that in the trivial-phase case, (\ref{conic-eqn}) is an equation for a conic section in the variables $\zeta_r^2$ and $\zeta_i^2$. In Figure \ref{grp-conics} we plot values of $(\zeta_r,\zeta_i)$ which satisfy (\ref{conic-eqn}) along with values of $\zeta=\zeta_r+i \zeta_i$ satisfying (\ref{intcond4}). The intersection of these curves gives $\zeta_{max}$.

\begin{figure}
\centering
\begin{tabular}{cccc}
  \includegraphics[width=36mm]{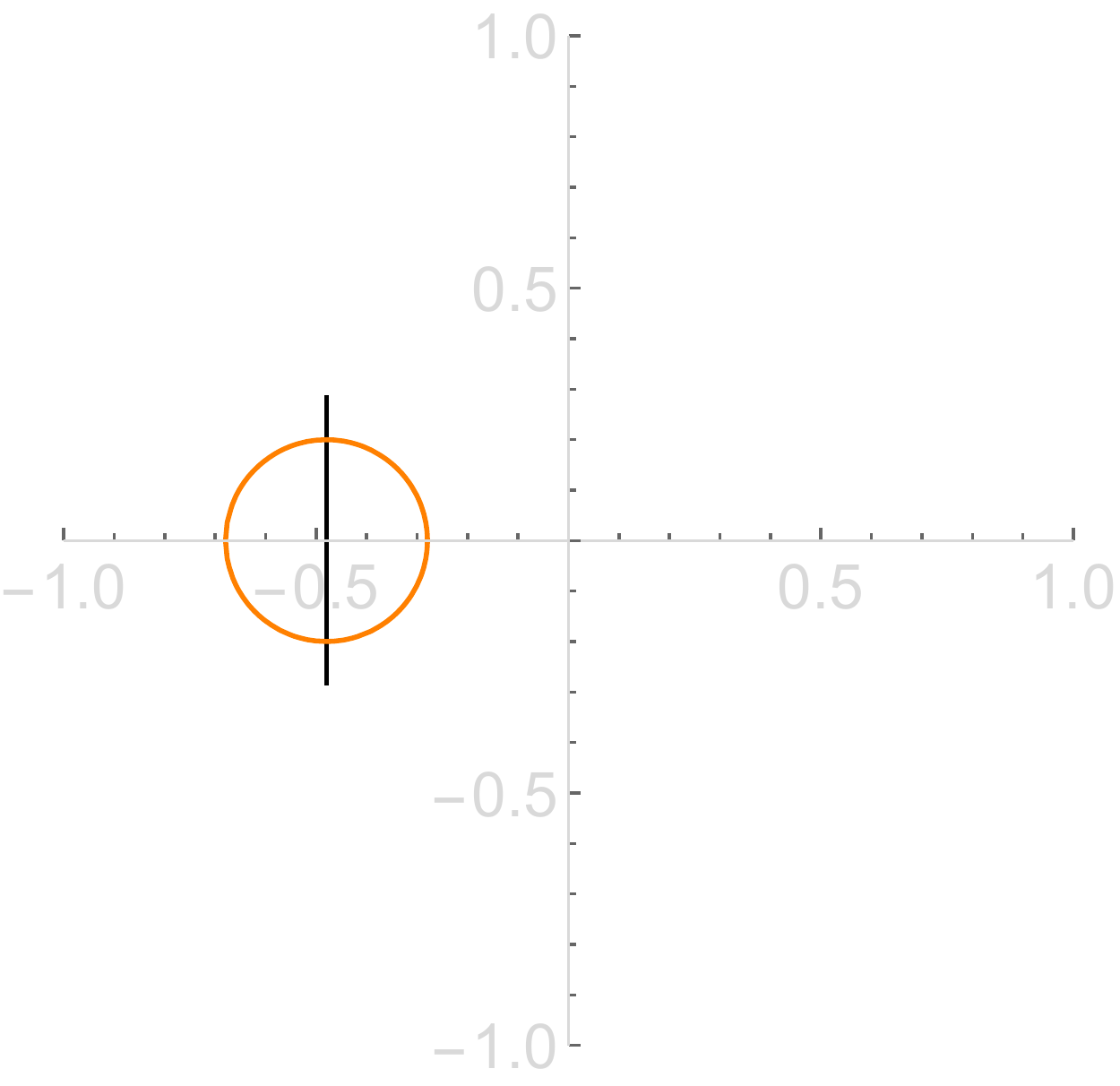} & \includegraphics[width=36mm]{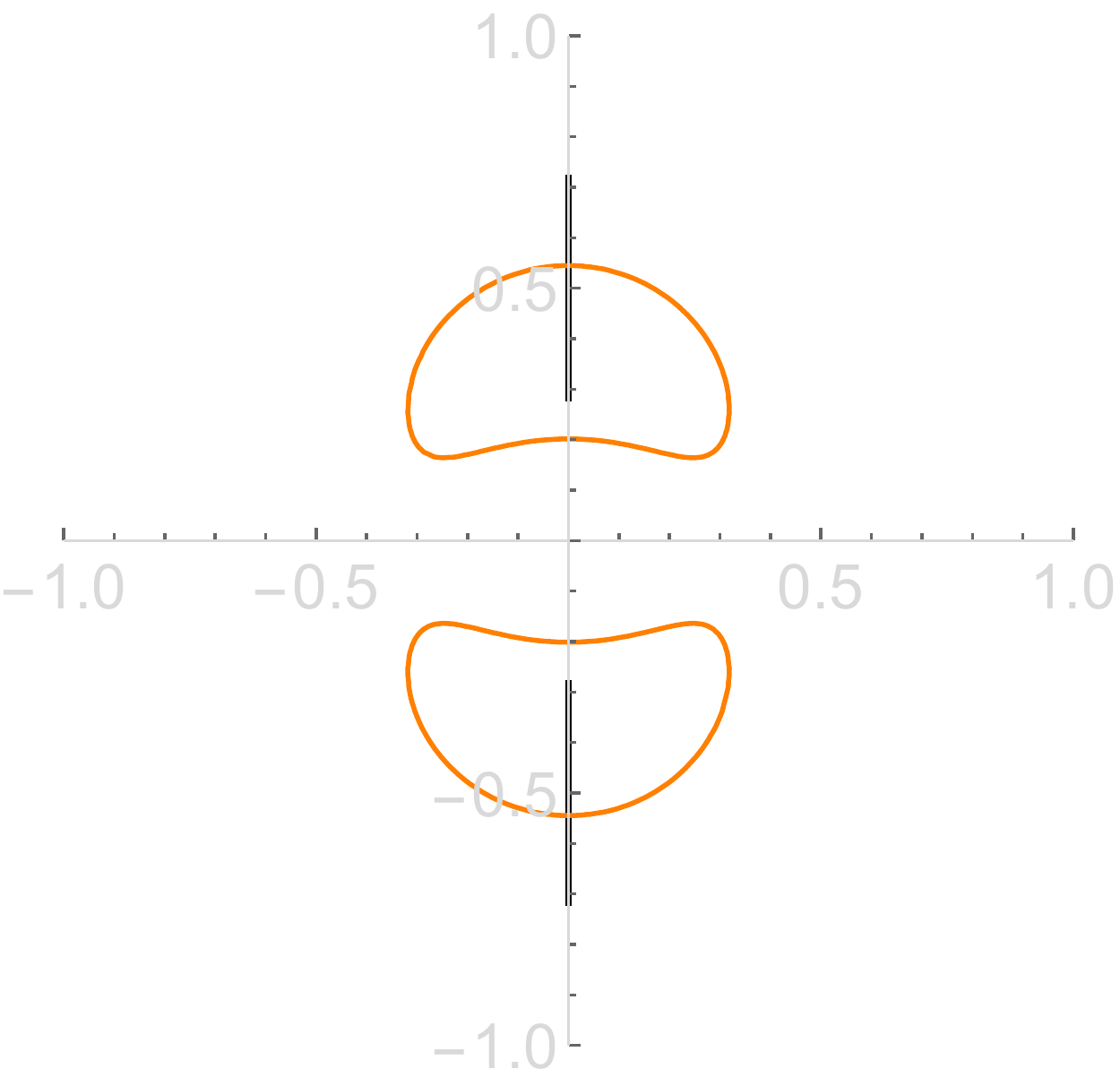} & \includegraphics[width=36mm]{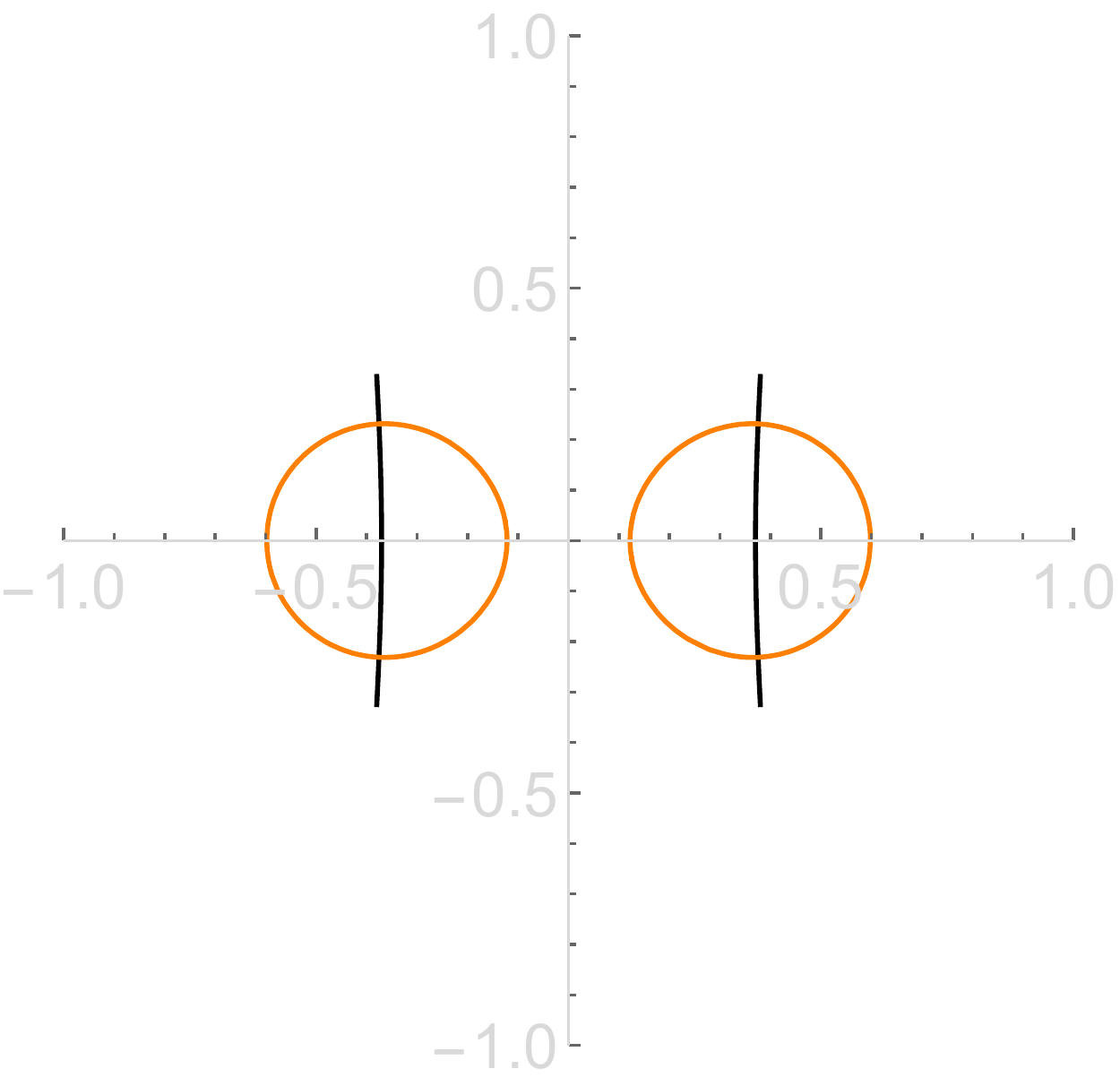} & \includegraphics[width=36mm]{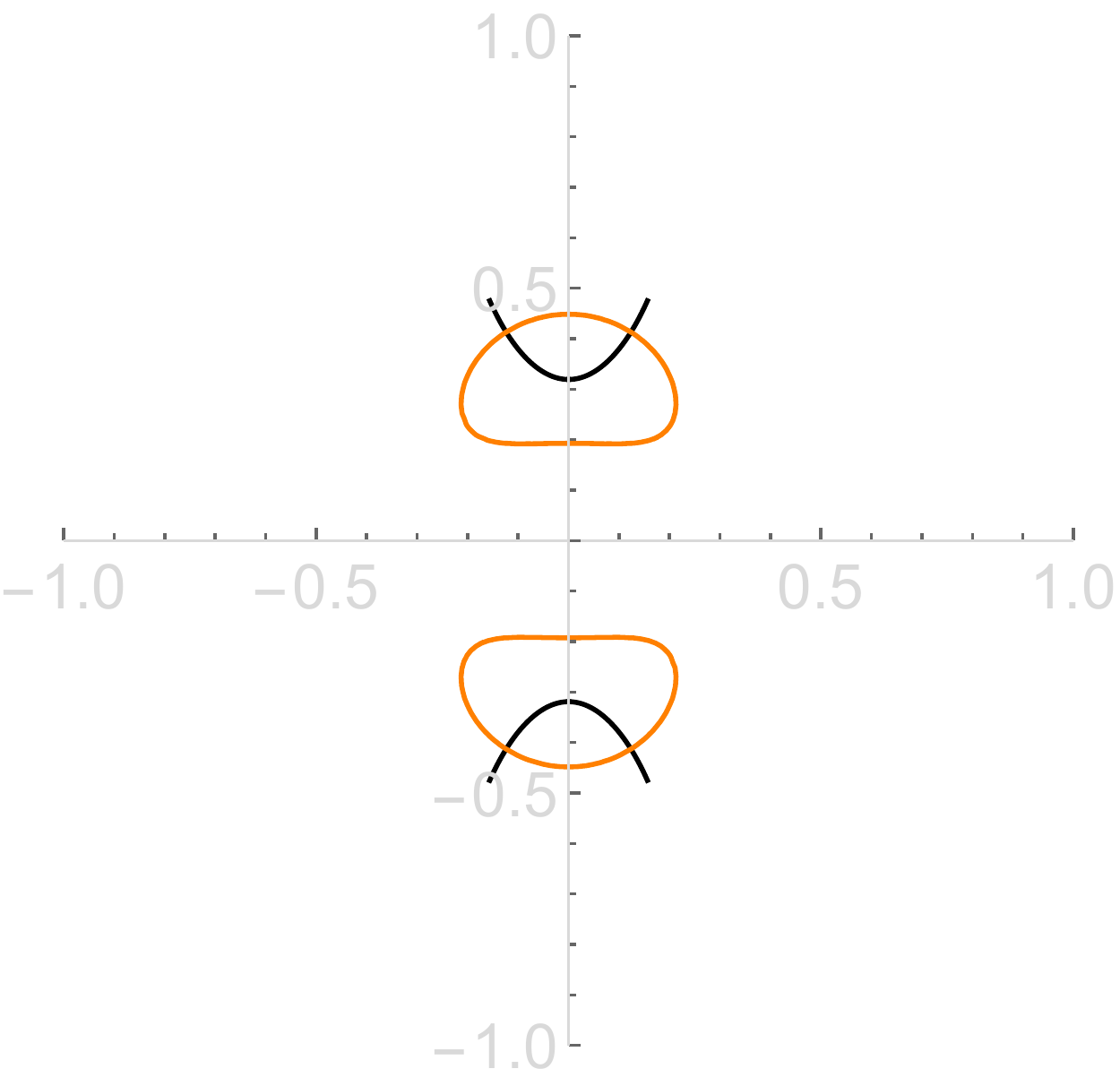} \\
(a) & (b) & (c) & (d)  \\[4pt]
  \includegraphics[width=36mm]{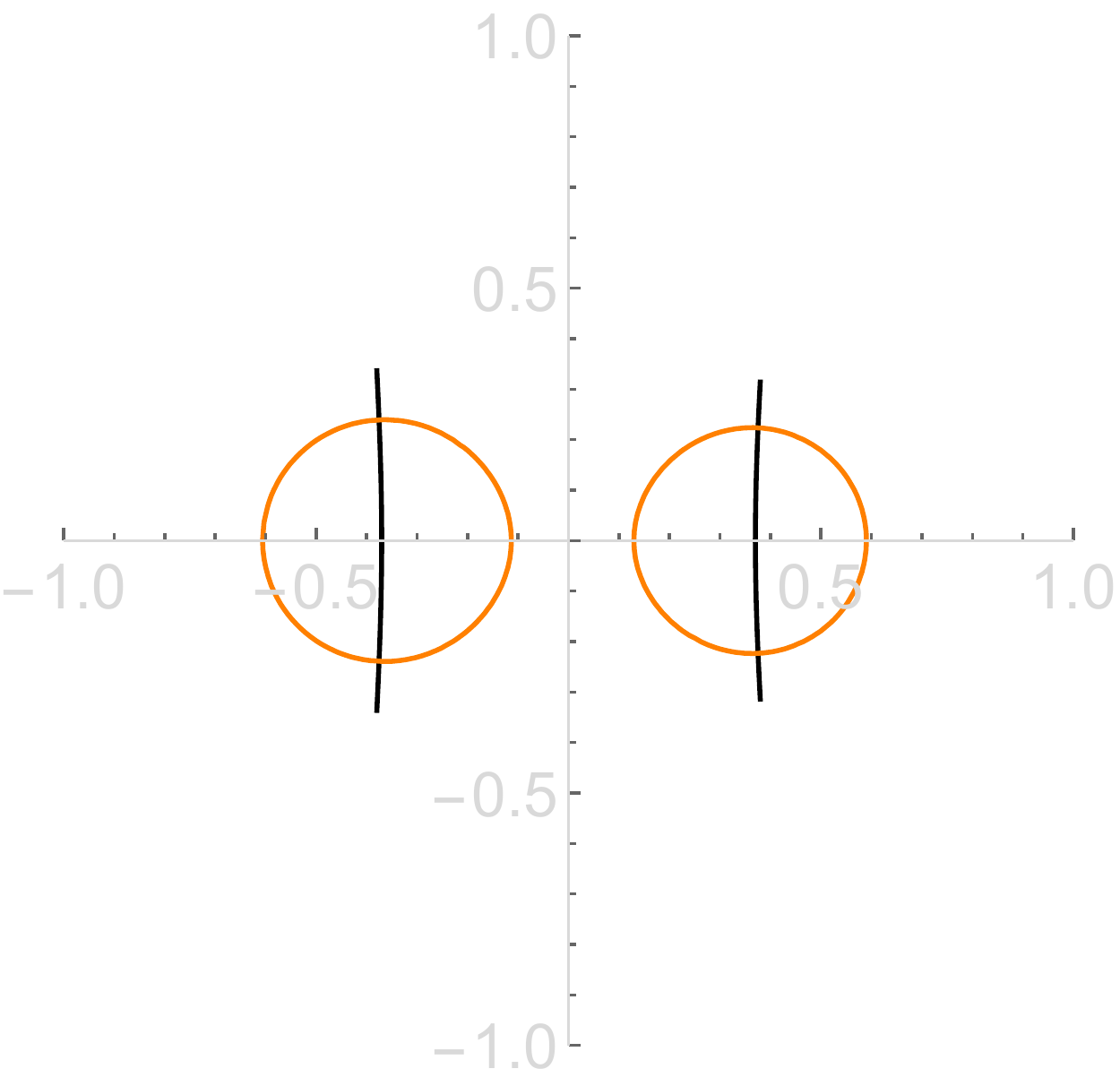} & \includegraphics[width=36mm]{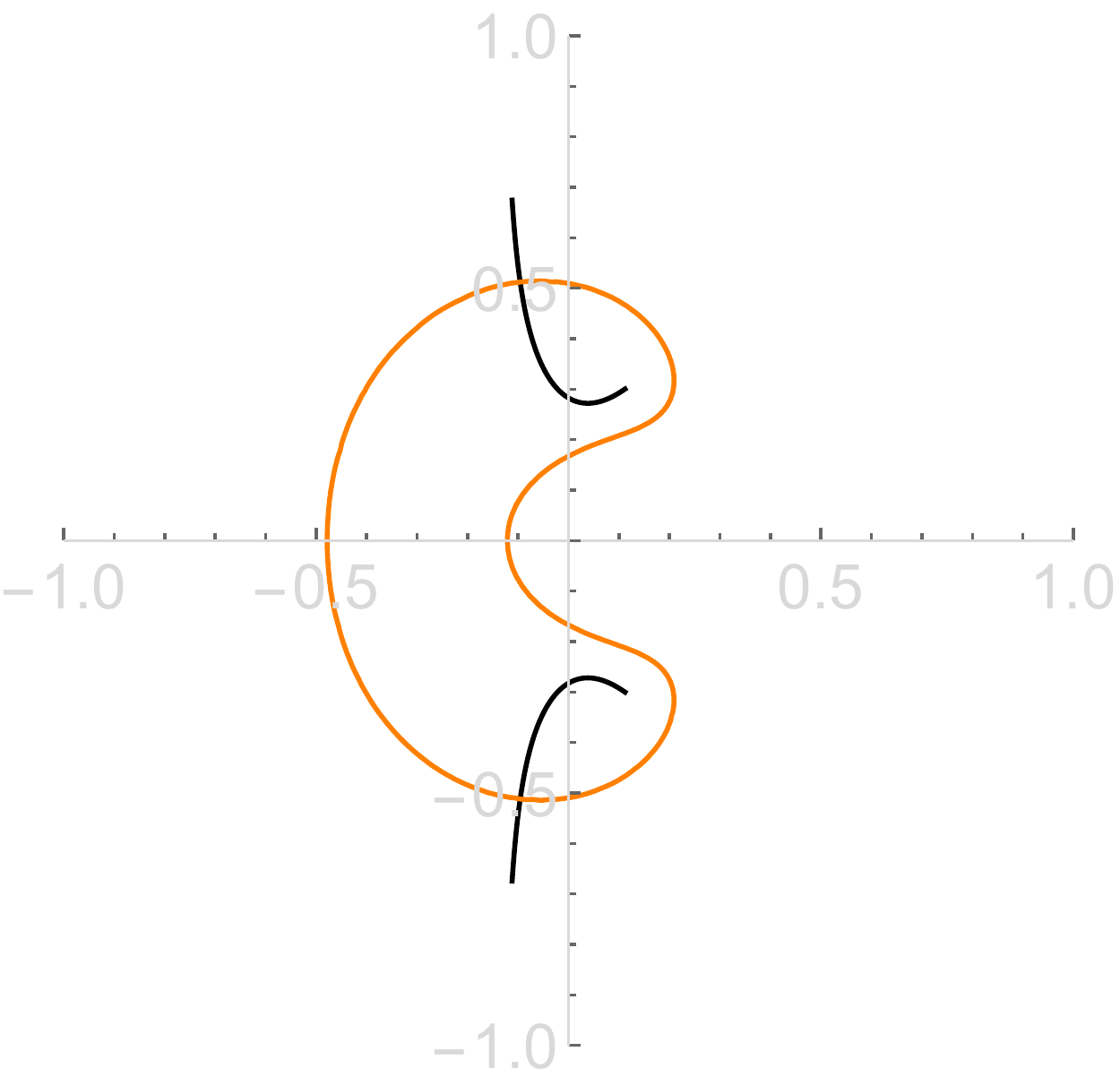} & \includegraphics[width=36mm]{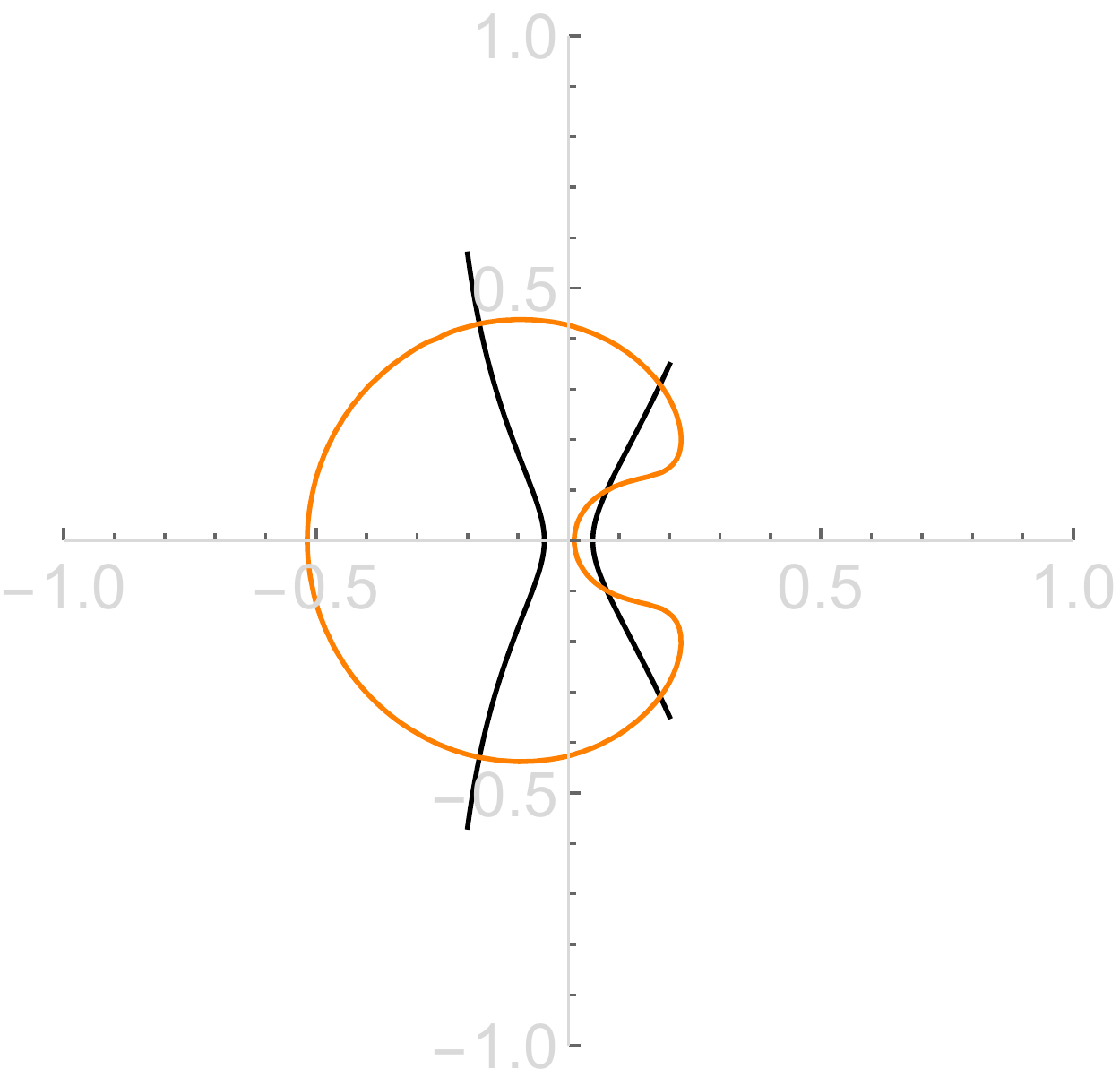} & \includegraphics[width=36mm]{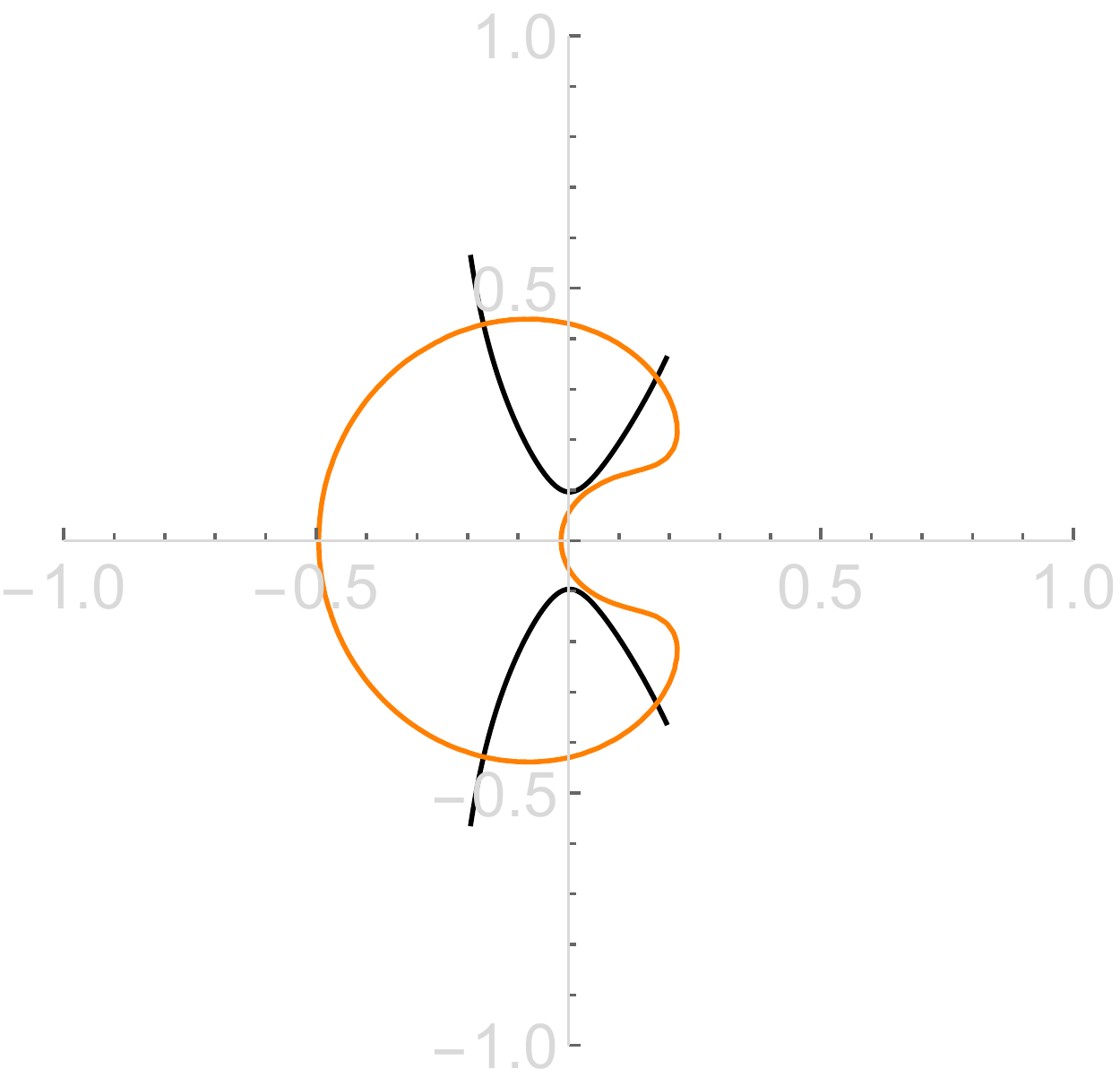} \\
(e) & (f) & (g) & (h)
\end{tabular}
\caption{The $\sigma_L$ spectrum (black) along with the curve corresponding to greatest real part of the $\sigma_{\mathcal{L}}$ spectrum (orange) satisfying (\ref{conic-eqn}). (a) Stokes wave solution, $(k,b)=(0,0.08);$ (b) $\dn$ solution, $(k,b)=(0.9,1);$ (c) $\cn$ solution with piercing, $(k,b)=(0.65,0.4225);$ (d) $\cn$ solution without piercing, $(k,b)=(0.95,0.9025);$ (e) double-figure 8 solution, $(k,b)=(0.65,0.423);$ (f) non-self-intersecting butterfly solution, $(k,b)=(0.9,0.95);$ (g) triple-figure 8 solution, $(k,b)=(0.89,0.84);$ (h) self-intersecting butterfly solution, $(k,b)=(0.9,0.85)$.}
\label{grp-conics}
\end{figure}

By simultaneously solving (\ref{conic-eqn}) and (\ref{intcond4}) and substituting into (\ref{lambdacond}) and (\ref{OmegacondW}) we have an exact expression for $\sigma_{max}$. For the rest of this section we generate series expansions for (\ref{intcond4}) and show that even using low-order approximations we are able to reproduce much of the spectrum, including $\sigma_{max}$.

From Section \ref{regions}, we know a few points of $\sigma_L$ explicitly. Because the functions we are working with are analytic, we can perform series expansions around these explicitly known points. The points we have explicit expressions for are $\zeta_c,$ \textit{i.e.}, the $\zeta$ corresponding to $\lambda = 0$, and $\zeta_t$, the $\zeta$ corresponding to the tops of the figure 8 or triple-figure 8. In what follows we outline a procedure for finding an approximation to points in $S_\Omega$ around these explicitly known points. These expansions provide approximations to the set $S_\Omega$, and using the mapping (\ref{lambdacond}) and (\ref{OmegacondW}), results in approximations to the $\sigma_{\mathcal{L}}$ spectrum.

\newpage

{\it Procedure for finding a series approximation to $\zeta$ satisfying (\ref{intcond4}) around $\zeta_c$:}
\begin{enumerate}
\item Expand the expression inside the real part of (\ref{intcond4}) around $\zeta_c$ in a Puiseux series \cite{D89} to give:
\beq \label{intcond-approx} \textrm{Re}\left( (a_1+b_1 i)(\zeta- \zeta_c)^{1/2}+(a_2+b_2 i)(\zeta- \zeta_c)^{3/2}+(a_3+b_3 i)(\zeta- \zeta_c)^{5/2}+ \ldots\right) = 0, \eeq
where $a_i,b_i\in\mathbb{R}$ are the real and imaginary parts of the coefficients of the terms in the Puiseux series.
\item Let
\beq \label{delta-eqn} \delta = \delta_r+i \delta_i = (\zeta-\zeta_c)^{1/2},\eeq
for $\delta_r,\delta_i\in \mathbb{R}$. Then (\ref{intcond-approx}) becomes
\beq \label{intcond-delta} \textrm{Re}\left( (a_1+b_1 i)\delta+(a_2+b_2 i)\delta^3+(a_3+b_3 i)\delta^5+ O(\delta^7) \right) = 0. \eeq
\item Near $\zeta=\zeta_c,$ $\delta$ is small. Let $\delta = \delta_r(\delta_i) + i \delta_i,$ with
\beq \label{delta-ri} \delta_r(\delta_i) = \delta_1 \delta_i+ \delta_3 \delta_i^3+\delta_5 \delta_i^5+O(\delta_i^7).\eeq
\item Substituting (\ref{delta-ri}) into (\ref{intcond-delta}) and simplifying the expression on the left-hand side, we equate powers of $\delta_i$ to solve for $\delta_1,\delta_3,\delta_5,\ldots$ sequentially. We find
\begin{align}
\delta_1 = &  \frac{b_1}{a_1},\\
\delta_3 = & \frac{3 a_1^2 a_3 b_1-a_3 b_1^3-a_1^3 b_3+3 a_1 b_1^2 b_3}{a_1^4}, \\
\begin{split}
\delta_5 = & \frac{1}{a_1^7} \bigg(a_1^6 b_5-a_1^5 (3 a_3 b_3+5 a_5 b_1)+a_1^4 b_1 \left(9 a_3^2-10 b_1 b_5-6 b_3^2\right)+10 a_1^3 b_1^2 (3 a_3 b_3+a_5 b_1) \\
& +a_1^2 b_1^3 \left(-12 a_3^2+5 b_1 b_5+18 b_3^2\right)-a_1 b_1^4 (15 a_3 b_3+a_5 b_1)+3 a_3^2 b_1^5\bigg),
\end{split} \\
\cdots &\nonumber
\end{align}
\item Solving (\ref{delta-eqn}) for $\zeta$ results in an approximation for $\zeta$ as a function of $\delta_i$ in terms of its real and imaginary parts:
\beq \label{zeta-expansion} \zeta = \delta_r(\delta_i)^2-\delta_i^2+\textrm{Re}(\zeta_c) + \left(2 \delta_r(\delta_i) \delta_i+\textrm{Im}(\zeta_c)\right) i.\eeq
We call (\ref{zeta-expansion}) an $n$th-order expansion where $n$ is the largest power of $\zeta_i$ from (\ref{delta-ri}) included. For instance, a third-order expansion for $\zeta$ is
\beq \label{zeta-expansion3} \zeta =  \left(\delta_1 \delta_i+ \delta_3 \delta_i^3\right)^2-\delta_i^2+\textrm{Re}(\zeta_c)+ \left(2 \left(\delta_1 \delta_i+ \delta_3 \delta_i^3\right) \delta_i+\textrm{Im}(\zeta_c)\right) i.\eeq
\end{enumerate}

\begin{figure}
\centering
\begin{tabular}{cc}
  \includegraphics[width=75mm]{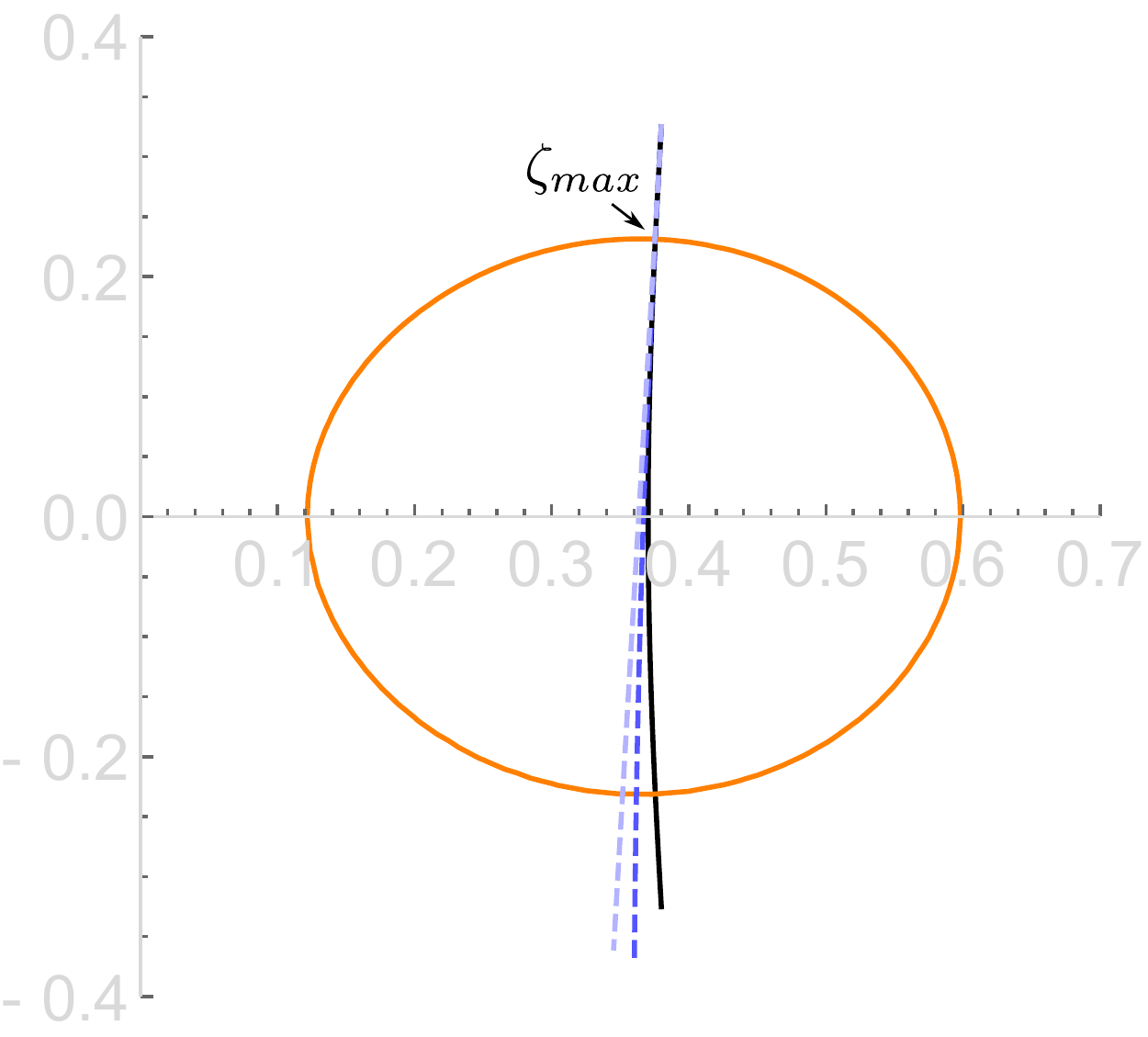} & \includegraphics[width=75mm]{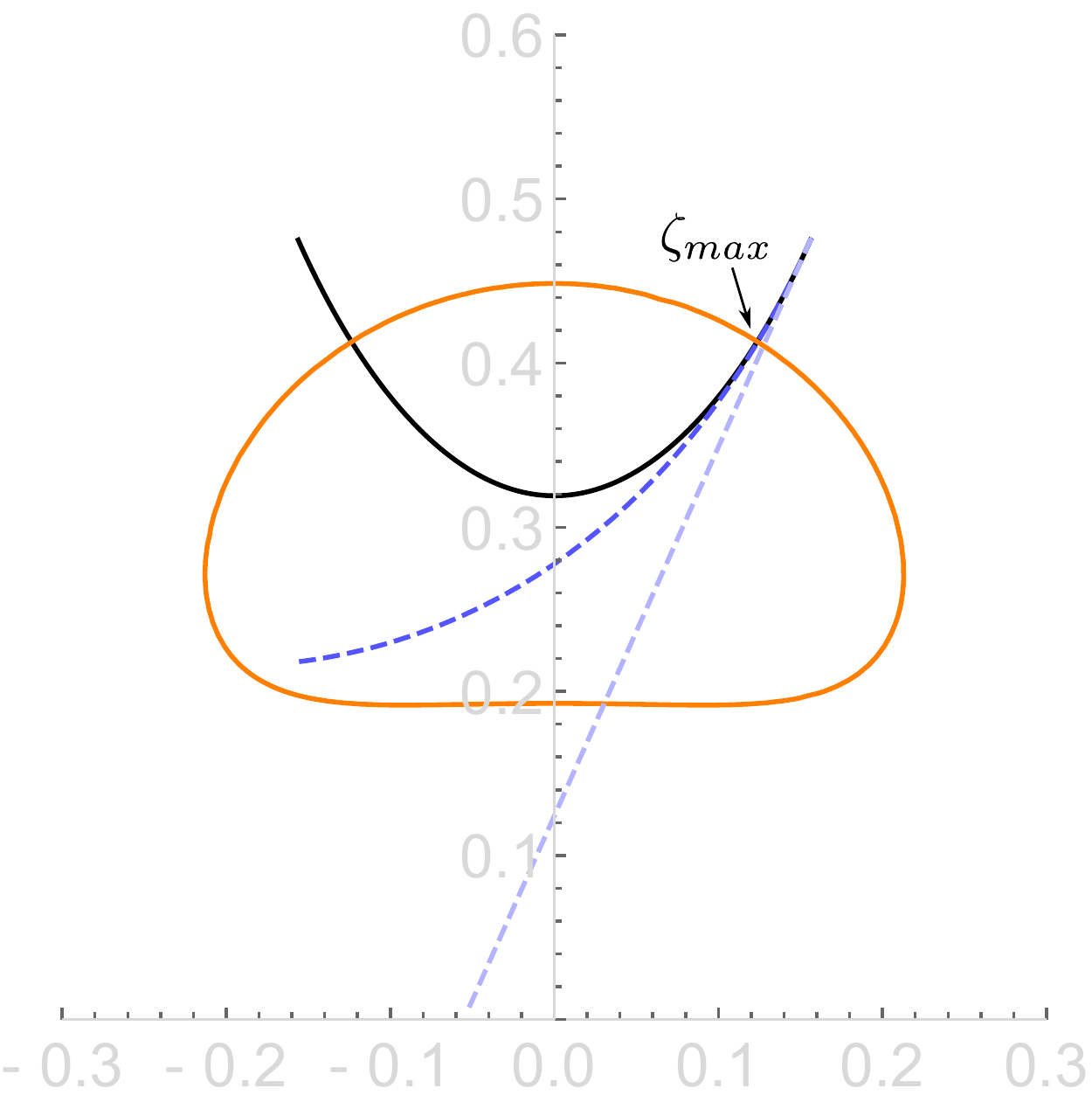} \\
(a) & (b)
\end{tabular}
\caption{Approximating the $\sigma_L$ spectrum for $\cn$ solutions. Shown are the $\sigma_L$ spectrum (black solid curve), the curve corresponding to greatest real part of the $\sigma_{\mathcal{L}}$ spectrum (orange solid curve), $\zeta_{max}$ at the intersection point of the black and orange curves, the first-order approximation to $\sigma_L$ around $\zeta_1$ (light-blue dotted curve), third-order approximation to $\sigma_L$ around $\zeta_1$ (dark-blue dotted curve). (a) A $\cn$ solution with piercing, $(k,b)=(0.65,0.4225);$ (b) $\cn$ solution without piercing, $(k,b)=(0.95,0.9025).$}
\label{grp-approxs}
\end{figure}

First- and third-order approximations to (\ref{intcond4}) around $\zeta_c$ are shown in Figure \ref{grp-approxs} for the two types of $\cn$ solutions. Although the expansion is only guaranteed to be valid around $\zeta_c,$ the first-order expansion approximates $\sigma_L$ well up to (and past) the point where $\sigma_{max}$ occurs. With this in mind, we present Figure \ref{grp-exactvapprox}, comparing the exact value of the greatest real part of the spectrum and the approximate value. From this figure, generally the approximation performs better in the piercing case ($k<k^*$) than in the non-piercing case ($k>k^*$). Also, with just the first-order approximation we get a maximum relative error of less than 18\%.  For third-order, the maximum relative error is less than 1\%, and for fifth-order this decreases to less than 0.1\%.

Using the approximations to $\sigma_{max}$ we can obtain an approximation to the eigenfunction profile with the largest growth rate. This is achieved by substituting $\zeta_{max}$ into (\ref{eigenfunctions}) using (\ref{varphidef}) and (\ref{gammacon}). The approximation for $\zeta_{max}$ does not exactly satisfy (\ref{intcond}) and in order to find a bounded eigenfunction we subtract the left-hand side of (\ref{intcond}) from the exponent in (\ref{gammacon}). Indeed, with $\zeta_{max}$ in Figure~\ref{grp-approxs} the left-hand side of (\ref{intcond}) is small in magnitude. For example, when $k=0.65$, the left-hand side of (\ref{intcond}) is $0.0014$ for the first-order approximation and $0.00034$ for the third-order approximation. These values should be compared with $0.22$ when $\zeta$ is chosen to correspond to a point in the middle of the figure 8.

In addition to expanding around $\zeta_c$, we can also expand (\ref{intcond4}) around $\zeta = \zeta_t$, corresponding to the top of the figure 8 or triple-figure 8. Note that we cannot do so if we are in the butterfly region or in the $\cn$ region without piercing, thus we require
\beq \label{figure8-cond} b+1-k^2-\frac{2 E(k)}{K(k)}<0. \eeq
Since we are expanding around a point where the expression inside the real part of (\ref{intcond4}) is analytic, we can use a Taylor series instead of a Puiseux series which vastly simplifies the analysis.

\begin{figure}
\centering
\begin{tabular}{cc}
  \includegraphics[width=75mm]{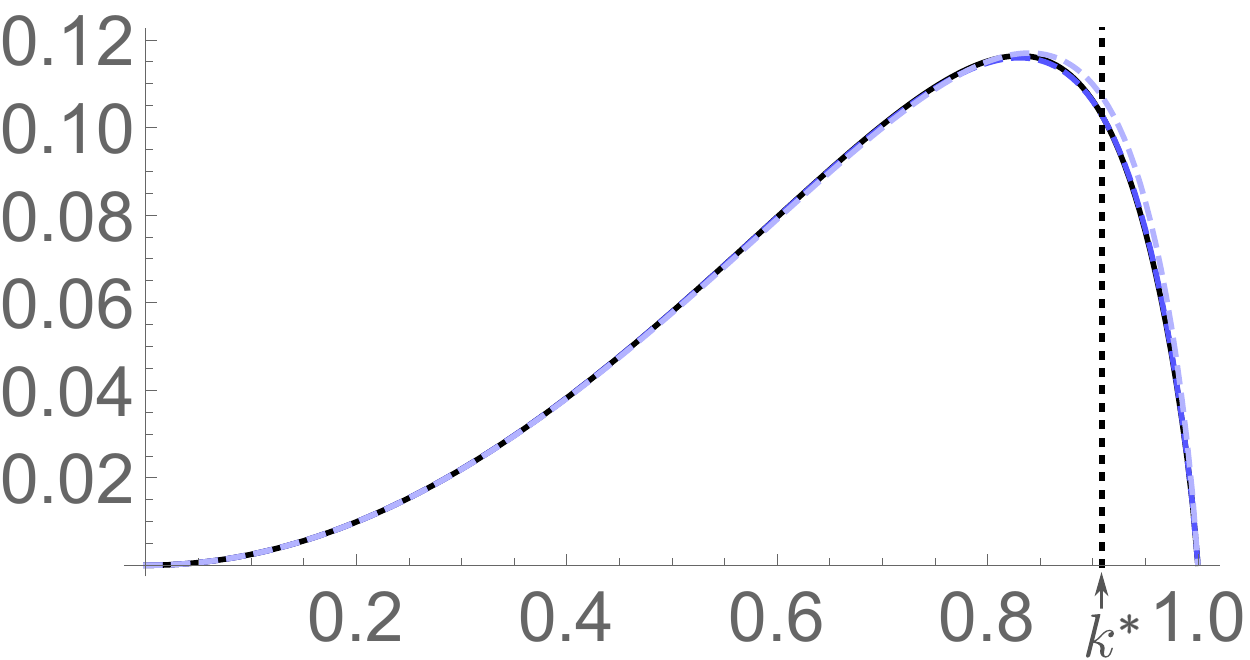} & \includegraphics[width=75mm]{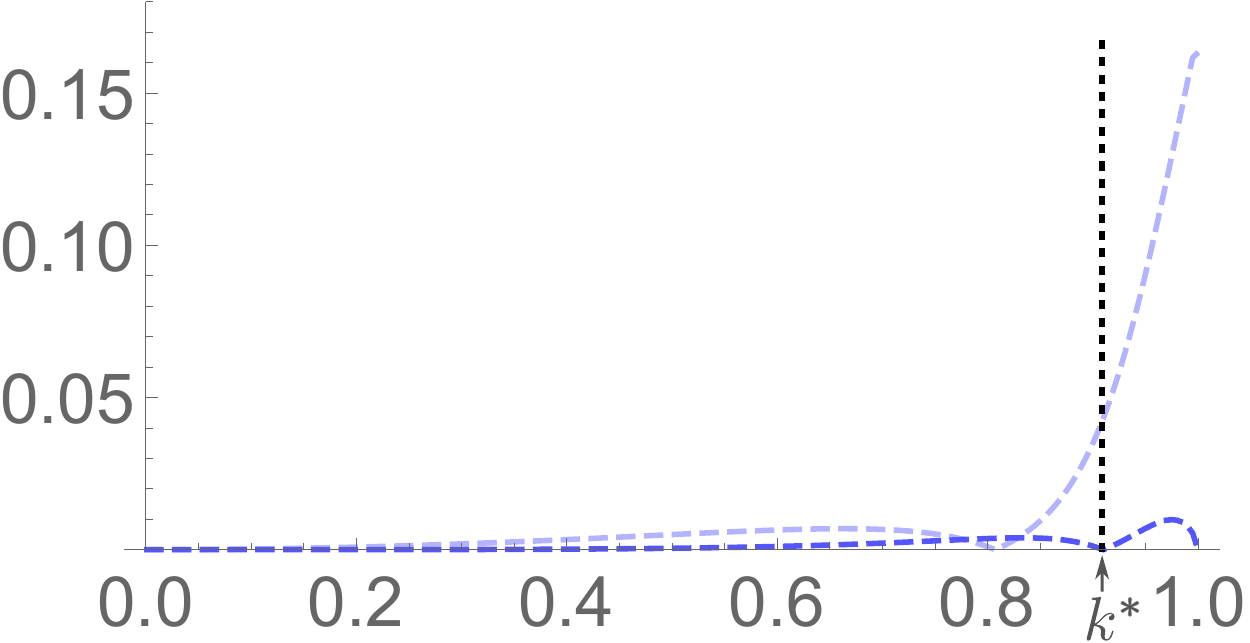} \\
(a) & (b)
\end{tabular}
\caption{(a) Comparison of the exact value for the greatest real part of $\sigma_{\mathcal{L}}$ for $\cn$ solutions (black solid curve) with a first-order approximation (dark blue dotted curve) and a third-order approximation (light blue dotted curve). (b) The relative error of the approximations: $|($approximation-exact)/exact$|$.  }
\label{grp-exactvapprox}
\end{figure}

{\it Procedure for finding an approximation to $\zeta$ satisfying (\ref{intcond4}) around $\zeta_t$:}
\begin{enumerate}
\item Expand the expression inside the real part of (\ref{intcond4}) around $\zeta_t$ in a Taylor series to give
\beq  \textrm{Re}\left[ (a_1+b_1 i)(\zeta- \zeta_t)+(a_2+b_2 i)(\zeta- \zeta_t)^2+(a_3+b_3 i)(\zeta- \zeta_t)^3+ \ldots\right] = 0, \eeq
where $a_i,b_i\in\mathbb{R}$ are the real and imaginary parts of the coefficients of the terms in the Taylor series. In fact, all $a_i$'s are identically zero, and $b_1=0$ so that
\beq \label{intcond-approx2} \textrm{Re}\left[b_2 i(\zeta- \zeta_t)^2+b_3 i(\zeta- \zeta_t)^3+ \ldots\right] = 0.\eeq
\item Let
\beq \label{delta-eqn2} \delta = \delta_r+i \delta_i = \zeta-\zeta_t,\eeq
for $\delta_r,\delta_i\in \mathbb{R}$. Then (\ref{intcond-approx2}) becomes
\beq \label{intcond-delta2} \textrm{Re}\left[b_2 i\delta^2+b_3 i\delta^3+ O(\delta^4)\right] = 0. \eeq
\item Near $\zeta=\zeta_c,$ $\delta$ is small. Let $\delta = \delta_r(\delta_i) + \delta_i,$ with
\beq \label{delta-ri2} \delta_r(\delta_i) = \delta_1 \delta_i+ \delta_2 \delta_i^2+\delta_3 \delta_i^3+O(\delta_i^4).\eeq
\item Substituting (\ref{delta-ri2}) into (\ref{intcond-delta2}) and simplifying the expression on the left-hand side, we equate powers of $\delta_i$ to solve for $\delta_1,\delta_2,\delta_3,\ldots$. We find that $\delta_i=0$ for $i$ odd and
\begin{align}
\delta_2 = & \frac{b_3}{2 b_2},\\
\delta_4 = & \frac{-3b_3^3+8b_2 b_3 b_4-4 b_2^2 b_5}{8 b_2^3}, \\
\delta_6 = & \frac{9 b_3^5-40 b_2 b_4 b_3^3+32 b_2^2 b_5 b_3^2+32 b_2^2 b_4^2 b_3-24 b_2^3 b_6 b_3-16 b_2^3 b_4 b_5+8 b_2r^4 b_7}{16 b_2^5},\\\nonumber
\cdots &
\end{align}
\item Solving (\ref{delta-eqn2}) for $\zeta$ we obtain an approximation for $\zeta$ as a function of $\delta_i$ in terms of its real and imaginary parts:
\beq \label{zeta-expansion2} \zeta = \left( \delta_r(\delta_i)+\zeta_t\right) +i \delta_i.\eeq
As before, call (\ref{zeta-expansion2}) an $n$th-order expansion where $n$ is the largest power of $\zeta_i$ from (\ref{delta-ri2}) included. For instance, a fourth-order approximation for $\zeta$ is
\beq \label{zeta-expansion4} \zeta = \left(\delta_2 \delta_i^2+\delta_4 \delta_i^4+\zeta_t\right) + i \delta_i.\eeq
\end{enumerate}

\begin{figure}
\centering
\begin{tabular}{cccc}
  \includegraphics[width=36mm]{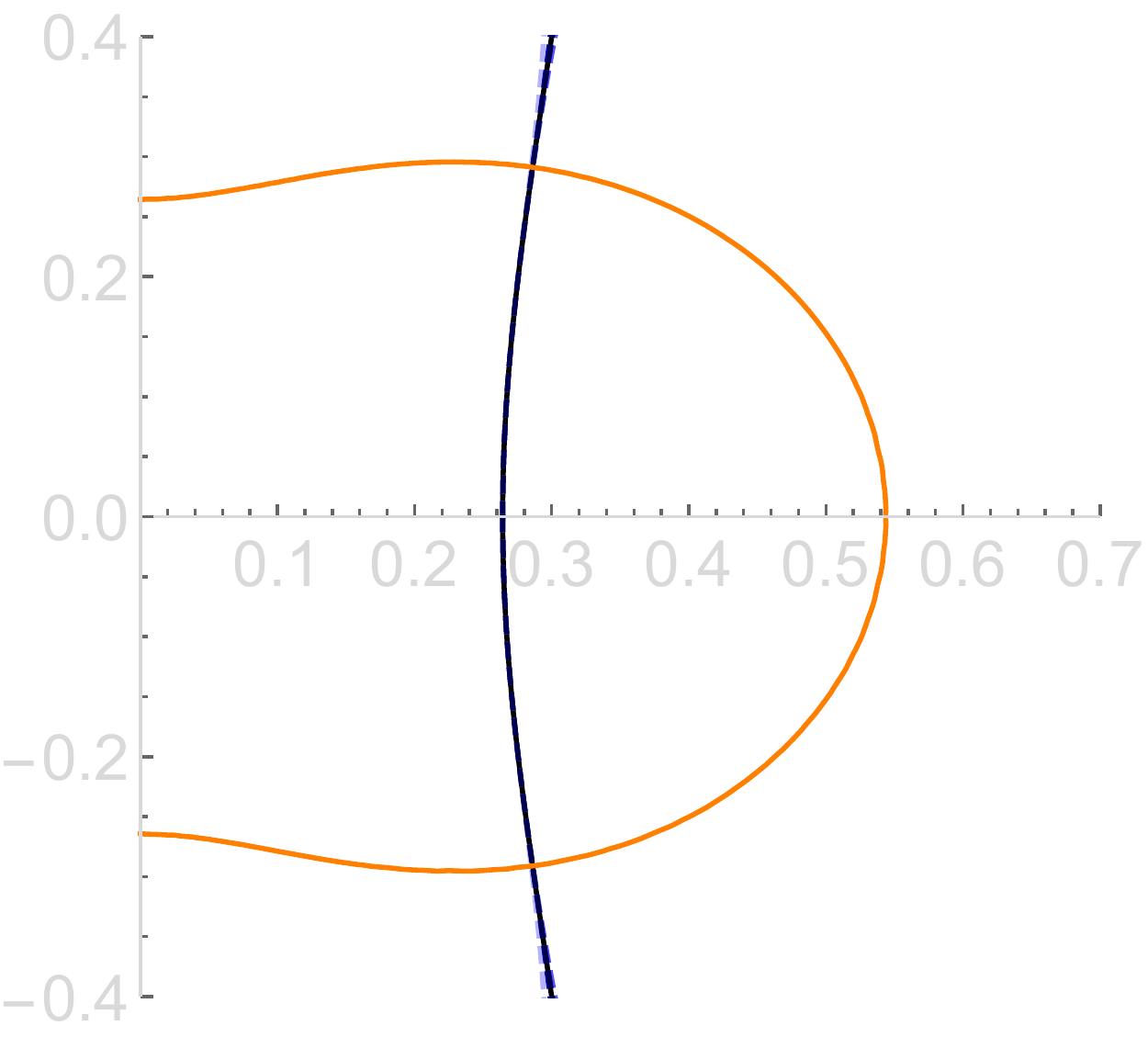} & \includegraphics[width=36mm]{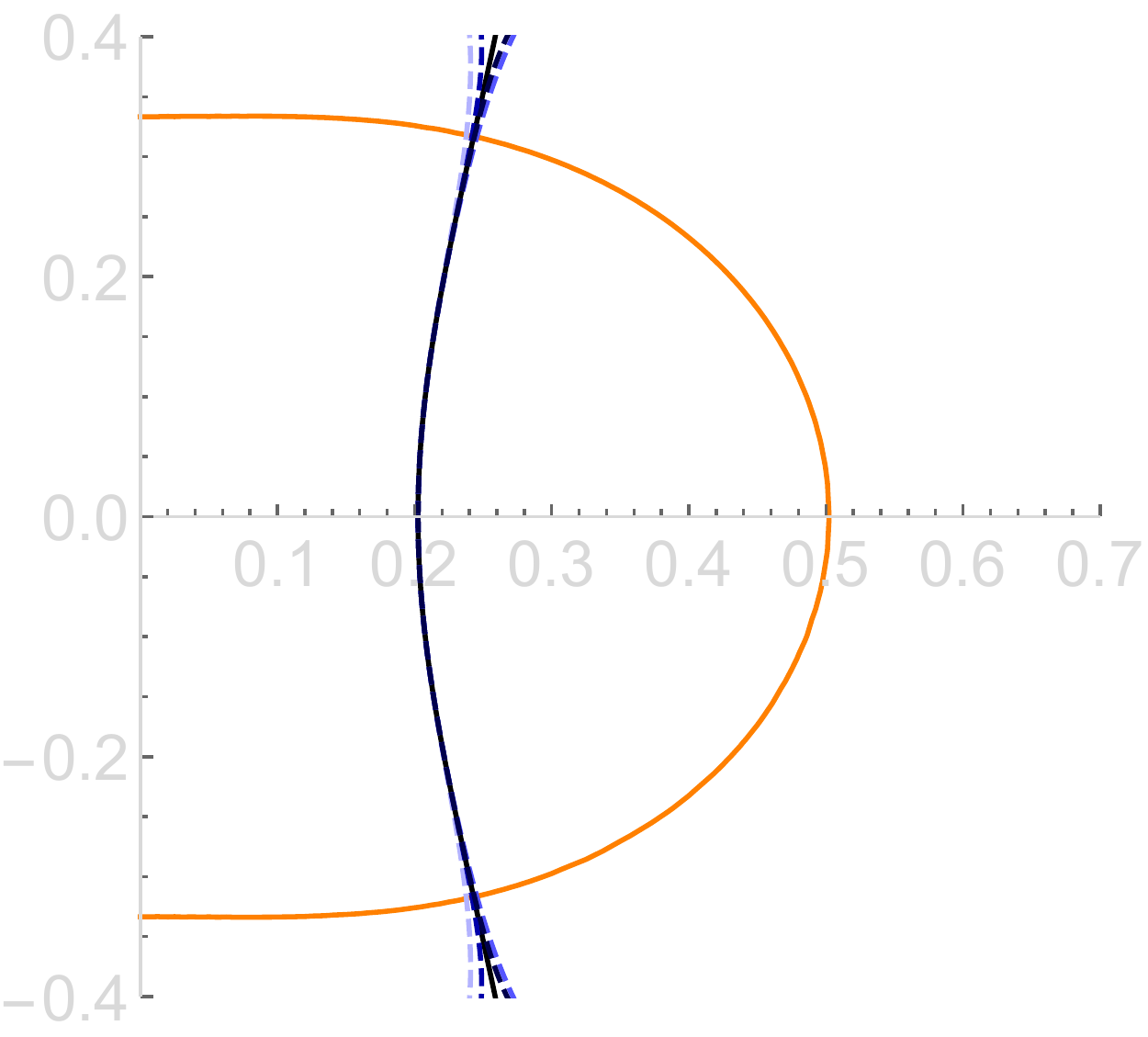} & \includegraphics[width=36mm]{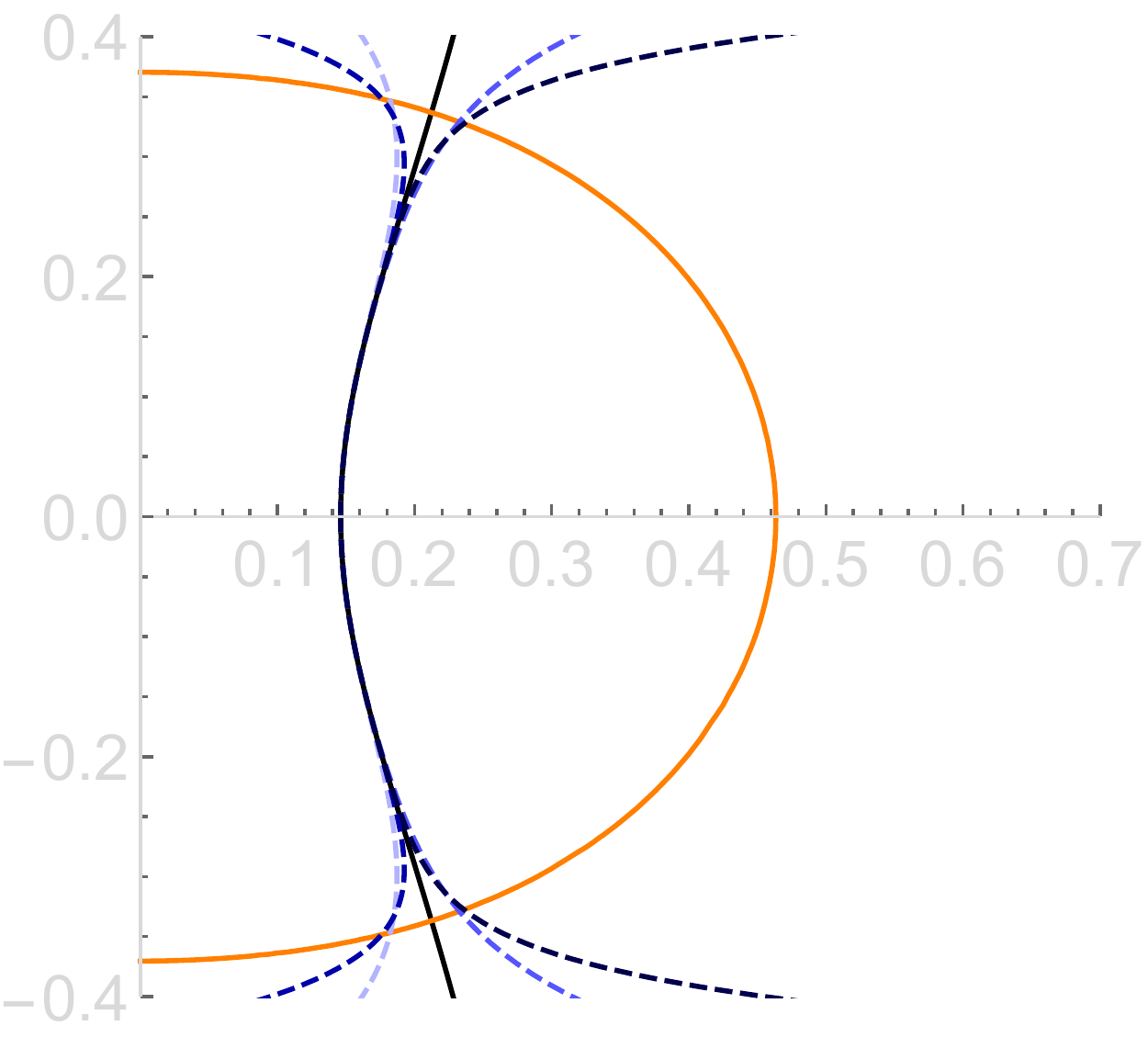} & \includegraphics[width=36mm]{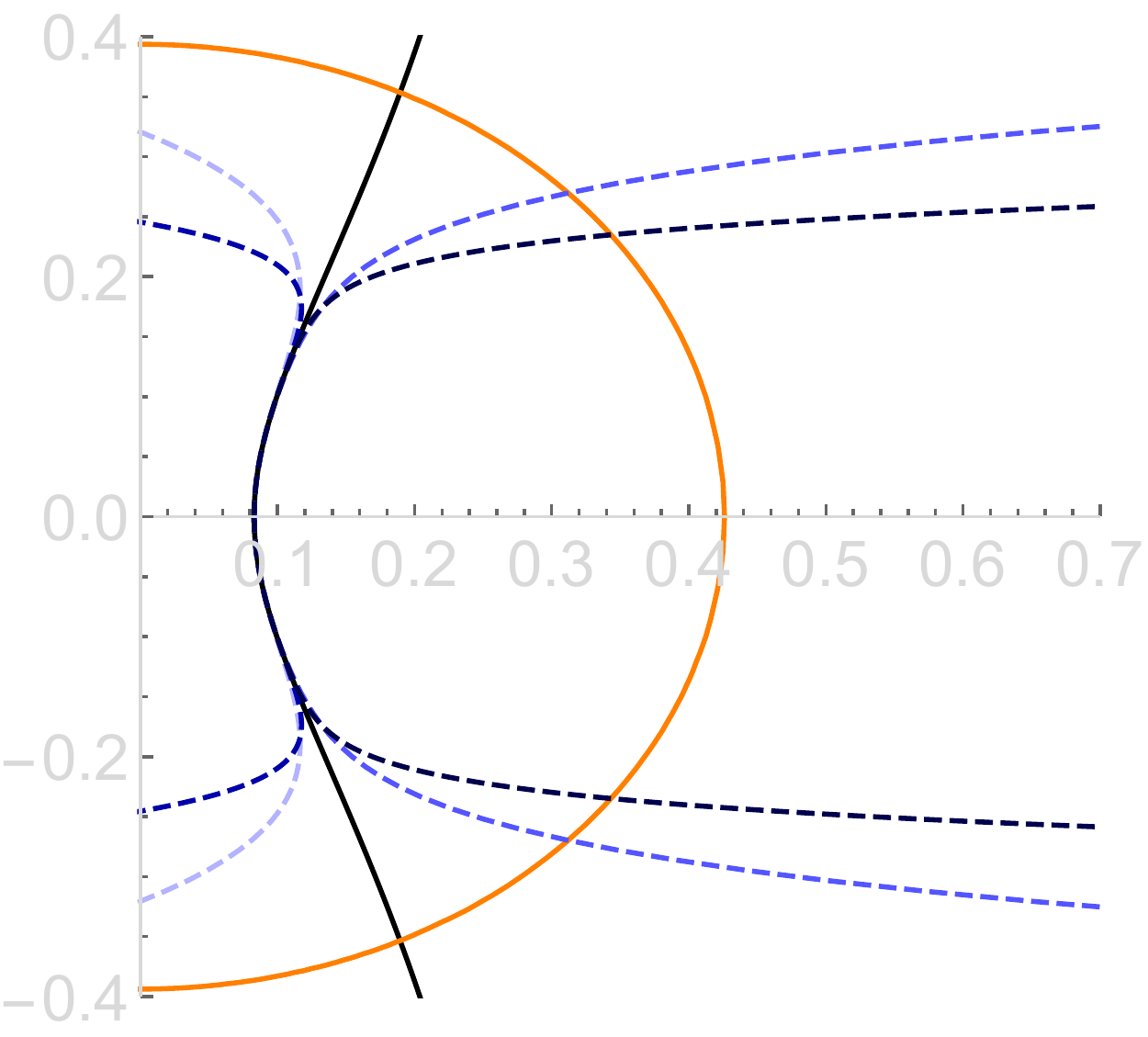} \\
(a) & (b) & (c) & (d)
\end{tabular}
\caption{Approximating $\sigma_L$ for $\cn$ solutions around the top of the figure 8. Shown are $\sigma_L$ (black solid curve), the curve corresponding to the greatest real part of $\sigma_{\mathcal{L}}$ (orange solid curve), the first-order approximation to $\sigma_L$ around $\zeta_1$ (lightest-blue dotted curve), third-order approximation to $\sigma_L$ around $\zeta_1$ (light-blue dotted curve, fifth-order approximation to $\sigma_L$ around $\zeta_1$ (dark-blue dotted curve, seventh-order approximation to $\sigma_L$ around $\zeta_1$ (darkest-blue dotted curve). (a) A $\cn$ solution, $(k,b)=(0.8,0.64);$ (b) $\cn$ solution, $(k,b)=(0.85,0.7225);$ (c) $\cn$ solution, $(k,b)=(0.88,0.7744);$ (d) $\cn$ solution, $(k,b)=(0.9,0.81).$}
\label{grp-approxs2}
\end{figure}

The fourth-, sixth-, eighth-, and tenth-order approximations to (\ref{intcond4}) are shown in Figure \ref{grp-approxs2} for piercing $\cn$ solutions as $k$ approaches $k^*$. We see that these approximations quickly diverge from the $\sigma_L$ spectrum as $k$ approaches $k^*$. The results here are shown for $\cn$ solutions but hold in the nontrivial-phase case as well. For small values of $k$ and $b$ satisfying (\ref{figure8-cond}) we are able to approximate $\sigma_{max}$ well using this Taylor series approach, but as the left-hand side of (\ref{figure8-cond}) approaches $0$ this approximation fails.
In general, the Puiseux expansions around $\zeta_c$ serve as more robust approximations than the Taylor expansions around $\zeta_t$.

\section{Conclusion}

In this paper, we have taken the next step in an ongoing research
program of analyzing the stability of periodic solutions of integrable
equations. Our methods rely on the squared eigenfunction connection
\cite{AKNS} and the existence of an infinite sequence of conserved
quantities, as described below. Thus far, the following results have
been obtained:

\begin{itemize}

\item {\bf The KdV equation.} In [5], the squared eigenfunction
connection was used to establish the spectral stability of the
periodic traveling waves of the KdV equation with respect to
perturbations that are bounded on the whole line (periodic,
quasi-periodic, or linear superpositions of such). This result was
built on in \cite{DK} to establish the orbital stability of
these solutions with respect to subharmonic perturbations of any
period, using an extra conserved quantity as an appropriate Lyapunov
function. This method, employing all conserved quantities, was
extended to establish the orbital stability of the periodic finite-gap
solutions of the equation in \cite{DN}, again with respect to
subharmonic perturbations.

\item {\bf The defocusing mKdV equation.} In \cite{DN}, the method of
\cite{BD} was adapted to the defocusing modified KdV equation to prove the
spectral stability of the periodic traveling waves with respect to
bounded perturbations.

\item {\bf The defocusing NLS equation.} In \cite{BDN}, the
squared eigenfunction connection was employed to show the spectral
stability of the stationary solutions of the defocusing NLS equation.
Orbital stability with respect to subharmonic perturbations is also
demonstrated in \cite{BDN}, again requires the use of an additional conserved
quantity.

\item {\bf The focusing NLS equation.} In this paper, the method of
\cite{BD} and \cite{BDN} is used to examine the stability spectrum of the
stationary solutions of the focusing NLS equation. Because the
underlying Lax pair is not self adjoint, the application of the method
does not simplify as it does for the above equations. Unbridled use of
elliptic function identities allows for the explicit determination of
the spectrum, demonstrating spectral instability for all stationary
(non-soliton) solutions. We demonstrate that the parameter space for
the stationary solution separates in different regions where the
topology of the spectrum is different. An additional subdivision of
this parameter space is found when considering the stability of the
solutions with respect to subharmonic perturbations of a specific
period, leading to the conclusion of spectral stability of some
solutions with respect to some smaller classes of physically relevant
perturbations.

\end{itemize}

Many directions for future research remain. We are currently applying
the same methods to the Sine-Gordon equation \cite{DS},
recovering and extending recent results by Jones, Marangell, Miller
and Plaza \cite{JMMP}. Building on the results found in this manuscript, we
are extending the spectral stability results of Section~\ref{subharmonic} to
orbital stability \cite{DSU}.

\section{Acknowledgments}
This work was supported by the National Science Foundation through grant NSF-DMS-100801 (BD). Benjamin L. Segal acknowledges funding from a Department of Applied Mathematics Boeing fellowship and the Achievement Rewards for College Scientists (ARCS) fellowship. Any opinions, findings, and conclusions or recommendations expressed in this material are those of the authors and do not necessarily reflect the views of the funding sources.

\bibliography{refs}
\bibliographystyle{acm}

\end{document}